\definecolor{darkblue}{RGB}{50,10,180}
\definecolor{orangef}{RGB}{210,100,20}
\newcommand{\id}{\mathds{1}}
\newtheorem{mainTheorem}{Theorem}
\newtheorem{Lemma}{Lemma}[section]
\newtheorem{mainLemma}{Lemma}
\newtheorem{Corollary}{Corollary}[section]
\newtheorem{Proposition}{Proposition}[section]
\newtheorem{mainProposition}{Proposition}
\theoremstyle{definition}
\newtheorem{Definition}{Definition}[section]
\theoremstyle{remark}
\newtheorem*{Remark}{Remark}
\theoremstyle{plain}
\newcommand{\s}{\mathsf{s}}
\renewcommand{\t}{\mathsf{t}}
\renewcommand{\a}{\mathsf{a}}
\renewcommand{\b}{\mathsf{b}}
\newcommand{\cor}[1]{\langle #1 \rangle}
\begin{document}

\title{Quantum statistics in the minimal scenario}

\author{Victor Barizien}
\affiliation{Université Paris Saclay, CEA, CNRS, Institut de physique théorique, 91191 Gif-sur-Yvette, France}
\author{Jean-Daniel Bancal}
\affiliation{Université Paris Saclay, CEA, CNRS, Institut de physique théorique, 91191 Gif-sur-Yvette, France}

\date{\today}

\begin{abstract}
In any given experimental scenario, the rules of quantum theory provide statistical distributions that the observed outcomes are expected to follow. The set formed by all these distributions contains the imprint of quantum theory, capturing some of its core properties. So far, only partial descriptions have been known for this set, even in the simplest scenarios. Here, we obtain the analytical description of a complete set of quantum statistics in terms of extremal points. This is made possible by finding all bipartite quantum states and pairs of binary measurements which can be self-tested, i.e.~identified from statistics only. Our description provides a direct insight into the properties and limitations of quantum theory. These are not expressed in terms of Hilbert spaces, but rather directly in terms of measurement observation statistics.
\end{abstract}

\maketitle


Quantum theory is surprising in many regards, one of which being its intrinsic inability 
to predict the exact results of experiments, such as the outcome that is going to be observed when a measurement is performed on a physical system. Indeed, quantum physics only foretells the statistical distribution of these outcomes and is, as such, a fundamentally probabilistic theory. Whereas one might expect this to be a limitation of the theory, it turns out that the probabilistic predictions of quantum theory often exceed classical and deterministic ones, leading to new possibilities. The puzzle that a probabilistic theory could somehow be more powerful than a deterministic one has stimulated deep questions~\cite{Einstein35}. Today, the unpredictability of quantum theory is considered a resource~\cite{Pironio10a} and in light of this new perspective, a natural question arises: what are the fundamental limits of quantum theory's probabilistic predictions?

Addressing this question requires distinguishing probabilistic predictions that can admit a quantum explanation from those which don't. Interestingly, because quantum statistics can violate Bell inequalities~\cite{Brunner14}, determining whether a given set of probability distributions is compatible with quantum theory turns out to be a highly nontrivial task. In fact, this problem amounts to inverting Born's rule~\cite{Born26}, one of quantum theory's cornerstones. This rule expresses in simple terms the probability assigned by quantum theory to the potential outcomes of a measurement as a function of a system's \emph{realization} -- including its state and measurement operators. While the statistics produced by a quantum realization is unique, a set of quantum statistics can admit many quantum realizations simultaneously, rendering their characterization particularly challenging.

Recently, several efforts have highlighted the wide-ranging consequences of determining the \emph{quantum set} -- the set of quantum statistical predictions. Beyond allowing to test whether experimental observations admit a quantum description or not, it was suggested that by touching upon the limits of quantum theory itself this description would give access to fundamental principles satisfied by the theory. Similarly to the constancy of the speed of light in relativity, such principles could open the way for a fully principle-based formulation of quantum theory~\cite{Scarani12}. Several information principle candidates have been proposed, such as no-signaling, information causality, macroscopic locality or local orthogonality~\cite{Popescu92a,Pawlowski09,Navascues10,Fritz13}. So far, none of these have succeeded in reproducing quantum predictions. Thus, although being satisfied by the theory, the principles identified until now fall short in providing a physical explanation for why quantum correlations are limited the way they are and the search for such a quantum principle remains open.

Knowledge of the quantum set also has direct implications for quantum applications. Indeed, by inverting Born's rule without any a priori on the underlying Hilbert space or on the description of the quantum system at hand, this approach provides a way of accomplishing tasks that is fundamentally trustful, in the sense that it is independent of the precise modeling of the physical devices at hand~\cite{Acin07}. This device-independent approach has been shown to present an interest in entanglement detection and quantification~\cite{Barreiro13,Moroder13} and is now a standard framework in numerous tasks. The security of adversarial protocols particularly benefits from assessments that are insensitive to the implementations details~\cite{Nadlinger22,Zhang22a,Liu22a}. Since device-independent information processing analyses rely on observed statistics, they depend directly on the characterization of the quantum set.

More generally, it has been shown that points in the quantum set are pertinent to a wide range of topics, including the study of correlations in many-body systems~\cite{Tura14a} or of quantum computing advantage~\cite{Bravyi18}. Determining the limits of this set is thus crucial to identify new possibilities and limitations in quantum information science.

Considering this problem already in the 1980s, Tsirelson was the first to obtain bounds on the quantum set~\cite{Cirelson80,Tsirelson93}. Several results followed, notably from Landau and Masanes~\cite{Landau88,Masanes03}, until a major progress was achieved by Navascués, Pironio and Acín (NPA) in the form of a hierarchy of semidefinite programming~\cite{Navascues07,Navascues08}. This hierarchy is now a central tool of quantum information science~\cite{Skrzypczyk23,Mironowicz24,Tavakoli24}, with an impact reaching optimization theory~\cite{Bhardwaj21}. Concretely, the hierarchy defines a family of problems of increasing complexity which approximate better and better the quantum set from the outside and guarantees convergence as the level of the hierarchy goes to infinity. At a fixed hierarchy level, this technique allows deriving necessary conditions for the quantum set~\cite{deVicente15}, thus excluding that some behaviors admit a quantum representation. However, since the NPA technique can generally not guarantee that specific statistics are quantum, its implications remain elusive on the boundary of the quantum set.

Recently, fresh insight was gained on the quantum set by analytical studies which showed that it admits flat nonlocal boundaries~\cite{Goh18, Chen23}, as well as pointy nonlocal extremal points~\cite{Barizien23,Barizien24}. New curved regions of the set's boundaries were also identified analytically~\cite{Liu24} and several conjectures were formulated on the boundary of the quantum set~\cite{Ishizaka18,Mikosnuszkiewicz23}.

In this work, we consider the question of identifying the limits of the quantum set from the perspective of self-testing~\cite{Mayers04}. Namely, a set of statistics, or \emph{behavior}, is said to self-test a quantum realization with state $\rho$ and measurements $\{M\}$ iff it can only be obtained through Born's rule by quantum realizations $(\rho',\{M'\})$ which are related to $(\rho,\{M\})$ by local isometries~\cite{Supic20}. Self-testing is a powerful tool to analyze quantum behaviors: it has played an important role in solving Connes' embedding problem, a major conjecture of operator algebra~\cite{Ji21}, and is a central part of several quantum protocols such as the certification of quantum devices~\cite{Magniez06,Sekatski18} or delegated quantum computing~\cite{Reichardt13,Gheorghiu15,Hajdusek15,McKague16}.

So far, numerous families of states have been shown to be self-testable through some of their behaviors~\cite{Coladangelo17,Supic18}. However, only the statistics obtained when measuring a maximally entangled state in the minimal scenario, where two parties each equipped with two binary measurements, have been fully characterized by self-testing~\cite{Wang16}. These behaviors correspond to the boundary of the quantum set in this scenario with vanishing marginals, which was described earlier by Tsirelson, Landau and Masanes, see~\cite{Le23}.

Here, we obtain all the self-tests in this scenario. This reveals previously unknown boundaries of the quantum set. In turn, it allows us to identify all extremal points and their corresponding quantum realizations, thus providing a complete description of the quantum set in the minimal scenario.

\begin{figure}
\includegraphics[width=0.45\textwidth]{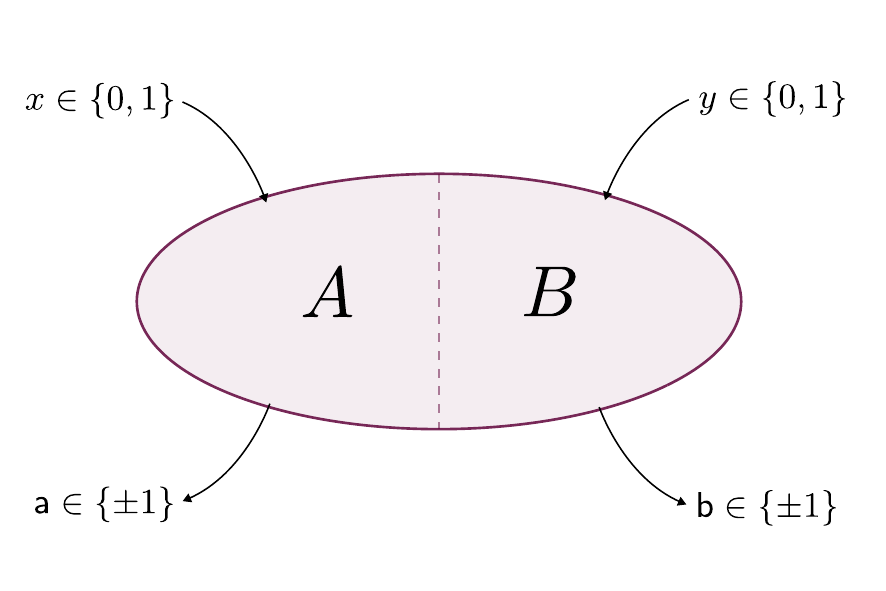}
\caption{Quantum measurements on a bipartite system produces outcomes which follow a set of probability distributions that depend on the quantum state $\rho_{AB}$ and on the measurement operators $A_x, B_y$, as dictated by Born's rule. Here, we consider the set $\mathcal{Q}$ of all probability distributions which can be obtained according to quantum theory in the celebrated Clauser-Horne-Shimony-Holt (CHSH) scenario, where both parties can choose between two possible measurement settings, i.e.~$x,y=0,1$, and where the outcomes of the measurements $\a,\b=\pm1$ are binary. This includes all measurements on every possible quantum states. By identifying all the extremal quantum behaviors, we obtain a description of quantum theory in this setting. }
\label{fig:scenario}
\end{figure}

We derive our results in the bipartite setting depicted in \cref{fig:scenario}, in which two users, Alice and Bob, each have access to a shared quantum state $\rho_{AB}\in L(\mathcal{H}_A\otimes\mathcal{H}_B)$, where $\mathcal{H}_{A/B}$ denotes their respective Hilbert space\footnote{Here, $L(\mathcal{H})$ refers to the space of linear applications on the Hilbert space $\mathcal{H}$.}. In the scenario we consider, each party can perform one of two possible measurements, denoted by $x,y=0,1$, each with two possible outcomes, denoted $\a,\b=\pm 1$, and described by the POVM elements $P_{\a,x}^A$, $P_{\b,y}^B$ respectively. To each POVM, one can associate hermitian operators, $A_x\in L(\mathcal{H}_A)$ and $B_y\in L(\mathcal{H}_B)$, with eigenvalues in $[-1,1]$, defined as $A_x = P_{1,x}^A-P_{-1,x}^A$ and $B_y = P_{1,y}^B-P_{-1,y}^B$. In this setting, a behavior is fully determined according to Born's rule by the eight real parameters
\begin{align}
\cor{A_x}&=\tr(\rho_{AB} A_x\otimes \ \id\ ),\\
\cor{B_y}&=\tr(\rho_{AB}\ \id\ \otimes B_y),\\
\cor{A_x B_y}&=\tr(\rho_{AB} A_x\otimes B_y),
\end{align}
which encode a vector, or \emph{point}, $\Vec P \in \mathbb{R}^8$. The set of all such vectors $\Vec P$ form the quantum set $\mathcal{Q}$ of interest. It should be stressed that all states and measurements in arbitrary Hilbert spaces $\mathcal{H}_A$ and $\mathcal{H}_B$, including infinite dimensional Hilbert spaces, can be considered here. Hence, this set encodes all statistics that can be achieved in this setting within the framework of quantum theory.

The fact that the Hilbert space dimension is unbounded allows simplifying the analysis of $\mathcal{Q}$ in two ways. First, it ensures that the quantum set is convex and therefore that it is fully described as the convex hull of its extremal points: $\mathcal{Q}= \text{Conv}(\text{Ext}(\mathcal{Q}))$ \cite{Rockafellar70}. We can thus focus on the description of these extremal points, which are known to be infinitely many. Second, it ensures by Naimark's dilation theorem that points of $\mathcal{Q}$ can be realized in terms of projective measurements acting on pure states. The description of extremal points can then be further simplified by Jordan's lemma, which guarantees that they can be realized on Hilbert spaces of dimension two whenever the parties hold two binary measurements, i.e.~$\text{Ext}(\mathcal{Q})\subset \mathcal{Q}_2$. A recent result by Mikos-Nuszkiewicz and Kaniewski~\cite{Mikosnuszkiewicz23} further guarantees that such realizations can be parametrized in the simple real form 
\begin{subequations}\label{eq:realization}
\begin{align}
\ket{\phi_\theta}&=\cos{\theta}\ket{00}+\sin\theta\ket{11},\\
A_x&=\cos a_x \sigma_z + \sin a_x \sigma_x,\\
B_y&=\cos b_y \sigma_z + \sin b_y \sigma_x.
\end{align}
\end{subequations}
Considering the symmetries of the set $\mathcal{Q}$ allows us to finally restrict our attention to parameters in the range
\begin{equation}\label{eq:range}
\theta \in [0,\pi),\ 0 \leq a_0 \leq b_0 \leq b_1 < \pi,\ a_0 \leq a_1 <\pi,
\end{equation}
see~\cref{sec:paramreduc} for more details on these restriction steps.

\begin{figure}
\includegraphics[width=0.45\textwidth]{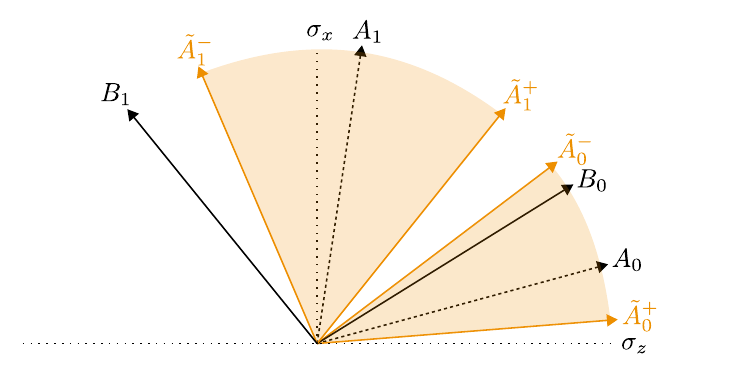}
\caption{Action of the non-linear steering map $T_\theta$ on Alice's measurements $A_0$ and $A_1$ when $\theta\in(0,\pi/4)$. Bob's measurement direction $B_0$ and $B_1$ are also represented in the same plane. The map, defined as $T_{\theta}:\ket{u}\mapsto \frac{\sqrt{2}(\braket{\phi_\theta}{u})\ket{\phi^+}}{||\braket{\phi_\theta}{u}||_2}$, describes the steering of vectors in $\mathbb{C}^2$ through the state $\ket{\phi_\theta}$. By construction, this transformation preserves the normalization so that $T_{\theta}(\ket{u}) T_{\theta}(\ket{u})^\dag$ is a unit projector for every normalized $\ket{u}$. However, the map does not preserve the scalar product: in general, $T_{\theta}(\ket{u})^\dag T_{\theta}(\ket{u'})\neq\braket{u}{u'}$. For this reason, we associate to every projective measurement of Alice $A_x=\ketbra{u_{+1,x}}{u_{+1,x}} - \ketbra{u_{-1,x}}{u_{-1,x}}$ two image measurements through this map, one for each projector: $\Tilde A^{\a}_x = \a (2T_\theta(\ket{u_{\a,x}}) T_\theta(\ket{u_{\a,x}})^\dag-\id)$. When the initial measurement $A_x$ has an angle $a_x$ with respect to $\sigma_z$ in the ($\sigma_z,\sigma_x$)-plane, the modified measurements still lie on this plane, but with angles $\Tilde{a}^{\a}_x = 2  \operatorname{atan}\left(\tan(\frac{a_x}{2})\tan(\theta)^\a \right)$ with respect to $\sigma_z$, as illustrated above. We say that Alice's and Bob's measurement \textit{fully alternate} if the measurement angles verify $0 \leq [\Tilde{a}_0^\s]_\pi \leq b_0 \leq [\Tilde{a}_1^\t]_\pi \leq b_1 < \pi$ for all $\s,\t \in \{-1,1\}$, where $[\alpha]_\pi :\equiv \alpha[\pi]$. In~\cref{sec:steering}, we show that the expectation value of these modified operators on the maximally entangled state $\ket{\phi^+}$ relate to the initial statistics as $\bra{\phi^+}\Tilde A^\a_x B_y\ket{\phi^+}=\frac{\langle A_x B_y \rangle+\a \langle B_y \rangle}{ 1+\a \langle A_x \rangle}$.}
\label{fig:nonlinearMap}
\end{figure}

Our first result concerns the set $\mathcal{Q}_2$ of statistics that can be obtained by measuring quantum systems of dimension 2. Due to the dimension restriction, this set is non-convex~\cite{Donohue15}, which makes it hard to characterize. Defining the correlators $[\Tilde A_x^\a B_y] = \frac{\langle A_x B_y \rangle+\a \langle B_y \rangle}{ 1+\a \langle A_x \rangle}$  for all $\a\in\{\pm 1\}$, $x, y\in\{0,1\}$, we have the following necessary condition for pure realizations in $\mathcal{Q}_2$.

\begin{mainProposition}[Necessary condition for pure projective realizations of local dimension 2] \label{prop:loc2}
Any quantum point obtained from projective measurements on a pure entangled two-qubit state verifies 
\footnotesize
    \begin{equation} \label{eq:ineqmargin}
\begin{array}{r}
-\pi \leq-\operatorname{asin} [\Tilde A_0^\s B_0] +\operatorname{asin} [\Tilde A_1^\t B_0] +\operatorname{asin} [\Tilde A_0^\s B_1]+\operatorname{asin}[\Tilde A_1^\t B_1]\leq \pi \\
-\pi \leq \operatorname{asin} [\Tilde A_0^\s B_0] -\operatorname{asin} [\Tilde A_1^\t B_0] +\operatorname{asin} [\Tilde A_0^\s B_1]+\operatorname{asin}[\Tilde A_1^\t B_1] \leq \pi \\
-\pi \leq \operatorname{asin} [\Tilde A_0^\s B_0] +\operatorname{asin} [\Tilde A_1^\t B_0] -\operatorname{asin} [\Tilde A_0^\s B_1]+\operatorname{asin}[\Tilde A_1^\t B_1]\leq \pi \\
-\pi \leq \operatorname{asin} [\Tilde A_0^\s B_0] +\operatorname{asin} [\Tilde A_1^\t B_0] +\operatorname{asin} [\Tilde A_0^\s B_1]-\operatorname{asin}[\Tilde A_1^\t B_1]\leq \pi
\end{array}
\end{equation}
\normalsize
for all $\s,\t \in \{\pm1\}$.
\end{mainProposition}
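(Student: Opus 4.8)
The plan is to reduce the statement to a purely geometric inequality on the maximally entangled state, where the relevant correlators become cosines of angle differences, and then to verify the four arcsine inequalities of \eqref{eq:ineqmargin} by elementary means. First I would invoke the parametrization \eqref{eq:realization} together with the range reduction \eqref{eq:range}, so that it suffices to treat a point produced by the real two-qubit realization with $0\le a_0\le b_0\le b_1<\pi$, $a_0\le a_1<\pi$ and $\theta\in[0,\pi)$. For such a realization I would use the steering identity established in \cref{sec:steering} and recalled in \cref{fig:nonlinearMap}: the modified correlators are genuine correlators of the maximally entangled state, $[\tilde A_x^\a B_y]=\bra{\phi^+}\tilde A_x^\a\otimes B_y\ket{\phi^+}$, where $\tilde A_x^\a$ and $B_y$ are the $\pm1$-valued observables lying in the $(\sigma_z,\sigma_x)$-plane at angles $\tilde a_x^\a=2\operatorname{atan}(\tan(a_x/2)\tan(\theta)^\a)$ and $b_y$ respectively. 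This converts a question about an arbitrary entangled pure state into one about $\ket{\phi^+}$ alone.

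The key observation is then that, on $\ket{\phi^+}$, the correlator of two equatorial observables at angles $\mu$ and $\nu$ equals $\cos(\mu-\nu)$, since $\bra{\phi^+}\sigma_z\otimes\sigma_z\ket{\phi^+}=\bra{\phi^+}\sigma_x\otimes\sigma_x\ket{\phi^+}=1$ while the cross terms vanish. Hence $[\tilde A_x^\a B_y]=\cos(\delta_{xy})$ with $\delta_{xy}=\tilde a_x^\a-b_y$. Because the steering formula maps $[0,\pi)$ into $[0,\pi)$ (up to the sign symmetries of \cref{sec:paramreduc}), the reduced ranges confine $\tilde a_x^\a,\,b_y$ to $[0,\pi)$, so every difference satisfies $\delta_{xy}\in(-\pi,\pi)$, the regime in which the arcsine linearizes: $\operatorname{asin}[\tilde A_x^\a B_y]=\operatorname{asin}(\cos\delta_{xy})=\tfrac{\pi}{2}-|\delta_{xy}|$. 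Crucially, the four differences depend only on the two Alice angles $\tilde a_0^\s,\tilde a_1^\t$ and the two Bob angles $b_0,b_1$, so they obey the parallelogram relation $\delta_{00}+\delta_{11}=\delta_{01}+\delta_{10}$.

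Substituting $\operatorname{asin}[\tilde A_x^\a B_y]=\tfrac{\pi}{2}-|\delta_{xy}|$ into the four combinations of \eqref{eq:ineqmargin} then splits each two-sided inequality into two elementary claims. On one side the bound reads $|\delta_{ij}|\le|\delta_{i'j}|+|\delta_{ij'}|+|\delta_{i'j'}|$, which is immediate from the parallelogram relation and the triangle inequality; this handles one of the two $\pm\pi$ bounds on each line. On the other side it reads $|\delta_{ij}|+|\delta_{i'j}|+|\delta_{ij'}|-|\delta_{i'j'}|\le2\pi$, which by the symmetry among the four lines I would only need to establish once. I would prove it by optimizing over the base angles in $[0,\pi)$: writing the subtracted term as, say, $|\alpha_1-\beta_1|$, the dependence on the Alice angle $\alpha_1$ that it shares with a positive term is monotone, so its maximum sits at an endpoint of $[0,\pi)$; the dependence on the remaining Alice angle is then convex, again pushing the maximum to an endpoint, where a direct evaluation gives exactly $2\pi$.

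The main obstacle, and the only step that is not a one-line manipulation, is controlling the branch of the arcsine: the simplification $\operatorname{asin}(\cos\delta)=\tfrac{\pi}{2}-|\delta|$ is valid only while $\delta\in(-\pi,\pi)$, and the whole argument collapses to the clean triangle-inequality/$2\pi$-bound dichotomy precisely because the range reduction \eqref{eq:range} keeps all modified measurement angles inside $[0,\pi)$. This is what makes the steering-map linearization indispensable: it lets us phrase the problem on $\ket{\phi^+}$, where the correlators are honest cosines of angle differences and the $2\pi$ bound can be read off geometrically rather than having to track arcsine branches on the original entangled state.
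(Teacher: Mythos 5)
Your opening move---using the steering map $T_\theta$ so that the modified correlators $[\Tilde A_x^\a B_y]$ become genuine correlators of a zero-marginal point realized on $\ket{\phi^+}$---is exactly the paper's first step. The gap is in your second step, where you invoke \cref{eq:realization,eq:range} to assume the realization is real (all Bloch vectors in the $(\sigma_z,\sigma_x)$-plane). The paper justifies that parametrization, in \cref{sec:paramreduc} via \cref{prop:realqubitreduction,prop:qubitrealization}, \emph{only for extremal points} of $\mathcal{Q}$: those reductions decompose a point convexly and use extremality to identify each piece with the original point. \cref{prop:loc2}, by contrast, is a necessary condition on \emph{every} point obtained from projective measurements on a pure entangled two-qubit state, extremal or not, and for these the reality restriction loses generality. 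In the Schmidt basis a general realization reads $A_x=\cos a_x\,\sigma_z+\sin a_x(\cos\alpha_x\,\sigma_x+\sin\alpha_x\,\sigma_y)$, and similarly for $B_y$ with phases $\beta_y$, giving $\langle A_xB_y\rangle=\cos a_x\cos b_y+\sin 2\theta\,\sin a_x\sin b_y\cos(\alpha_x+\beta_y)$; the combinations $\alpha_x+\beta_y$ are invariant under the local unitaries preserving $\ket{\phi_\theta}$ and generically cannot be removed. A concrete instance: $A_0=\sigma_z$, $A_1=\sigma_x$, $B_0=\sigma_z$, $B_1=\sigma_y$ on $\ket{\phi^+}$ gives vanishing marginals and correlators $(\langle A_0B_0\rangle,\langle A_0B_1\rangle,\langle A_1B_0\rangle,\langle A_1B_1\rangle)=(1,0,0,0)$, a point covered by the proposition but achievable by no real realization (for real realizations, zero marginals force either maximal entanglement, where the correlators are $\cos(a_x\mp b_y)$ and cannot equal $(1,0,0,0)$, or all measurement angles equal to $\pi/2$, where all four correlators coincide). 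For such points your central identity $[\Tilde A_x^\a B_y]=\cos(\Tilde a_x^\a-b_y)$ fails: the steered correlator is $\cos\Tilde a_x^\a\cos b_y+\sin\Tilde a_x^\a\sin b_y\cos(\alpha_x+\beta_y)$, the cosine of an angle between \emph{non-coplanar} Bloch vectors, so the parallelogram relation and the linearization $\operatorname{asin}(\cos\delta)=\pi/2-|\delta|$ have nothing to act on. Closing this gap means proving the arcsine inequalities for angles among four arbitrary unit vectors in $\mathbb{R}^3$, which is precisely the content of Masanes' theorem (\cref{prop:zeromargarcsin}); the paper's proof simply applies that theorem to the steered zero-marginal point, once for each pair $(\Tilde A_0^\s,\Tilde A_1^\t)$, with no reality assumption anywhere.

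Within the coplanar setting your trigonometric argument is essentially sound: writing $\delta_{xy}=\Tilde a_x-b_y$, each of the eight bounds per $(\s,\t)$ becomes either the triangle inequality $|\delta_{00}|\le|\delta_{01}|+|\delta_{10}|+|\delta_{11}|$ (from $\delta_{00}=\delta_{01}+\delta_{10}-\delta_{11}$) or the bound $|\delta_{01}|+|\delta_{10}|+|\delta_{11}|-|\delta_{00}|\le 2\pi$, which follows by eliminating one Alice angle via the triangle inequality and noting that what remains is twice the spread of three angles lying in $[0,\pi)$. But two of your auxiliary reduction claims need care even in this restricted case: the range \cref{eq:range} is obtained using the exchange of parties, under which the Alice-steered inequalities of \cref{prop:loc2} map to the Bob-steered ones (see \cref{sec:AliceVSBob}), not to themselves, so that symmetry is not available here; and for $\theta\in(\pi/2,\pi)$, which \cref{eq:range} allows, one has $\tan\theta<0$, the modified angles $\Tilde a_x^\a$ fall in $(-\pi,0]$, and the differences $\delta_{xy}$ can leave $(-\pi,\pi)$, breaking the branch control you rely on. Both of these are repairable (drop the party swap, which your argument never actually needs, and reduce $\theta$ to $(0,\pi/2)$ by a further symmetry); the coplanarity restriction is not, since it changes the set of points the proposition is about.
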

The main idea of the proof consists in considering the quantum state $\ket{\phi_\theta}$ as a non-linear map $T_\theta$ that brings an arbitrary pure entangled qubit behavior in relation with 4 distributions issued from the maximally entangled state, see \cref{fig:nonlinearMap}. Constraints on maximally entangled statistics derived by Masanes~\cite{Masanes03} then imply the necessary constraints on the original statistics issued from a non-maximally entangled state that are given here, see~\cref{sec:proofprop1} for more details.

Note that the correlators $[\Tilde A_x^\a B_y]$ are well-defined, except in the case where some of Alice's marginals are equal to~$\pm 1$. This can only happen when $\theta\equiv 0 [\pi/2]$, which gives rise to local points. Note also that the labeling of the subsystems is arbitrary here, so additional necessary conditions can be obtained by exchanging Alice and Bob.

One easily verifies that when the marginal of both parties are zero, the inequalities above reduce to the 8 Masanes inequalities. These inequalities are convex and identify the boundary of the quantum set in this restricted CHSH scenario \cite{Le23}. In the general case, however, the inequalities \cref{eq:ineqmargin} do not describe a convex set. Therefore, they cannot describe the set $\mathcal{Q}$. Indeed, measurements on a high-dimensional state or on a mixed two-qubit state can result in quantum points that do not satisfy the above inequalities.

It has been shown that the saturation of the Masanes inequalities can only be achieved by self-testing behaviors, which identify a singlet state and real unitary measurements \cite{Wang16}. It is thus natural to ask whether the points saturating some of the above conditions could lead to self-testing as well. It  turns out that saturating a single inequality is not sufficient. Similarly, one can find non-extremal points saturating two inequalities. 
However, whenever three of the conditions are met for different values of $(s,t)$, the fourth one is also. Indeed, out of the above 32 conditions, only 24 are linearly independent. In this special case, we show that the resulting statistics self-test a qubit realization.

\begin{mainLemma} \label{thm:self-test} 
Any nonlocal behavior which satisfies
\begin{equation}\label{eq:conditions}
        \scalebox{0.9}{$\operatorname{asin} [\Tilde A_0^\s B_0] +\operatorname{asin} [\Tilde A_1^\t B_0] - \operatorname{asin} [\Tilde A_0^\s B_1]+\operatorname{asin}[\Tilde A_1^\t B_1] =\pi$}
\end{equation}
for all four $(\s,\t) \in \{\pm 1\}^2$, self-tests a quantum realization of local dimension 2. 
\end{mainLemma}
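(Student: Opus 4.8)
The plan is to reduce an arbitrary realization to a convex combination of qubit realizations, turn the four saturation conditions into four instances of the known maximally entangled self-test by means of the steering map of \cref{fig:nonlinearMap}, and finally show that the convex combination collapses so that the qubit rigidity lifts to a genuine self-test.

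First I would apply Jordan's lemma to the observables $A_0',A_1'$ of Alice and to $B_0',B_1'$ of Bob in an arbitrary realization $(\ket\psi,A_x',B_y')$ of the behavior, simultaneously block-diagonalizing each party's pair into $2\times2$ blocks. This expresses the behavior as $\Vec P=\sum_k p_k \Vec P_k$, where each $\Vec P_k$ comes from a pure two-qubit realization of the form \eqref{eq:realization} with its own parameters $(\theta_k,a_{0,k},a_{1,k},b_{0,k},b_{1,k})$, and by \cref{prop:loc2} each block satisfies $I_{\s\t}(\Vec P_k)\le\pi$, where $I_{\s\t}$ denotes the left-hand side of \eqref{eq:conditions} (up to degenerate one-dimensional local blocks, which I would fold into the special cases below).

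The rigidity engine is the steering relation $\bra{\phi^+}\Tilde A_x^\a B_y\ket{\phi^+}=[\Tilde A_x^\a B_y]$: for each sign pair $(\s,\t)$ it displays the four numbers $[\Tilde A_0^\s B_0],[\Tilde A_1^\t B_0],[\Tilde A_0^\s B_1],[\Tilde A_1^\t B_1]$ as the correlators of a zero-marginal maximally entangled behavior $Q_{\s\t}$, for which condition \eqref{eq:conditions} is exactly the saturation of one Masanes inequality. The maximally entangled self-test \cite{Wang16,Masanes03} then certifies that a nonlocal $Q_{\s\t}$ on this boundary identifies the singlet with real rank-one projective measurements, fixing the steered angles $\Tilde a_x^\a$ and Bob's $b_y$. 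Inverting $\Tilde a_x^\a=2\operatorname{atan}(\tan(a_x/2)\tan(\theta)^\a)$ over the two signs $\a=\pm1$ recovers both $a_x$ and $\theta$, and consistency across $x=0,1$ and the two Bob settings pins a single reference realization $(\theta^\star,a_x^\star,b_y^\star)$. I would treat separately the degenerate configurations in which some $Q_{\s\t}$ is local or Alice's marginals reach $\pm1$, which correspond to $\theta\equiv0\,[\pi/2]$ and are excluded by the nonlocality hypothesis.

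The main obstacle I anticipate is collapsing the Jordan combination, i.e.\ showing that the saturating behavior is extremal: since each $I_{\s\t}$ is nonlinear in $\Vec P$ and the region of \cref{prop:loc2} is nonconvex, $I_{\s\t}(\Vec P)=\pi$ does not by itself force $I_{\s\t}(\Vec P_k)=\pi$ blockwise. The route I would take is to exploit that the fractional-linear steering map sends the target boundary onto the Masanes boundary, whose points are extremal, and to argue that simultaneous saturation of all four conditions therefore exposes a zero-dimensional face of $\mathcal Q$; equivalently, that four full-alternation maximally entangled behaviors can only average to a behavior still saturating all four inequalities if they coincide. Making this transport of extremality rigorous across blocks carrying different $\theta_k$ is the delicate point. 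Once it holds, every block equals the reference realization up to the local unitary freedom supplied by the per-block Masanes self-test, and the block-diagonal structure assembles these local unitaries into the single pair of local isometries mapping $(\ket\psi,A_x',B_y')$ onto $(\ket{\phi_{\theta^\star}},A_x^\star,B_y^\star)$, which is the claimed self-test.
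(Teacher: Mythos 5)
Your architecture matches the paper's overall plan (Jordan/qubit decomposition, steering to the maximally entangled frame, Masanes--Wang rigidity, assembly of block unitaries into isometries), and you have correctly located the crux: nothing forces the nonlinear conditions \eqref{eq:conditions} to hold blockwise. But the route you propose for that crux does not work as stated, and your admission that it is ``the delicate point'' leaves the theorem unproven. The specific failure is the sentence where you display the four numbers $[\Tilde A_0^\s B_0],[\Tilde A_1^\t B_0],[\Tilde A_0^\s B_1],[\Tilde A_1^\t B_1]$ as correlators of ``a zero-marginal maximally entangled behavior $Q_{\s\t}$''. That identification is valid only for a pure two-qubit realization. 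For a mixture $\Vec P=\sum_i p_i\Vec P_i$, the steered correlator of the full behavior is a combination of the blockwise steered correlators with \emph{modified, row-dependent} weights,
\begin{equation}
[\Tilde A_x^\a B_y]=\sum_i \alpha^i_{\a,x}\,\bra{\phi^+}\Tilde A_x^{\a\,(i)}B_y^{(i)}\ket{\phi^+},\qquad \alpha^i_{\a,x}=\frac{p_i\left(1+\a\langle A_x^{(i)}\rangle\right)}{1+\a\langle A_x\rangle},
\end{equation}
so within a single saturation equation the row labelled $\Tilde A_0^\s$ and the row labelled $\Tilde A_1^\t$ are expectations on two \emph{different} reweighted states. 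Hence $Q_{\s\t}$ is not exhibited as a quantum behavior on one maximally entangled state, and neither Masanes's theorem \cite{Masanes03} nor the self-test of \cite{Wang16} can be invoked for it. For the same reason, ``the steering map sends the target boundary onto the Masanes boundary'' is not a usable convex-geometric statement: the map divides by the behavior's own marginals, so it does not send convex decompositions to convex decompositions, and extremality does not transport.

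What fills this hole in the paper (\cref{sec:proofLemma1}) is a concrete construction plus a geometric lemma, not an abstract extremality argument. One introduces the four states $\ket{\Phi^+_{\a,x}}=\sum_i\sqrt{\alpha^i_{\a,x}}\,\ket{\phi^+}\otimes\ket{ii}$, on which $[\Tilde A_x^\a B_y]=\langle\Tilde A_x^\a B_y\rangle_{\a,x}$ holds exactly. Then a triangle-inequality (coplanarity) argument for angular distances --- whose second hypothesis is the consistency relation $\lambda_x\langle B_0B_1\rangle_{+,x}+(1-\lambda_x)\langle B_0B_1\rangle_{-,x}=\sum_i p_i\bra{\phi^+}B_0^{(i)}B_1^{(i)}\ket{\phi^+}$, independent of $x$, inherited from the common decomposition $\{p_i\}$ --- shows that the expectations $\langle\Tilde A_x^\a B_y\rangle_{\a',x'}$ do not depend on the state label $(\a',x')$. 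Only after this collapse to a single zero-marginal state does Masanes extremality apply and force every register to carry the same steered correlators; a separate uniqueness lemma (inverting the steering to recover not only $a_x$ and $\theta$ but also the relative frame rotation between Alice's steered measurements and Bob's) then pulls blockwise equality of steered behaviors back to equality of the original realizations. Your sketch contains none of the reweighting, the consistency constraint, or the coplanarity argument, and these are precisely the content of the proof rather than a technicality that can be deferred.
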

\noindent In particular, all realizations in the range~\cref{eq:range} for which the measurement angles \textit{fully alternates}, i.e.~verify  
\begin{equation}\label{eq:fullAlt}
        0 \leq [\Tilde{a}_0^\s]_\pi \leq b_0 \leq [\Tilde{a}_1^\t]_\pi \leq b_1 < \pi
\end{equation}
satisfy, up to relabelling, the condition of~\cref{thm:self-test}. As such, those realizations are self-tested by their associated quantum point, which is therefore extremal in~$\mathcal{Q}$~\cite{Goh18}. Note that conversely, satisfying \cref{eq:conditions} for all $(\s,\t)\in\{\pm 1\}^2$ self-tests realizations that fully alternate (up to relabeling).

The proof of \cref{thm:self-test}, given in~\cref{sec:proofLemma1}, strongly relies on the steering transformation. This time, the nonlinear transformation is used to map self-testing statements on a maximally entangled state into a self-testing statement on a partially entangled one. This is made possible by strong geometrical relations between the vectors $\Tilde A_x^{\a} \ket{\phi^+}$ and $B_y \ket{\phi^+}$ issued from the steered realization when \cref{eq:conditions} holds. 

Note that the condition of nonlocal behavior implies that the marginals cannot take the value $\pm1$, which guarantees that the correlators $[\Tilde A_x^\s B_y]$ are well-defined. Indeed, any vector with a single marginal probability equal to $\pm 1$ would have more than two zeros of probabilities on the same line or column of the probability table, ensuring that the quantum point is local~\cite{Chen23}.

\cref{thm:self-test} identifies many extremal points of the quantum set and provides self-testings for all qubit partially-entangled states with a large family of measurement settings. However, it does not say whether points which do not satisfy the equality case are extremal in $\mathcal{Q}$. In order to address this, we consider the complementary question of showing that realizations which do not satisfy the full alternation condition \cref{eq:fullAlt} lead to behaviors that are non-extremal. As a first step in this direction, we prove that in the absence of full alternation, the statistics are non-exposed, i.e.~that they do not uniquely reach the Tsirelson bound of any Bell expression. Note that since non-exposed points include both decomposable points and some extremal points, this result in itself does not allow concluding yet.

\begin{mainLemma} \label{thm:non-exp}
    Consider a quantum realization with state $\ket{\phi_\theta}$ and measurements satisfying \cref{eq:range}. If the following series of inequalities does not hold:
    \begin{equation}
        \forall \s, \t, \ 0 \leq [\Tilde{a}_0^\s]_\pi < b_0 < [\Tilde{a}_1^\t]_\pi < b_1 < \pi,
    \end{equation}
    where $\Tilde{a}^{\a}_x =  \operatorname{atan}(\tan(\frac{a_x}{2})\tan(\theta)^\a)$, and $[\alpha]_\pi:\equiv\alpha[\pi]$, then the corresponding point is non-exposed in $\mathcal{Q}$. 
\end{mainLemma}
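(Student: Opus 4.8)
The plan is to argue directly from the definition of exposedness: the point $P$ is exposed in $\mathcal{Q}$ iff some Bell expression $\ell$ attains its Tsirelson value over $\mathcal{Q}$ \emph{only} at $P$. I will therefore show that as soon as the strict chain $0\le[\tilde a_0^\s]_\pi<b_0<[\tilde a_1^\t]_\pi<b_1<\pi$ fails for some $(\s,\t)$, every Bell expression $\ell$ supported at $P$ is in fact maximized on a larger subset of $\mathcal{Q}$ than the singleton, so that $P$ can never be a unique maximizer. This is the natural complement to \cref{thm:self-test}: there, strict alternation forces the equalities \cref{eq:conditions} and hence a self-test, which exposes the point; here I show that the failure of strict alternation destroys exposedness, so that the exposed points coincide exactly with the strictly alternating ones.

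First I would reduce the analysis to the two-qubit parametrization \cref{eq:realization}. Because $\mathrm{Ext}(\mathcal{Q})\subset\mathcal{Q}_2$ and the maximum of any linear functional over $\mathcal{Q}$ is reached at extremal points, any candidate exposing functional $\ell$ must already attain its maximum over pure two-qubit realizations at $P$. Writing the value of $\ell$ as a function $V(\theta,a_0,a_1,b_0,b_1)$ of the parameters in \cref{eq:range}, the statement that $\ell$ is supported at $P$ becomes a first-order optimality condition on $V$ that ties the coefficients of $\ell$ to the measurement angles and to $\theta$. The steering relation $[\tilde A_x^\a B_y]=\frac{\langle A_x B_y\rangle+\a\langle B_y\rangle}{1+\a\langle A_x\rangle}$ is the change of variables that organizes this computation: through $T_\theta$ it sends the four choices $(\s,\t)\in\{\pm1\}^2$ to four zero-marginal behaviors on $\ket{\phi^+}$, whose boundary is the \emph{convex} Masanes boundary. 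On that boundary the supporting hyperplanes, and hence the whole normal cone of admissible $\ell$, are explicit, which is what will let me control all functionals supported at $P$ at once rather than one at a time.

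The core step is to convert the failure of the strict chain into a flat direction. Negating the chain yields, for some $(\s,\t)$, a crossing or coincidence of a steered Alice angle with a Bob angle (for instance $[\tilde a_1^\t]_\pi\le b_0$, or $b_1\le[\tilde a_1^\t]_\pi$, or a boundary equality such as $[\tilde a_0^\s]_\pi=b_0$). Geometrically this means that two of the vectors $\tilde A_x^\a\ket{\phi^+}$ and $B_y\ket{\phi^+}$ feeding the Masanes configuration become aligned, or that the steered directions fail to interleave with Bob's, so that the corresponding inequality of \cref{eq:ineqmargin} is strictly slack instead of saturated. I would turn this slack into an explicit one-parameter family of realizations $P(\epsilon)$, obtained by rotating the under-constrained measurement inside the $(\sigma_z,\sigma_x)$-plane while co-adjusting a second parameter (an angle or $\theta$) so as to stay tangent to the supporting hyperplane, with $P(0)=P$. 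Using the explicit form of $V$ together with the optimality conditions above, I would check that $\ell\big(P(\epsilon)\big)$ is constant in $\epsilon$, the slack being precisely what places the deformation in the kernel of the tangent map of the supporting $\ell$. Since each $P(\epsilon)$ lies in $\mathcal{Q}_2\subset\mathcal{Q}$ and shares the maximal value of $\ell$, the supporting hyperplane touches $\mathcal{Q}$ along a whole arc, and $P$ is non-exposed.

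I expect the main obstacle to be controlling \emph{all} supported functionals simultaneously, and in particular the genuinely extremal-but-non-exposed points, where a simple non-extremality argument (writing $P$ as a midpoint of two behaviors of $\mathcal{Q}$) is not available and the alternate maximizer must be produced separately for each $\ell$ in the normal cone. Two features make this delicate: $\mathcal{Q}_2$ is non-convex, so local optimality of $V$ does not by itself certify a global maximum over $\mathcal{Q}$, and the relevant point may sit where a flat face meets the self-testing boundary tangentially. I would handle the first by pulling the optimization back through $T_\theta$ to the maximally entangled picture, where convexity of the Masanes body upgrades local optimality to global optimality; and the second by showing that the arc $P(\epsilon)$ can be taken to remain on the pertinent face of $\mathcal{Q}$, tangent to but not crossing the boundary identified in \cref{thm:self-test}, so that non-exposedness persists even when $P$ itself stays extremal. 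A finite case analysis over the ways the strict chain can fail, together with the Alice/Bob exchange symmetry and the relabelings already used to reach \cref{eq:range}, then completes the argument.
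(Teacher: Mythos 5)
Your overall outline---first constrain the supporting Bell expressions by optimality conditions, then exhibit an alternative maximizer for each of them---matches the paper's strategy, but the mechanism you propose for producing the alternative maximizer has a genuine gap: it is only first-order. Your curve $P(\epsilon)$ is built to lie ``in the kernel of the tangent map'' of a supporting functional $\ell$, i.e.\ to be tangent to the supporting hyperplane at $P$. Tangency gives $\frac{d}{d\epsilon}\ell(P(\epsilon))\big|_{\epsilon=0}=0$, nothing more; since $P$ maximizes $\ell$ over $\mathcal{Q}$, \emph{every} curve in $\mathcal{Q}$ through $P$ satisfies this automatically, and generically $\ell(P(\epsilon))$ drops at second order, yielding no second maximizer. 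Non-exposedness requires a point $P'\neq P$ with $\ell(P')=\ell(P)$ \emph{exactly}, and simultaneously for every $\ell$ in the (unknown) cone of supporting functionals. The paper gets exactness by linear algebra rather than by deformation: it shows every Bell expression maximized at $\vec P$ lies in $V^{\perp}$, where $V$ is the explicit $5$-dimensional span of three ``eigenvector'' conditions $\vec T_{\ket{\psi^{\perp}}}=\bra{\psi^{\perp}}\vec M\ket{\phi_\theta}$ (the Bell operator must admit $\ket{\phi_\theta}$ as top eigenstate---a condition not captured by stationarity in the five real parameters alone) together with the two stationarity directions $\partial\vec P/\partial a_0$ and $\partial\vec P/\partial b_0$; it then solves a linear system to produce a \emph{single} point $\vec L=\vec P+\vec v$ with $\vec v\in V$, so that $\vec\beta\cdot\vec L=\vec\beta\cdot\vec P$ holds exactly for all admissible $\vec\beta$ at once. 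Crucially, $\vec L$ is not a nearby deformation of $P$: it is a distant \emph{local} point with deterministic marginals $\langle A_0\rangle=\s$, $\langle A_1\rangle=\t$, chosen in a subspace $\mathcal{L}_{\s\t}$ where the local, quantum and non-signaling sets coincide; the existence condition reduces to $\s\,\t\,\sin(\Tilde{a}_0^{\s}-b_y)\sin(\Tilde{a}_1^{\t}-b_y)\geq 0$ for both $y$, which fails for all four $(\s,\t)$ precisely when full alternation holds.

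Second, your engine for finding the flat direction---slack in the inequalities \cref{eq:ineqmargin}---is empty in exactly the hardest cases. If the strict chain fails only through an equality, say $[\Tilde{a}_0^{\s}]_\pi=b_0$, the angles still alternate non-strictly, the four equalities \cref{eq:conditions} still hold (nothing is slack), and by \cref{thm:self-test} the point is still extremal and self-testing; yet the Lemma asserts it is non-exposed (these are the points of \cref{fig:sub3}, where a flat face meets the curved boundary). Your fallback for this case---keep the arc ``on the pertinent face of $\mathcal{Q}$''---presupposes knowledge of that face, which is essentially what is to be proven; the paper's local point $\vec L$ covers these equality cases with no extra work, since $\sin(\Tilde{a}_0^{\s}-b_0)=0$ makes the sign condition hold trivially. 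Relatedly, your plan to control the whole normal cone by pushing it through $T_\theta$ onto the convex Masanes body does not go through as stated: $T_\theta$ is a non-linear, behavior-dependent change of variables, so supporting hyperplanes of $\mathcal{Q}$ at $P$ do not transform into supporting hyperplanes of the zero-marginal body at the steered points. The paper sidesteps this entirely by formulating the necessary conditions on $\vec\beta$ directly in the original correlation space.
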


The idea that we employ in~\cref{sec:proofLemma2} to prove that these points are not exposed is to consider all possible Bell expressions, and to show for each of them that the value achieved by the considered quantum point can be also achieved by a distinct quantum point. In order to prove this, we first find necessary conditions that Bell expressions maximized by the considered statistics must satisfy. This allows reducing the dimension of the search space for Bell expressions from 8 to 3. For all remaining Bell expressions, we then provide a local point which gives the same Bell value as the considered quantum point.

Equipped with Lemmas 1 and 2, let us consider a qubit realization of the form \cref{eq:realization,eq:range}. If the measurement angles satisfy the alternating condition \cref{eq:fullAlt}, then \cref{thm:self-test} ensures that the quantum realization is self-tested and therefore the point is extremal. This shows that the alternating property \cref{eq:fullAlt} is constitutive of these points: no point obtained with non-alternating settings can be obtained with other, alternating settings. Using this property together with \cref{thm:non-exp} ensures that all realizations that do not verify \cref{eq:fullAlt} correspond to points on the interior of the set of non-exposed points of~$\mathcal{Q}$. However, Straszewicz's theorem  states that all extremal points are limits of exposed points \cite{Rockafellar70}. This ensures that non-alternating realizations lead to non-extremal points of~$\mathcal{Q}$. This can be summarized in the following theorem.

\begin{mainTheorem}[Characterization of $\text{Ext}(\mathcal{Q})$] \label{thm:charac}
\
    \begin{enumerate}
    \item
    A nonlocal point in the CHSH scenario is extremal in $\mathcal{Q}$ iff
    \begin{equation}\label{eq:thm1eq1}
        \forall \, \mathsf{u} \in \{\pm 1\}^2, \sum_{x,y} \epsilon_{xy} \asin [\Tilde A_x^{\mathsf{u}_x} B_y] = \pi
    \end{equation}
    for some $\epsilon_{xy} \in \{\pm 1\}$ such that $\prod_{x,y} \epsilon_{xy} = -1$.

    \item
    A quantum realization leads to a nonlocal extremal point iff it can be mapped by local channels and relabelings to a quantum realization on the entangled state $\ket{\phi_\theta}$ with measurements satisfying \cref{eq:range} s.t.
    \begin{equation} \label{eq:thm1eq2} 
        \forall (\s,\t) \in \{\pm 1\}^2, 0 \leq [\Tilde{a}_0^\s]_\pi \leq b_0 \leq [\Tilde{a}_1^\t]_\pi \leq b_1 < \pi.
    \end{equation}
    \end{enumerate}
\end{mainTheorem}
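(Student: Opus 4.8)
The plan is to assemble \cref{thm:charac} from Lemmas~\ref{thm:self-test} and~\ref{thm:non-exp}, exploiting the reductions already in place: convexity of $\mathcal{Q}$, $\text{Ext}(\mathcal{Q})\subset\mathcal{Q}_2$, and the parametrization \cref{eq:realization,eq:range}. I would prove the two parts together, using the realization-level statement (part 2) as the engine for the behavior-level statement (part 1). The backbone is a dichotomy on qubit realizations in the canonical range: either the angles satisfy the full-alternation condition \cref{eq:thm1eq2}, in which case \cref{thm:self-test} applies, or they do not, in which case \cref{thm:non-exp} applies.

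First I would establish the ``if'' direction of part~2. Given a realization in the range \cref{eq:range} whose angles fully alternate in the sense of \cref{eq:thm1eq2}, \cref{thm:self-test} (via its remark that full alternation implies, up to relabeling, the condition \cref{eq:conditions}) guarantees the associated behavior self-tests its realization. Since a self-tested behavior cannot be written as a convex mixture of distinct quantum points, it is extremal in $\mathcal{Q}$ by the cited argument of~\cite{Goh18}. Because any quantum realization producing a nonlocal point can be brought into the canonical form \cref{eq:realization,eq:range} by local channels (Naimark, Jordan, \cite{Mikosnuszkiewicz23}) and relabelings while preserving the behavior, this covers every realization mappable to an alternating one. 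For the ``only if'' direction I would argue by contraposition: take a realization in the canonical range that does \emph{not} satisfy \cref{eq:thm1eq2}. The negation of the closed chain of inequalities in \cref{eq:thm1eq2} implies (after checking the boundary cases where some inequality holds with equality are absorbed into the open-inequality hypothesis of \cref{thm:non-exp}) that the strict chain of \cref{thm:non-exp} fails, so that lemma yields a non-exposed point. Here I must invoke the key rigidity consequence highlighted in the text: since full alternation is \emph{constitutive} of the extremal points identified by \cref{thm:self-test}, no non-alternating realization can secretly reproduce a behavior obtainable from an alternating one. Hence the non-alternating realizations land strictly inside the set of non-exposed points, and by Straszewicz's theorem---every extremal point is a limit of exposed points---such a point cannot be extremal. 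This establishes part~2 in both directions.

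Part~1 then follows by translating the realization-level criterion into the behavior-level equation \cref{eq:thm1eq1}. The translation uses the relation, recorded in the caption of \cref{fig:nonlinearMap} and in \cref{sec:steering}, between the steered correlators $[\Tilde A_x^{\a} B_y]$ and the measurement angles, namely $\asin[\Tilde A_x^{\a} B_y]$ equals the signed angular separation $[\Tilde a_x^{\a}]_\pi - b_y$ (up to the appropriate branch). Under this correspondence the full-alternation chain \cref{eq:thm1eq2} is exactly the geometric arrangement that forces the four signed sums $\sum_{x,y}\epsilon_{xy}\asin[\Tilde A_x^{\mathsf{u}_x}B_y]$ to saturate at $\pi$ for the sign pattern with $\prod_{x,y}\epsilon_{xy}=-1$, for every choice $\mathsf{u}\in\{\pm1\}^2$; this is precisely the content of \cref{eq:conditions} rewritten with the $\epsilon_{xy}$ notation. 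Conversely, saturation of \cref{eq:thm1eq1} forces, via \cref{thm:self-test}, a self-testing behavior whose (unique up to isometry) realization must fully alternate. I would also check that the behavior-level statement is relabeling-invariant, so that the phrase ``for some $\epsilon_{xy}$'' correctly accounts for the relabelings used to reach the canonical range.

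The main obstacle I anticipate is the gap between \emph{non-exposed} and \emph{non-extremal}: \cref{thm:non-exp} only rules out exposedness, and non-exposed points genuinely include some extremal points, so the logic must be handled with care. The resolution rests on showing that the non-alternating realizations form a region lying in the \emph{interior} (relative to the non-exposed locus) rather than on its boundary, so that Straszewicz's theorem can be applied without the escape route of extremal-but-non-exposed points; making this topological claim precise, and verifying that the boundary (equality) cases of the alternation chain are consistently assigned to the extremal side via \cref{thm:self-test}, is where the real work lies. A secondary subtlety is bookkeeping the branch choices and the $[\cdot]_\pi$ reductions in the angle-to-correlator dictionary so that the sign pattern condition $\prod_{x,y}\epsilon_{xy}=-1$ in \cref{eq:thm1eq1} matches the single-minus-sign structure of \cref{eq:conditions} across all four $\mathsf{u}$, and confirming that exactly these sign patterns (and no others) can be saturated by a nonlocal point.
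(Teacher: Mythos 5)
Your proposal follows essentially the same route as the paper's own argument (the paragraph preceding \cref{thm:charac}): reduction to canonical qubit realizations, \cref{thm:self-test} giving extremality via self-testing in the fully alternating case, \cref{thm:non-exp} combined with the ``constitutive'' rigidity of alternation and Straszewicz's theorem to rule out extremality in the non-alternating case, and the steering angle--correlator dictionary to translate between the realization-level condition \cref{eq:thm1eq2} and the behavior-level condition \cref{eq:thm1eq1}. The topological step you flag as the remaining work (non-alternating points lying in the interior of the non-exposed locus, so that limits of exposed points cannot reach them) is treated at the same level of detail in the paper itself, so your outline matches it in both structure and rigor.
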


\begin{figure*}

\subfloat[\label{fig:sub1}]{%
  \includegraphics[width=0.49\linewidth]{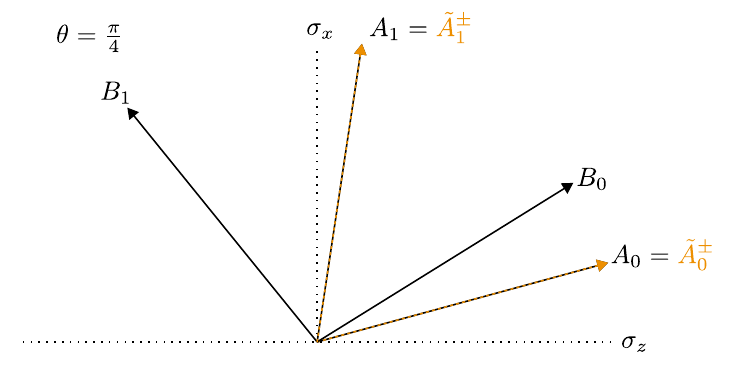}%
}\hspace*{\fill}%
\subfloat[\label{fig:sub2}]{%
  \includegraphics[width=0.49\linewidth]{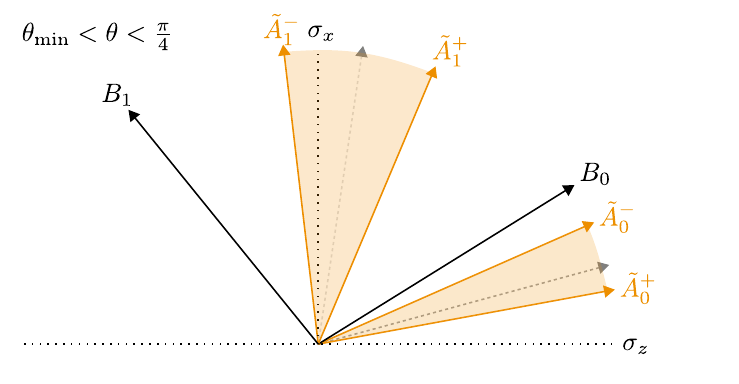}%
}

\medskip

\subfloat[\label{fig:sub3}]{%
  \includegraphics[width=0.49\linewidth]{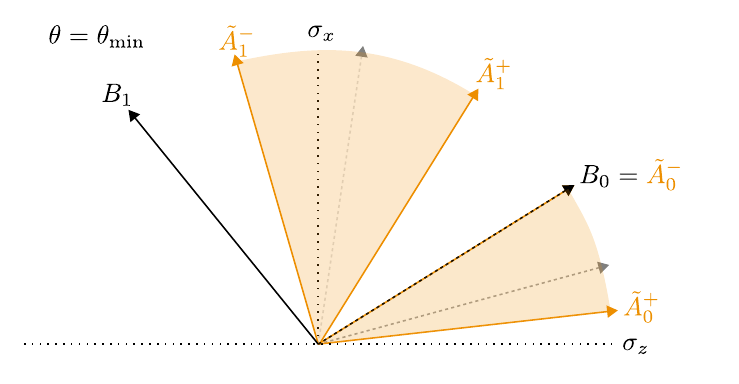}%
}\hspace*{\fill}%
\subfloat[\label{fig:sub4}]{%
  \includegraphics[width=0.49\linewidth]{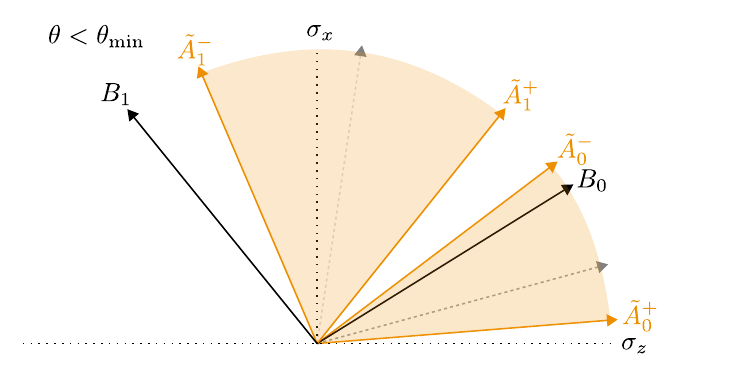}%
}

\caption{Geometrical representation of the realizations leading to extremal points of the quantum set in the CHSH scenario. Up to local isometries and relabelings, these realizations correspond to measurements on a partially entangled two-qubit state $\ket{\phi_\theta}$ taking place in the $(\sigma_z,\sigma_x)$-plane and verifying the fully alternating condition~\cref{eq:fullAlt}. If the initial measurement angles alternate, i.e.~verify $0\leq a_0 \leq b_0 \leq a_1 \leq b_1 < \pi$, then the corresponding points are extremal for all entanglement parameters $\theta$ in $[\theta_{\text{min}},\pi/4]$, where the threshold entanglement value is given by $\theta_{\min}=\max \{\theta\leq \pi/4 \text{ s.t. } \Tilde{a}_x^\a = b_y\}$. As such, for $\theta=\pi/4$, pictured in~\cref{fig:sub1}, the point is extremal. For $\theta_{\min} < \theta < \pi/4$, pictured in~\cref{fig:sub2}, the orange areas and black arrows still alternate, so the point is extremal. For $\theta=\theta_{\min}$, see~\cref{fig:sub3}, the point is still extremal but non-exposed in $\mathcal{Q}$. Last, for $\theta<\theta_{\min}$, like in~\cref{fig:sub4}, the black arrow of $B_0$ is inside an orange area: the fully alternating property is broken, and the point admits a decomposition.}
\label{fig:angles}

\end{figure*}

Note that the full alternance condition is equivalent whether the steering transformation is applied on Alice's measurements (as done here) or on Bob's ones, see \cref{sec:AliceVSBob}. Therefore, exchanging the role of Alice and Bob in conditions \cref{eq:thm1eq1,eq:thm1eq2}, with modified angles $\tilde b_y^\b$ and correlators $[A_x \tilde B_y^\b]$, identifies the same extremal point.

Together with the 8 deterministic strategies~\cite{Brunner14}, which are local extremal points of $\mathcal{Q}$, \cref{thm:charac} gives a complete characterization of the extremal point of the quantum set in the minimal scenario. It also describes the quantum realizations behind them in terms of measurements on two-qubit states, as illustrated in \cref{fig:angles}. This description implies that all nonlocal extremal points in this scenario self-test a two-qubit realization.

Note that the analytical description of the quantum set given by \cref{thm:charac} provides new insight on its geometry. Indeed, it implies that the 8-dimensional convex set is generated by a 5-dimensional sub-manifold of extremal points. Indeed, as discussed above, whenever Alice's marginals are not zero, three out of the four equations above are linearly independent. Furthermore, by \cref{thm:non-exp}, extremal points admitting a probability equal to zero verify $[\Tilde A_x^{\a} B_y]=\pm 1$ for some $\a,x,y$ and are thus non-exposed.

\acknowledgements
We thank Valerio Scarani for feedback on the manuscript.

\bibliography{bilbi}

\newpage
\onecolumngrid
\appendix 

\section*{SUPPLEMENTAL MATERIAL}

In this supplemental material, we explain the technical details underlying the results presented in the paper. In particular, \cref{sec:paramreduc} focuses on the parametrization of the extremal points in the CHSH scenario, \cref{sec:steering} on the steering transformation $T_\theta$, while \cref{sec:proofLemma1,sec:proofLemma2} provide detailed proofs of Lemma 1 and 2 respectively. 

\section{Sufficient parametrization for the extremal points of the quantum set in the CHSH scenario} \label{sec:paramreduc}

Quantum correlations are defined as the statistics that can be observed upon performing arbitrary measurements on an arbitrary quantum state. As such, describing the quantum set exhaustively requires in principle considering all possible realizations (states and measurements) in every Hilbert space. Thankfully, it is known that several realizations give rise to identical statistics, and therefore a subset of all quantum realizations is sufficient to reproduce all quantum correlations in a given Bell scenario. In turn, realizations can be even further restricted when considering extremal quantum correlations only. In this appendix, we discuss these restrictions in detail. This leads us to the explicit parametrization given in \cref{prop:qubitrealization}, which is sufficient to reproduce all extremal points in the CHSH scenario.

This appendix is organised as follows. First, we discuss up to \cref{cor:JordanBell} the way that restrictions can be made on measurement operators without affecting the set of generated statistics. Then, we incorporate restrictions on the state as well as those due to discrete symmetries of the Bell scenario.

While we focus on the case of the CHSH scenario, several of the results described here also apply to general Bell scenarios. Also, most of the steps described in this appendix were already discussed in earlier works, see e.g.~\cite{Masanes05,Mikosnuszkiewicz23}, however some of our proofs are more concise. We include a proof of every restriction step for completeness.

\begin{Definition}
    A \emph{binary measurement} -- a quantum measurement with two possible outcomes -- is described by an operator $A\in L(\mathcal{H})$ such that $A^\dagger = A$ and $-\id \preceq A \preceq \id$. Furthermore, the measurement is \emph{projective} iff $A^2 = \id$, in which case the only possible eigenvalues of $A$ are $\pm 1$ and $A$ is unitary, i.e.~$A^\dag A=A A^\dag=\id$.
\end{Definition}

\begin{Definition}
    \emph{Quantum correlations in the CHSH scenario} are the correlation vectors $\Vec P \in \mathbb{R}^8$ with components 
\begin{align}
\cor{A_x}&=\tr(\rho_{AB} A_x\otimes \ \id\ ),\\
\cor{B_y}&=\tr(\rho_{AB} \ \id\ \otimes B_y),\\
\cor{A_x B_y}&=\tr(\rho_{AB} A_x\otimes B_y),
\end{align}
    where $\rho_{AB}\in L(\mathcal{H}_A\otimes \mathcal{H}_B)$ is a quantum state (satisfying $\rho_{AB}\succeq 0$, $\Tr(\rho_{AB})=1$) and $A_x\in L(\mathcal{H}_A)$, $B_y\in L(\mathcal{H}_B)$ with $x,y\in{0,1}$ are binary measurement operators. Each such vector encodes the statistics corresponding to measuring the quantum state $\rho_{AB}$ with the corresponding two pairs of binary measurements.
\end{Definition}

\begin{Remark}
    It is convenient to arrange the components of $\Vec P$ in a table with elements indexed by $x,y\in\{-1,0,1\}$. Namely, defining $A_{-1} = \id$, $B_{-1} = \id$, we can write
    \begin{equation}
        \{\Vec P\}_{xy} = \Tr(\rho_{AB}(A_x\otimes B_y)),
    \end{equation}
    where the first element of this table, $\{\Vec P\}_{-1,-1}=\Tr(\rho_{AB})=1$, is a constant.
\end{Remark}

\begin{Remark}
This definition is equivalent to the definition of quantum correlations in terms of full probabilities, which identifies distributions of the form $p(\a\b|xy)=\Tr(\rho_{AB}(P_{\a|x}^A \otimes P_{\b|y}^B))$, where $\rho_{AB}$ is the measured quantum state and $P_{\a|x}^A, P_{\b|y}^B$ are POVM elements on subsystems $A$ and $B$, corresponding to the measurement choices $x$ and $y$ and outcomes $\a$ and $\b$. Indeed, any binary outcome POVM $P_{\s|t}$ verifying $P_{\s|t}\succeq 0$ and $P_{0|t}+P_{1|t}=\id$ is in one-to-one correspondence with a binary measurement operator $X_t = P_{0|t}-P_{1|t}$ verifying $-\id \preceq X_t \preceq \id$. Therefore, it is possible to express the statistics in terms of the components of the correlation vector $\Vec P$ as
\begin{equation}
p(\a\b|xy)=\frac{1 + \a\cor{A_x} + \b\cor{B_y} + \a\b\cor{A_x B_y}}{4}.
\end{equation}
Note that the measurement operator $X_t$ is unitary iff the corresponding POVM is a projector.
\end{Remark}

\begin{Lemma}[Naimark's dilation theorem for binary measurement operators] \label{lemma:Naimark}
For any binary measurement operator $A$ acting on a Hilbert space $\mathcal{H}$, there exists a Hilbert space $\Tilde{\mathcal{H}}$, an isometry $V:\mathcal{H}\to \Tilde{\mathcal{H}}$ and a unitary binary measurement operator $\Tilde{A}$ s.t.
\begin{equation} \label{eq:dilation}
    A = V^\dagger \Tilde{A} V. 
\end{equation}
\end{Lemma}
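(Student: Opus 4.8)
The plan is to exhibit the dilation explicitly on the doubled Hilbert space $\Tilde{\mathcal{H}} = \mathcal{H} \oplus \mathcal{H}$. The guiding observation is that the two defining conditions $A^\dagger = A$ and $-\id \preceq A \preceq \id$ are precisely what is needed to complete $A$ into a self-adjoint involution by appending an off-diagonal block. Concretely, since $-\id \preceq A \preceq \id$ forces the spectrum of $A$ to lie in $[-1,1]$, the operator $\id - A^2$ is positive, and the continuous functional calculus furnishes its unique positive square root $S := \sqrt{\id - A^2}$. Being a function of $A$, this $S$ commutes with $A$; this commutativity is the one place where a little care is needed, and it is guaranteed by the spectral theorem even when $\mathcal{H}$ is infinite dimensional.

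With $S$ in hand, the natural choice is
\begin{equation}
\Tilde{A} = \begin{pmatrix} A & S \\ S & -A \end{pmatrix}, \qquad V = \begin{pmatrix} \id \\ 0 \end{pmatrix},
\end{equation}
where $\Tilde{A}$ acts on $\Tilde{\mathcal{H}}$ and $V : \mathcal{H} \to \Tilde{\mathcal{H}}$ is the inclusion onto the first summand. Self-adjointness of $\Tilde{A}$ is immediate from $A^\dagger = A$ and $S^\dagger = S$. The key verification is that $\Tilde{A}$ is a \emph{unitary} binary measurement operator, i.e.~that $\Tilde{A}^2 = \id$: expanding the product block by block, the two diagonal entries both give $A^2 + S^2 = A^2 + (\id - A^2) = \id$, while the off-diagonal entries give $AS - SA$ and $SA - AS$, both of which vanish by the commutativity of $A$ and $S$ noted above. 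That $V$ is an isometry follows from $V^\dagger V = \id$, and a one-line block computation yields $V^\dagger \Tilde{A} V = A$, which is exactly the claimed identity~\cref{eq:dilation}.

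The construction is self-contained, so I do not anticipate a genuine obstacle; the only subtlety worth flagging is the existence and commutativity of the square root $S$, which I would justify once through the functional calculus rather than by any dimension-dependent or matrix-diagonalization argument. Handling it this way is what makes the statement cover the arbitrary, possibly infinite-dimensional Hilbert spaces required by the unbounded-dimension analysis of $\mathcal{Q}$, and it is the reason I favor the explicit $2\times 2$ block dilation over an abstract appeal to Stinespring- or Naimark-type theorems.
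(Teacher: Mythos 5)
Your proof is correct, but it is a genuinely different construction from the paper's. You build the Halmos-type dilation on $\Tilde{\mathcal{H}}=\mathcal{H}\oplus\mathcal{H}$: the isometry is the trivial inclusion $V=\begin{pmatrix}\id\\0\end{pmatrix}$ and all the structure is placed in the dilated observable $\Tilde{A}=\begin{pmatrix}A & S\\ S & -A\end{pmatrix}$ with $S=\sqrt{\id-A^2}$, whose involution property requires the functional-calculus square root and its commutation with $A$ (which you correctly flag and justify). The paper does the opposite: it works on $\mathcal{H}\otimes\mathbb{C}^2$, takes the dilated observable to be the fixed, trivial operator $\Tilde{A}=\id_\mathcal{H}\otimes\sigma_z$, and puts all the structure into the isometry $V=\sum_{i=0}^1\sqrt{\tfrac{1+(-1)^iA}{2}}\otimes\ket{i}$, built from square roots of the POVM elements $\tfrac{1\pm A}{2}$; there the check $\Tilde{A}^2=\id$ is immediate and no commutation argument is needed, since the cross terms in $V^\dagger\Tilde{A}V$ are killed by $\bra{0}\sigma_z\ket{1}=0$ rather than by operator commutativity. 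Your version makes the compression structure transparent ($A$ is literally the top-left corner of a self-adjoint unitary) at the cost of one commutativity observation; the paper's version has a cleaner operational reading (the ancilla qubit coherently records the outcome and is measured in the computational basis) and a slightly lighter verification. Both constructions live on a doubled space, work in arbitrary (including infinite) dimension, and are in fact unitarily equivalent, so either serves equally well as input to the paper's \cref{cor:Naimarkfortwo}. One cosmetic point: you should note (in one clause) that $\Tilde{A}^\dagger=\Tilde{A}$ together with $\Tilde{A}^2=\id$ forces the spectrum of $\Tilde{A}$ into $\{\pm1\}$, so $-\id\preceq\Tilde{A}\preceq\id$ holds and $\Tilde{A}$ is indeed a \emph{binary measurement operator} in the paper's sense, not merely a self-adjoint involution.
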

\begin{proof}
    Consider the tensor product Hilbert space $\Tilde{\mathcal{H}} = \mathcal{H}\otimes \mathbb{C}^2$ and introduce the following isometry and unitary binary measurement operator:
    \begin{eqnarray}
        V = \sum_{i=0}^1 \sqrt{\frac{1+(-1)^i A}{2}} \otimes \ket{i}_{\mathbb{C}^2}, \quad
        \Tilde{A} = \id_\mathcal{H} \otimes \sigma_z. 
    \end{eqnarray}
    On can easily verify that the first term is well-defined as both $\frac{1 \pm A}{2}$ are positive operators, and $V^\dagger V = \id_\mathcal{H}$ so $V$ is an isometry, that $\Tilde{A}^2 = \id_{\Tilde{\mathcal{H}}}$ and that property \cref{eq:dilation} is satisfied.
\end{proof}

\begin{Remark}
It is well known that Naimark's dilation theorem also applies to measurements involving an arbitrary number of measurements outcomes $k\geq 1$, see e.g.~\cite[Proposition 9.6]{Harris16}.
\end{Remark}

\begin{Corollary}[Naimark's dilation for two measurements] \label{cor:Naimarkfortwo}
    For any two binary measurement described by operators $A_0$, $A_1$ acting on a Hilbert space $\mathcal{H}$, there exists a Hilbert space $\Tilde{\mathcal{H}}$, an isometry $V:\mathcal{H}\to \Tilde{\mathcal{H}}$ and unitary binary measurement operators $\Tilde{A}_x$ s.t.
    \begin{equation}
        \forall x\in\{0,1\}, \ A_x = V^\dagger \Tilde{A}_x V. 
    \end{equation}
\end{Corollary}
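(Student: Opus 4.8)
The plan is to dilate each measurement operator separately using \cref{lemma:Naimark}, but then combine the two resulting isometries into a single one acting on a common enlarged Hilbert space. The subtlety is that a naive application of \cref{lemma:Naimark} to $A_0$ and $A_1$ independently would produce two distinct isometries $V_0, V_1$ into two different dilation spaces, which is not what the statement requires: we need a \emph{single} isometry $V$ that simultaneously dilates both operators.

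First I would invoke \cref{lemma:Naimark} for each $x \in \{0,1\}$ to obtain isometries $V_x : \mathcal{H} \to \mathcal{H} \otimes \mathbb{C}^2$ and unitary operators $\tilde{A}_x = \id_\mathcal{H} \otimes \sigma_z$ with $A_x = V_x^\dagger \tilde{A}_x V_x$. Next I would build a common dilation space, the natural candidate being $\tilde{\mathcal{H}} = \mathcal{H} \otimes \mathbb{C}^2 \otimes \mathbb{C}^2$, with one ancilla qubit dedicated to each measurement. I would define the combined isometry by $V = \sum_{i,j} \sqrt{\tfrac{1+(-1)^i A_0}{2}}\,\sqrt{\tfrac{1+(-1)^j A_1}{2}}' \otimes \ket{i}\otimes\ket{j}$ — or more cleanly, build $V$ as a composition that threads $A_0$ through the first ancilla and $A_1$ through the second — and set $\tilde{A}_0 = \id_\mathcal{H}\otimes\sigma_z\otimes\id$ and $\tilde{A}_1 = \id_\mathcal{H}\otimes\id\otimes\sigma_z$, so that each dilated observable acts nontrivially only on its own ancilla and the two commute.

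The main obstacle is that the square-root factors $\sqrt{\tfrac{1\pm A_0}{2}}$ and $\sqrt{\tfrac{1\pm A_1}{2}}$ need not commute, since $A_0$ and $A_1$ are arbitrary binary measurement operators. This means the most transparent route is not to write one explicit closed-form $V$, but rather to \emph{compose} the two single-measurement dilations while being careful that the first dilation does not disturb the second. Concretely, I would first dilate $A_0$ via $V_0 : \mathcal{H}\to\mathcal{H}\otimes\mathbb{C}^2$ so that $A_0 = V_0^\dagger(\id\otimes\sigma_z)V_0$; then apply $A_1 \otimes \id_{\mathbb{C}^2}$ as a binary measurement operator on the enlarged space $\mathcal{H}\otimes\mathbb{C}^2$ and dilate \emph{it} via a second isometry $W:\mathcal{H}\otimes\mathbb{C}^2\to\mathcal{H}\otimes\mathbb{C}^2\otimes\mathbb{C}^2$. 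Setting $V = W V_0$ then gives $A_1 = V^\dagger \tilde{A}_1 V$ with $\tilde{A}_1 = \id_\mathcal{H}\otimes\id\otimes\sigma_z$, and since $W$ acts as an isometry that commutes with $\id\otimes\sigma_z\otimes\id$ in the appropriate sense, $A_0 = V^\dagger \tilde{A}_0 V$ is preserved as well.

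Finally I would verify the three required properties: that $V = WV_0$ is an isometry (a composition of isometries, so $V^\dagger V = V_0^\dagger W^\dagger W V_0 = V_0^\dagger V_0 = \id_\mathcal{H}$), that $\tilde{A}_0, \tilde{A}_1$ are unitary binary measurement operators (each squares to the identity by construction), and that the dilation identities $A_x = V^\dagger \tilde{A}_x V$ hold for both $x$. The only computation requiring genuine care is checking that the second dilation step does not spoil the first, i.e.~that $V^\dagger \tilde{A}_0 V = V_0^\dagger W^\dagger (\id\otimes\sigma_z\otimes\id) W V_0$ collapses back to $V_0^\dagger(\id\otimes\sigma_z)V_0 = A_0$; this follows because $W$ is designed to dilate an operator acting trivially on the first ancilla, so it commutes with $\id\otimes\sigma_z\otimes\id$ up to the isometry relation $W^\dagger W = \id$.
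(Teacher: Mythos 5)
Your overall strategy --- iterating the single-measurement dilation of \cref{lemma:Naimark} --- is the same as the paper's, but the operator you feed into the second dilation is the wrong one, and this breaks the key identity for $A_1$. In your construction, $W$ dilates $A_1\otimes\id_{\mathbb{C}^2}$, so $W^\dagger(\id_{\mathcal{H}\otimes\mathbb{C}^2}\otimes\sigma_z)W = A_1\otimes\id_{\mathbb{C}^2}$, and with $V=WV_0$ and $\tilde{A}_1=\id_{\mathcal{H}}\otimes\id_{\mathbb{C}^2}\otimes\sigma_z$ what you actually obtain is
\begin{equation*}
V^\dagger \tilde{A}_1 V \;=\; V_0^\dagger\,\bigl(A_1\otimes\id_{\mathbb{C}^2}\bigr)\,V_0 \;=\; \sum_{i=0}^{1}\sqrt{\tfrac{1+(-1)^iA_0}{2}}\;A_1\;\sqrt{\tfrac{1+(-1)^iA_0}{2}},
\end{equation*}
which equals $A_1$ only when $A_0$ and $A_1$ commute --- and the whole point of the corollary is to handle non-commuting pairs. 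Concretely, for $\mathcal{H}=\mathbb{C}^2$, $A_0=\sigma_z$, $A_1=\sigma_x$, the isometry of \cref{lemma:Naimark} is the copy map $V_0\ket{j}=\ket{jj}$, and $V_0^\dagger(\sigma_x\otimes\id)V_0 = 0 \neq \sigma_x$. The root of the problem is that the operator one must dilate in the second step is the pushforward $V_0 A_1 V_0^\dagger$, which satisfies $V_0^\dagger(V_0A_1V_0^\dagger)V_0 = A_1$ since $V_0^\dagger V_0=\id$; this pushforward is \emph{not} equal to $A_1\otimes\id_{\mathbb{C}^2}$, precisely because $V_0$ correlates $\mathcal{H}$ with the ancilla in an $A_0$-dependent way.

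There is also a structural tension your scheme cannot resolve: your argument that the second dilation does not spoil the first relies on the second dilated operator acting trivially on the first ancilla, but the correct operator $V_0A_1V_0^\dagger$ does not act trivially on it, so you cannot simultaneously have both identities with the tensor-product choices $\tilde{A}_0 = \id\otimes\sigma_z\otimes\id$ and $\tilde{A}_1 = \id\otimes\id\otimes\sigma_z$. The paper escapes this by giving up the tensor-product form of the first observable: after dilating $A_1' = V_1A_1V_1^\dagger$ via an isometry $V_2$, it sets $\tilde{A}_0 = V_2A_0'V_2^\dagger + V_2V_2^\dagger - \id$, i.e.\ the pushforward of the already-unitarized $A_0'$, completed by $-\id$ on the orthogonal complement of the range of $V_2$. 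A short computation using $V_2^\dagger V_2 = \id$ gives $\tilde{A}_0^2=\id$ and $V_2^\dagger\tilde{A}_0V_2 = A_0'$, so both identities hold for $V=V_2\circ V_1$ with no commutation assumption whatsoever. Your proof would be repaired by replacing $A_1\otimes\id_{\mathbb{C}^2}$ with $V_0A_1V_0^\dagger$ in the second dilation and adopting this corrected $\tilde{A}_0$ in place of $\id\otimes\sigma_z\otimes\id$.
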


\begin{proof}
    We can apply Naimark's dilation for the operator $A_0$. Therefore, there exists a Hilbert space $\mathcal{H}_1$, an isometry $V_1:\mathcal{H}\to \mathcal{H}_1$ and a unitary operator $A_0'$ s.t.~$A_0 = V_1^\dagger A_0' V_1$. Then one can define a measurement operator $A_1'= V_1 A_1 V_1^\dagger$ on $\mathcal{H}_1$, verifying $-\id \preceq A_1' \preceq \id$. One can thus apply Naimark's dilation to $A_1'$. This guarantees the existence of a Hilbert space $\mathcal{H}_2$, an isometry $V_2:\mathcal{H}_1\to \mathcal{H}_2$ and a unitary operator $\Tilde{A}_1$ s.t.~$A_1' = V_2^\dagger \Tilde{A}_1 V_2$. Let's define $\Tilde A_0 = V_2 A_0' V_2^\dagger + V_2 V_2^\dagger - \id_{\mathcal{H}_2}$. Since $V_2^\dagger V_2 = \id_{\mathcal{H}_1}$, we have $\Tilde{A}_0^2 = \id_{\mathcal{H}_2}$, so $\Tilde{A}_0$ is a unitary operator. Now, we can define $V:=V_2\circ V_1$, verifying
    \begin{equation}
         \forall x\in\{0,1\}, \ A_x = V^\dagger \Tilde{A}_x V
    \end{equation}
    where $\Tilde{A}_x$ describe projective measurements.
\end{proof}

\begin{Remark}
Naimark's dilation can be further extended to an arbitrary family of $m\geq 1$ measurements by repeating the above construction iteratively, see e.g.~\cite[Theorem 9.8]{Harris16}.
\end{Remark}

\begin{Corollary}[Naimark's dilation for the CHSH scenario] \label{cor:corr_conservation}
    Any quantum correlation in the CHSH scenario can be realized by \emph{projective} measurements, i.e.~for all $\Vec P \in \mathcal{Q}$, there exists a quantum state $\rho_{AB}$ and unitary binary measurement operators $(\Tilde{A}_x, \Tilde{B}_y)$ such that $\Vec P = \{\Tr(\rho_{AB} (\Tilde{A}_x \otimes \Tilde{B}_y))\}_{xy}$.
\end{Corollary}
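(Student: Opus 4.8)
The plan is to apply \cref{cor:Naimarkfortwo} \emph{locally} on each party and then dilate the shared state accordingly, exploiting the fact that the CHSH scenario has a fixed bipartite tensor structure. First I would start from an arbitrary realization $(\rho_{AB}, A_x, B_y)$ reproducing $\Vec P$, and apply \cref{cor:Naimarkfortwo} to Alice's pair $(A_0, A_1)$: this yields a Hilbert space $\Tilde{\mathcal{H}}_A$, an isometry $V_A:\mathcal{H}_A\to\Tilde{\mathcal{H}}_A$ and unitary operators $\Tilde{A}_x$ with $A_x = V_A^\dagger \Tilde{A}_x V_A$ for $x\in\{0,1\}$. Applying the same corollary independently to Bob's pair $(B_0,B_1)$ produces $\Tilde{\mathcal{H}}_B$, an isometry $V_B:\mathcal{H}_B\to\Tilde{\mathcal{H}}_B$ and unitaries $\Tilde{B}_y$ with $B_y = V_B^\dagger \Tilde{B}_y V_B$.

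The candidate realization is then the dilated state $\rho'_{AB} := (V_A\otimes V_B)\,\rho_{AB}\,(V_A\otimes V_B)^\dagger$ on $\Tilde{\mathcal{H}}_A\otimes\Tilde{\mathcal{H}}_B$, measured with the projective operators $\Tilde{A}_x, \Tilde{B}_y$. I would check that $\rho'_{AB}$ is a valid state: since $V_A\otimes V_B$ is an isometry (a tensor product of isometries), conjugation preserves positivity, and using $V_A^\dagger V_A=\id$, $V_B^\dagger V_B=\id$ together with the cyclicity of the trace gives $\Tr(\rho'_{AB})=\Tr(\rho_{AB}(V_A^\dagger V_A\otimes V_B^\dagger V_B))=\Tr(\rho_{AB})=1$.

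It remains to verify that all components of $\Vec P$ are reproduced. For the correlators, cyclicity of the trace yields
\begin{equation}
\Tr\!\big(\rho'_{AB}(\Tilde{A}_x\otimes\Tilde{B}_y)\big) = \Tr\!\big(\rho_{AB}\,(V_A^\dagger\Tilde{A}_x V_A\otimes V_B^\dagger\Tilde{B}_y V_B)\big) = \Tr\!\big(\rho_{AB}(A_x\otimes B_y)\big),
\end{equation}
for $x,y\in\{0,1\}$. The marginals are recovered by the same computation upon setting $\Tilde{A}_{-1}=\id$ or $\Tilde{B}_{-1}=\id$, since then $V_A^\dagger\,\id\,V_A=V_A^\dagger V_A=\id_{\mathcal{H}_A}=A_{-1}$, and likewise on Bob's side; hence the full correlation table, including the normalization $\{\Vec P\}_{-1,-1}=1$, is preserved.

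I do not expect a genuine obstacle here, as the statement is essentially a bookkeeping consequence of \cref{cor:Naimarkfortwo}. The only point requiring care is that the two dilations must be performed \emph{separately} on each subsystem, so that the global isometry factorizes as $V_A\otimes V_B$; this factorization is precisely what guarantees that the dilated measurement operators remain of the product form $\Tilde{A}_x\otimes\Tilde{B}_y$ and that the single-party marginals transform correctly through the isometry relation $V^\dagger V=\id$.
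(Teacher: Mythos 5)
Your proposal is correct and follows essentially the same route as the paper: apply \cref{cor:Naimarkfortwo} separately to each party, conjugate the state by the product isometry $V_A\otimes V_B$, and use cyclicity of the trace to verify the correlation table. Your extra checks (positivity and normalization of the dilated state, explicit treatment of the marginals via $\Tilde{A}_{-1}=\id$) are harmless additions to what the paper leaves implicit.
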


\begin{proof}
    Consider a vector $\Vec P \in \mathcal{Q}$. It admits a realization involving a shared state $\sigma_{AB}$ and binary measurement operators $(A_x, B_y)$. Using \cref{cor:Naimarkfortwo}, one can define isometries $V_{A(B)} : \mathcal{H}_{A(B)} \to \Tilde{\mathcal{H}}_{A(B)}$ such that for all $x,y$, $A_x = V_A^\dagger \Tilde{A}_x V_A$ and $B_y = V_B^\dagger \Tilde{B}_y V_B$ with $\Tilde{A}_x$ and $\Tilde{B}_y$ unitary binary measurement operators. Then the coefficients of $\Vec P$ are given by
    \begin{equation}
    \begin{split}
        \{\Vec P\}_{xy} & = \Tr(\sigma_{AB}(A_x\otimes B_y)) \\
        &= \Tr(\sigma_{AB}(V_A^\dagger \Tilde{A}_x V_A \otimes V_B^\dagger \Tilde{B}_y V_B)) \\
        & = \Tr((V_A \otimes V_B)\sigma_{AB}(V_A^\dagger \otimes V_B^\dagger)(\Tilde{A}_x \otimes \Tilde{B}_y))\\
        & = \Tr(\rho_{AB} (\Tilde{A}_x \otimes \Tilde{B}_y)),
    \end{split}
    \end{equation}
    where we defined the quantum state $\rho_{AB}:=(V_A \otimes V_B)\sigma_{AB}(V_A \otimes V_B)^\dagger$ on the Hilbert space $\Tilde{\mathcal{H}}_{A}\otimes \Tilde{\mathcal{H}}_{B}$.
\end{proof}

\begin{Remark}
Due to the previous extension, \cref{cor:corr_conservation} can be expanded to an arbitrary Bell scenario, in which $n$ parties each have a $m_i$ measurement settings and each measurement has $k_{i,j_i}$ possible outcomes, where $i=1,\ldots,n$ and $j_i=1,\ldots,m_i$.
\end{Remark}

\begin{Remark}
At this stage, one could also apply a dilation to the state $\rho_{AB}$ to show that any point $\Vec P \in \mathcal{Q}$ can be realized by measuring a \emph{pure} state with projective measurements.
\end{Remark}

\begin{Lemma}[Jordan's lemma] \label{lemma:Jordan}
For any pair of projective binary measurements $A_0, A_1$ acting on a Hilbert space $\mathcal{H}$, there exists an orthogonal basis of $\mathcal{H}$ such that $A_0$ and $A_1$ are block-diagonal with blocks of size at most two. Furthermore, in this basis, $A_0$ and $A_1$ are real operators.
\end{Lemma}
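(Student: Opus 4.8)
The plan is to exploit that, for projective binary measurements, $A_0$ and $A_1$ are Hermitian involutions: $A_x^\dagger = A_x$ and $A_x^2 = \id$, so their only eigenvalues are $\pm1$. I would first build a single operator out of the pair whose spectral subspaces are automatically orthogonal \emph{and} jointly invariant under both $A_0$ and $A_1$. The natural candidate is the anticommutator $M = A_0 A_1 + A_1 A_0$, or equivalently the unitary $U = A_0 A_1$. A direct computation using $A_x^2 = \id$ shows that $M$ is Hermitian and that $A_0 M = M A_0$ and $A_1 M = M A_1$, so every spectral subspace of $M$ is left invariant by both measurements. Since $M$ is Hermitian (equivalently $U$ unitary, hence normal), these subspaces are mutually orthogonal, which takes care of the orthogonality of the basis essentially for free.

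Next I would reduce each spectral subspace to blocks of size at most two. On the subspace where $M = \lambda\,\id$ one has $U + U^{-1} = \lambda$, so $U$ is a unitary whose eigenvalues are forced into conjugate pairs $e^{\pm i\phi}$ with $\cos\phi = \lambda/2$. The relation $A_0 U A_0 = U^{-1}$, which follows from $A_0^2=\id$, shows that $A_0$ maps the $e^{i\phi}$-eigenspace isometrically onto the $e^{-i\phi}$-eigenspace; picking an orthonormal basis $\{e_i\}$ of the former, the two-dimensional spaces $\mathrm{span}\{e_i, A_0 e_i\}$ are mutually orthogonal (the two $U$-eigenspaces are orthogonal as $U$ is normal) and each is invariant under $A_0$ and under $A_1 = A_0 U$. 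The extreme values $\lambda = \pm 2$ give $U = \pm\id$, i.e.\ $A_0$ and $A_1$ commute there and can be simultaneously diagonalized into one-dimensional blocks. Ranging over all spectral values yields an orthogonal basis in which $A_0$ and $A_1$ are simultaneously block diagonal with blocks of size at most two.

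It remains to make the operators real within each block. A one-dimensional block is a scalar $\pm1$, hence trivially real. In a two-dimensional block, $A_0$ and $A_1$ restrict to Hermitian involutions, so each is either $\pm\id$ or of the form $\hat n\cdot\vec\sigma$ for a real unit Bloch vector $\hat n$. Two such Bloch vectors always lie in a common plane, so I can apply a unitary change of orthonormal basis of the block -- acting as an $SO(3)$ rotation on Bloch vectors -- that rotates this plane onto the real $(\sigma_z,\sigma_x)$-plane, turning both $A_0$ and $A_1$ into real symmetric matrices of the form $\cos a\,\sigma_z + \sin a\,\sigma_x$. Collecting these block-wise bases produces the desired global orthogonal basis in which both measurements are block diagonal, real, with blocks of size at most two.

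The main obstacle I anticipate is the rigorous treatment of arbitrary, in particular infinite-dimensional, Hilbert spaces: there $M$ and $U$ may have continuous spectrum, so the ``eigenspaces'' used above must be replaced by the spectral projections of the associated projection-valued measure, and the neat pairing argument becomes a direct-integral decomposition of $\mathcal{H}$ into one- and two-dimensional fibres on which $A_0$ and $A_1$ act measurably. The algebraic content of the pairing and of the reality step is unchanged, but one must additionally verify measurability of the block-wise bases. Since the application to extremal points ultimately retains a single two-dimensional fibre, the direct-integral form is sufficient, and in finite dimensions the argument above applies verbatim.
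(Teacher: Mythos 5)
Your proof is correct, but it takes a genuinely different route from the paper's. The paper works with the Hermitian \emph{sum} $R = A_0 + A_1$: for any eigenvector $\ket{\psi}$ of $R$, the subspace $\mathrm{span}\{\ket{\psi}, A_0\ket{\psi}\}$ has dimension at most two and is invariant under both measurements (since $A_1\ket{\psi} = (R - A_0)\ket{\psi}$), and reality comes essentially for free because every vector produced is a real-coefficient combination of $\ket{\psi}$ and $A_0\ket{\psi}$; the global basis is then assembled by recursion, picking at each step an eigenvector of $R$ orthogonal to the previously constructed blocks. You instead work with the unitary \emph{product} $U = A_0A_1$ (equivalently the anticommutator $M = U + U^{\dagger}$), use that $M$ commutes with both $A_x$ to get orthogonal invariant spectral subspaces, and pair the conjugate eigenvalues $e^{\pm \ii\phi}$ of $U$ through the intertwining relation $A_0 U A_0 = U^{-1}$; reality then requires a separate block-wise Bloch-sphere rotation. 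Both are standard, valid proofs of Jordan's lemma, and your algebra checks out (in particular $[M,A_x]=0$, the isometric pairing of the two $U$-eigenspaces, and the degenerate case $\lambda=\pm2$ forcing $U=\pm\id$ by normality). What your version buys: the blocks are labelled by the spectrum of $M$, so the relative angle between $A_0$ and $A_1$ inside each block is explicit ($\cos\phi = \lambda/2$, which connects nicely to the parametrization in \cref{prop:qubitrealization}), and starting from spectral projections is the natural entry point for an infinite-dimensional extension -- a caveat you correctly flag, and which applies equally to the paper's eigenvector recursion, since neither $M$ nor $R$ need have point spectrum in infinite dimension. What the paper's version buys: it is more elementary (one eigenvector at a time, no spectral theorem for unitaries, no pairing argument), and the real structure of the blocks emerges automatically rather than via a separate rotation step.
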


\begin{proof}
    Consider the hermitian operator $R = A_0 + A_1$. For any eigenvector $\ket{\psi}$ of $R$ of eigenvalue $\lambda \in \mathbb{R}$, one can consider the subspace $V_{\psi} = \text{Vect}\langle \ket{\psi}, A_0 \ket{\psi} \rangle$. Then, one of the two following cases happens: 
    \begin{enumerate}
        \item[Case 1 :] $\dim V_\psi = 1$. Then there exist $\mu \in \mathbb{C}$ s.t.~$A_0 \ket{\psi} = \mu \ket{\psi}$. As $A_0$ is hermitian, $\mu$ has to be real. Then, $A_1 \ket{\psi} = R\ket{\psi} - A_0 \ket{\psi}=(\lambda-\mu) \ket{\psi}$. So $\ket{\psi}$ is an eigenvector of both $A_0$ and $A_1$ of real eigenvalue (block of size $1$). \\
        \item[Case 2 :] $\dim V_\psi = 2$. Then, since $A_0^2=\id$, $V_\psi$ is stabilized by $A_0$ and in this subspace $A_0|_{V_\psi} = \sigma_x$. Furthermore, $A_1 \ket{\psi} = \lambda \ket{\psi} - A_0 \ket{\psi}$ belongs to $V_\psi$ and since $A_1^2 = \id$, we have that $A_1(A_0\ket{\psi}) = \lambda A_1 \ket{\psi} - \ket{\psi}$ also belongs to $V_\psi$. Finally, $V_\psi$ is stabilized by both $A_0$ and $A_1$ (block of size 2) and all have real coefficients in the sub-basis.
    \end{enumerate}
    The rest of the proof goes on finite recursion on the eigenvectors of $R$. At each step, one has to find an eigenvector $\ket{\psi'}$ orthogonal to all previous subspaces, which is always possible since $R$ is hermitian.
\end{proof}

\begin{Corollary}[Jordan's lemma for the CHSH scenario]\label{cor:JordanBell}
    Every quantum correlations in the CHSH scenario can be obtained as a convex combination of realizations involving local Hilbert spaces of dimension $2$ and real projective measurements.
\end{Corollary}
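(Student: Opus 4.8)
The plan is to chain the two immediately preceding results—Naimark dilation in the form of \cref{cor:corr_conservation} and Jordan's lemma \cref{lemma:Jordan}—and then convert the block structure they produce into an explicit convex decomposition of the correlation vector. First I would fix an arbitrary $\vec P\in\mathcal Q$ and invoke \cref{cor:corr_conservation} to obtain a state $\rho_{AB}$ together with \emph{unitary} (hence projective) binary operators $\tilde A_x,\tilde B_y$ such that $\{\vec P\}_{xy}=\Tr(\rho_{AB}(\tilde A_x\otimes\tilde B_y))$. Applying \cref{lemma:Jordan} to the pair $(\tilde A_0,\tilde A_1)$ furnishes an orthogonal basis of $\mathcal H_A$ in which both operators are real and block-diagonal with blocks of size at most two, and likewise for $(\tilde B_0,\tilde B_1)$ on $\mathcal H_B$. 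Let $\Pi^A_i$ and $\Pi^B_j$ be the orthogonal projectors onto these blocks, so that $\sum_i\Pi^A_i=\id_{\mathcal H_A}$, $\sum_j\Pi^B_j=\id_{\mathcal H_B}$, and—crucially—$[\tilde A_x,\Pi^A_i]=0$ and $[\tilde B_y,\Pi^B_j]=0$ for all indices.

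The core step is to exploit this commutation. Since $\tilde A_x\otimes\tilde B_y$ commutes with the idempotent $\Pi^A_i\otimes\Pi^B_j$, one has the identity $\Tr(\rho_{AB}(\tilde A_x\otimes\tilde B_y)(\Pi^A_i\otimes\Pi^B_j))=\Tr\big((\Pi^A_i\otimes\Pi^B_j)\rho_{AB}(\Pi^A_i\otimes\Pi^B_j)(\tilde A_x\otimes\tilde B_y)\big)$. Summing over $i,j$ and using $\sum_{ij}\Pi^A_i\otimes\Pi^B_j=\id$ then gives
\begin{equation}
\{\vec P\}_{xy}=\sum_{ij}p_{ij}\,\Tr\big(\rho_{ij}(\tilde A_x\otimes\tilde B_y)\big),
\end{equation}
where $p_{ij}=\Tr((\Pi^A_i\otimes\Pi^B_j)\rho_{AB})\ge 0$ satisfies $\sum_{ij}p_{ij}=1$, and $\rho_{ij}=(\Pi^A_i\otimes\Pi^B_j)\rho_{AB}(\Pi^A_i\otimes\Pi^B_j)/p_{ij}$ is a normalized state (terms with $p_{ij}=0$ are discarded). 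Each $\rho_{ij}$ is supported on $\operatorname{ran}\Pi^A_i\otimes\operatorname{ran}\Pi^B_j$, a local Hilbert space of dimension at most two, and the restrictions $\tilde A_x|_{\operatorname{ran}\Pi^A_i}$, $\tilde B_y|_{\operatorname{ran}\Pi^B_j}$ are real projective measurements by \cref{lemma:Jordan}. This exhibits $\vec P=\sum_{ij}p_{ij}\vec P_{ij}$ as a convex combination of correlations of exactly the claimed form.

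Finally I would dispose of two routine points. A block of size one produces a one-dimensional local factor; this embeds isometrically into $\mathbb C^2$ with the scalar value $\pm1$ extending to a real projective operator, so every term can be presented on local dimension exactly two without altering its statistics. I expect the main obstacle to be purely the bookkeeping of the two \emph{independent} block decompositions on Alice's and Bob's sides and the verification that the correlators factor through the diagonal blocks only; once the commutations $[\tilde A_x,\Pi^A_i]=[\tilde B_y,\Pi^B_j]=0$ are in hand, this is immediate, since the off-diagonal coherences of $\rho_{AB}$ never contribute to $\Tr(\rho_{AB}(\tilde A_x\otimes\tilde B_y))$.
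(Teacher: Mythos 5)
Your proposal is correct and follows essentially the same route as the paper: Naimark's dilation (\cref{cor:corr_conservation}) followed by Jordan's lemma (\cref{lemma:Jordan}), with the block projectors $\Pi^A_i\otimes\Pi^B_j$ turning the trace into the convex combination $\vec P=\sum_{ij}p_{ij}\vec P_{ij}$; your explicit commutation/cyclicity identity is exactly the step the paper performs implicitly when passing from $\bigoplus_i A_x^{(i)}\otimes\bigoplus_j B_y^{(j)}$ to the pinched states $\rho_{AB}^{ij}$. The only cosmetic difference is that you absorb size-one Jordan blocks by embedding them into $\mathbb{C}^2$ afterwards, whereas the paper pads the Hilbert spaces to even dimension beforehand via a trivial dilation.
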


\begin{proof}
    For $\Vec P\in \mathcal{Q}$, we consider a projective realization thanks to \cref{cor:corr_conservation}. Furthermore, one can assume, up to trivial dilation, that the dimensions of $\mathcal{H}_{A(B)}$ are even. We can then apply Jordan's lemma for each party and assume that all block are of size 2. Thus, there exists local projections $\Pi_A^i$, $\Pi_B^j$ on $\mathcal{H}_{A(B)}$ respectively such that $A_x^{(i)} := \Pi_A^i A_x \Pi_A^i$ are (2,2)-real matrices and $A_x = \bigoplus_i A_x^{(i)}$, likewise for Bob. As such
    \begin{equation}
    \begin{split}
        \{\Vec P\}_{xy} & = \Tr(\rho_{AB}(A_x \otimes B_y)) \\
        & = \Tr(\rho_{AB}(\bigoplus_i A_x^{(i)} \otimes \bigoplus_j B_y^j)) \\
        & = \sum_{ij} \Tr((\Pi_A^i \otimes \Pi_B^j) \rho_{AB} (\Pi_A^i \otimes \Pi_B^j)(A_x^{(i)} \otimes B_y^j)) \\
        & = \sum_{ij} p^{ij} \{\Vec P_{ij}\}_{xy},
    \end{split}
    \end{equation}
    where we defined $p^{ij}:=\Tr((\Pi_A^i \otimes \Pi_B^j) \rho_{AB} (\Pi_A^i \otimes \Pi_B^j))$, $\{\Vec P_{ij}\}_{xy}=\Tr(\rho_{AB}^{ij}(A_x^{(i)} \otimes B_y^j))$ and $\rho_{AB}^{ij}:=(\Pi_A^i \otimes \Pi_B^j) \rho_{AB} (\Pi_A^i \otimes \Pi_B^j)/p^{ij}$, which verify $\Tr(\rho_{AB}^{ij})=1$, $\rho_{AB}^{ij} \succeq 0$, $p^{ij} \geq 0$ and $\sum_{ij} p^{ij}=1$. 
\end{proof}

\begin{Remark}
\cref{cor:JordanBell} generalizes straightforwardly to Bell scenarios involving $n$ parties with two binary measurements per party.
\end{Remark}

\begin{Corollary}[Jordan's lemma for extremal correlations in the CHSH scenario]\label{cor:JordanBellExtr}
    Every extremal quantum correlations in the CHSH scenario can be realized by measuring a single two-qubit state with real projective measurements.
\end{Corollary}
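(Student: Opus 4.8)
The plan is to combine the two preceding corollaries. \cref{cor:JordanBellExtr} asks for a single realization on a two-qubit state, whereas \cref{cor:JordanBell} only delivers a convex combination of two-qubit realizations. The bridge between them is extremality. First I would take an extremal point $\Vec P \in \text{Ext}(\mathcal{Q})$ and apply \cref{cor:JordanBell} to write it as a convex combination
\begin{equation}
\Vec P = \sum_{ij} p^{ij}\, \Vec P_{ij},
\end{equation}
where each $\Vec P_{ij}$ is itself a quantum correlation arising from a local-dimension-2 realization with real projective measurements, and the weights satisfy $p^{ij}\geq 0$, $\sum_{ij} p^{ij}=1$.

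The key observation is that each $\Vec P_{ij}$ lies in $\mathcal{Q}$, since it is a bona fide quantum correlation (measuring the normalized block state $\rho_{AB}^{ij}$ with the block measurements $A_x^{(i)}$, $B_y^j$). Therefore the displayed equation expresses the extremal point $\Vec P$ as a convex combination of points of $\mathcal{Q}$. By the definition of an extremal point of a convex set, this forces every $\Vec P_{ij}$ appearing with nonzero weight $p^{ij}>0$ to coincide with $\Vec P$ itself. Picking any single index pair $(i,j)$ with $p^{ij}>0$ then yields a realization of $\Vec P$ on the two-qubit state $\rho_{AB}^{ij}\in L(\mathbb{C}^2\otimes\mathbb{C}^2)$ with the real projective measurements $A_x^{(i)}$, $B_y^j$, which is exactly the claim.

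I do not anticipate a genuine obstacle here, as the argument is the standard principle that extremal points survive intact under any convex decomposition into points of the same set. The only care needed is bookkeeping: one must check that each block correlation $\Vec P_{ij}$ is indeed an element of $\mathcal{Q}$ and not merely a formal vector, which follows immediately because $\rho_{AB}^{ij}$ is a normalized density operator on a two-qubit space and $A_x^{(i)}$, $B_y^j$ are legitimate (real, projective) binary measurement operators by the construction in \cref{cor:JordanBell}. Thus the single-realization conclusion is obtained directly, with the reality and projectivity of the measurements inherited verbatim from the Jordan decomposition.
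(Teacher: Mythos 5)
Your proof is correct and follows essentially the same route as the paper: apply \cref{cor:JordanBell} to decompose the extremal point into block correlations $\Vec P_{ij}$, invoke extremality to force $\Vec P_{ij} = \Vec P$, and select any block realization. Your added care in restricting to blocks with $p^{ij}>0$ is a minor (and welcome) refinement of the paper's statement that all $\Vec P_{ij}$ equal $\Vec P$.
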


\begin{proof}
    For an extremal point $\Vec P \in \text{Ext}(\mathcal{Q})$, we can apply the previous result to decompose $\Vec P$ as a convex combination of correlations $\Vec P_{ij}$ obtained by measuring two qubit states with real unitary operators. The extremality of $\Vec P$ ensures that for all $i,j$, $\Vec P_{ij} = \Vec P$ and as such any qubit realization of a given $\Vec P_{ij}$ provides a satisfying realization for $\Vec P$. 
\end{proof}

\begin{Proposition}[Extremal points realization in the CHSH scenario] \label{prop:realqubitreduction}
    Every extremal quantum correlations in the CHSH scenario can be realized by measuring a real, pure, two-qubit state with real projective measurements.
\end{Proposition}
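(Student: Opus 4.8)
The plan is to start from the realization guaranteed by \cref{cor:JordanBellExtr}, namely a single two-qubit state $\rho_{AB}$ together with real projective measurement operators $A_x, B_y$ (i.e.~$2\times 2$ real symmetric matrices squaring to $\id$), and to upgrade the state to a real \emph{and} pure one in two successive reduction steps, invoking extremality at the second step to discard the unwanted components.

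First I would make the state real. Writing $\rho_{AB} = \rho_R + \ii\, \rho_I$ with $\rho_R,\rho_I$ real, hermiticity of $\rho_{AB}$ forces $\rho_R$ to be symmetric and $\rho_I$ antisymmetric. Since each operator appearing in a component of $\Vec P$, namely $A_x\otimes \id$, $\id\otimes B_y$, $A_x\otimes B_y$ and $\id\otimes\id$, is a real \emph{symmetric} matrix, the trace of its product with the antisymmetric matrix $\rho_I$ vanishes. Hence $\Tr(\rho_{AB}(A_x\otimes B_y)) = \Tr(\rho_R(A_x\otimes B_y))$, and likewise for every other component, so $\rho_R$ reproduces the same point $\Vec P$ with the same measurements. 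It then remains to check that $\rho_R = (\rho_{AB}+\rho_{AB}^T)/2$ is a legitimate density matrix: it is real symmetric, of unit trace, and positive semidefinite as the average of the two positive operators $\rho_{AB}$ and $\rho_{AB}^T$, transposition preserving the spectrum.

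Then I would make the state pure. Being real symmetric and positive semidefinite, $\rho_R$ admits a spectral decomposition $\rho_R = \sum_k p_k \ketbra{\psi_k}{\psi_k}$ with $p_k\geq 0$, $\sum_k p_k = 1$ and \emph{real} orthonormal two-qubit eigenvectors $\ket{\psi_k}$. Each $\ket{\psi_k}$ measured with $A_x,B_y$ yields a quantum point $\Vec P_k \in \mathcal{Q}$, and by linearity of Born's rule $\Vec P = \sum_k p_k \Vec P_k$ is a convex combination within $\mathcal{Q}$. Extremality of $\Vec P$ then forces $\Vec P_k = \Vec P$ for every $k$ with $p_k>0$; choosing any such $k$ exhibits $\Vec P$ as the statistics produced by measuring the real pure two-qubit state $\ket{\psi_k}$ with the real projective measurements $A_x,B_y$, which is exactly the claim.

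The argument is essentially routine, the only point requiring a little care being the positivity of the real part $\rho_R$, which I would establish through the spectrum-preserving property of the transpose rather than by a direct computation on matrix entries. I would also stress that the order of the two steps matters: one cannot first purify and then take the real part, since the real part of a pure-state projector is generally mixed, so taking the real part \emph{before} purifying is what makes the reduction go through.
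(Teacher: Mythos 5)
Your proof is correct and follows essentially the same route as the paper's: first replace $\rho_{AB}$ by its real part $(\rho_{AB}+\bar\rho_{AB})/2$ (which leaves the statistics unchanged because the measurement operators are real), then spectrally decompose this real state into real pure states and invoke extremality to select one of them. Your write-up merely fills in details the paper leaves implicit, namely the symmetric/antisymmetric trace argument and the positivity of the real part.
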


\begin{proof}
    For $\Vec P \in \text{Ext}(\mathcal{Q})$, one can apply \cref{cor:JordanBellExtr} to obtain a realization of the form $(\rho_{AB}, A_x, B_y)$ with $A_x, B_y$ real unitary operators and local Hilbert spaces of dimension 2. If $\rho_{AB}$ is not equal to its complex conjugate $\Bar{\rho}_{AB}$, then one can also realize $\Vec P$ by measuring the real state $\Tilde \rho_{AB} = (\rho_{AB}+\Bar{\rho}_{AB})/2$, because all measurement operators are real. Therefore, we can assume that $\rho_{AB}$ is real. A real density matrix $\rho_{AB}$ can always be written as
    \begin{equation}
        \rho_{AB} = \sum_i \lambda_i \ketbra{\psi_i}
    \end{equation}
    with $\lambda_i \in \mathbb{R}_+$, $\sum_i \lambda_i = 1$, and where $\ket{\psi_i}$ are real two-qubit states. Therefore, one has $\Vec P = \sum_i \lambda_i \Vec P_i$, where $\Vec P_i$ is obtained by measuring $\ketbra{\psi_i}$ on $(A_x, B_y)$. The extremality of $\Vec P$ ensures that for all $i$, $\Vec P_i = \Vec P$, and therefore the realization for any $i$ is a valid realization for $\Vec P$.
\end{proof}

\begin{Remark}
An alternative proof of \cref{prop:realqubitreduction} was recently given in \cite[Appendix A]{Mikosnuszkiewicz23}.
\end{Remark}

\begin{Remark}
For the study of extremal point only, one does not necessarily need Naimark's dilation to impose that the measurement operators are unitary (see \cite[Lemma 1]{Masanes05} for an alternative argument). 
\end{Remark}

\begin{Remark}
Corollaries \ref{cor:JordanBell}, \ref{cor:JordanBellExtr} and \cref{prop:realqubitreduction} can be directly extended to Bell scenarios in which $n$ parties each have two binary outcome measurements, proving that in those scenarios all extremal quantum correlations can be achieved by measuring real $n$-qubit states with real unitary measurements.
\end{Remark}

\begin{Lemma} [Extremality under relabelings] \label{lemma:symmetries}
    Consider a convex set $\mathcal{K}\subset \mathbb{R}^n$ stabilized by a linear involution $S:\mathbb{R}^n \to \mathbb{R}^n$, i.e.~such that $S^2 = 1$ and $S(\mathcal{K}) = \mathcal{K}$. Then, any point $\Vec P \in \mathcal{K}$ is extremal iff $S(\Vec P)$ is extremal. 
\end{Lemma}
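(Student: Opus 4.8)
The statement is that if $S:\mathbb{R}^n\to\mathbb{R}^n$ is a linear involution stabilizing a convex set $\mathcal{K}$, then $\Vec P\in\mathcal{K}$ is extremal if and only if $S(\Vec P)$ is extremal. The plan is to prove this by contraposition, exploiting the fact that $S$ is an affine (indeed linear) bijection on $\mathbb{R}^n$ that maps $\mathcal{K}$ onto itself, so it must preserve convex structure. Since the claim is symmetric in $\Vec P$ and $S(\Vec P)$ — because $S^2=\mathbf{1}$ means $S$ is its own inverse — it suffices to prove only one direction: if $S(\Vec P)$ is not extremal, then $\Vec P$ is not extremal. Applying that single implication to $S(\Vec P)$ in place of $\Vec P$ and using $S(S(\Vec P))=\Vec P$ then yields the reverse implication for free.

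Concretely, I would proceed as follows. First, recall that a point $\Vec Q\in\mathcal{K}$ fails to be extremal precisely when there exist $\Vec Q_1,\Vec Q_2\in\mathcal{K}$ with $\Vec Q_1\neq\Vec Q_2$ and a weight $\lambda\in(0,1)$ such that $\Vec Q=\lambda\Vec Q_1+(1-\lambda)\Vec Q_2$. So assume $S(\Vec P)$ is non-extremal and write such a decomposition $S(\Vec P)=\lambda\Vec Q_1+(1-\lambda)\Vec Q_2$. Next I would apply $S$ to both sides. Because $S$ is linear, it commutes with the convex combination, giving
\begin{equation}
\Vec P = S(S(\Vec P)) = S\bigl(\lambda\Vec Q_1+(1-\lambda)\Vec Q_2\bigr) = \lambda\, S(\Vec Q_1)+(1-\lambda)\, S(\Vec Q_2),
\end{equation}
where I used $S^2=\mathbf{1}$ on the left. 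Since $S(\mathcal{K})=\mathcal{K}$ and $\Vec Q_1,\Vec Q_2\in\mathcal{K}$, the images $S(\Vec Q_1),S(\Vec Q_2)$ again lie in $\mathcal{K}$. Finally, because $S$ is an involution it is injective (indeed bijective), so $\Vec Q_1\neq\Vec Q_2$ forces $S(\Vec Q_1)\neq S(\Vec Q_2)$. Thus $\Vec P$ is written as a nontrivial convex combination of two distinct points of $\mathcal{K}$, i.e.~$\Vec P$ is non-extremal.

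This completes the contrapositive of one direction; the other direction follows by the symmetry argument above, replacing $\Vec P$ by $S(\Vec P)$. There is no real obstacle here: the only points that must not be skipped are the linearity of $S$ (to push it through the convex combination), the stabilization $S(\mathcal{K})=\mathcal{K}$ (to keep the images inside $\mathcal{K}$), and the injectivity coming from $S^2=\mathbf{1}$ (to preserve distinctness of $\Vec Q_1,\Vec Q_2$). Each of the three hypotheses on $S$ is used exactly once, which is a good sanity check that the statement is tight.
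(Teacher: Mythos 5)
Your proposal is correct and follows essentially the same route as the paper: both apply the linear involution $S$ to a convex decomposition of $S(\Vec P)$, use $S(\mathcal{K})=\mathcal{K}$ to stay inside the set, and obtain the converse direction for free from $S^2=\mathbf{1}$. The only cosmetic difference is that you phrase the implication as a contrapositive (non-extremality propagates) while the paper argues directly (extremality propagates), which is logically the same argument.
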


\begin{proof}
    Suppose that $\Vec P$ is extremal. For any convex decomposition $S(\Vec P) = \lambda \Vec P_1  + (1-\lambda) \Vec P_2$ of $S(\Vec P)$, we have $\Vec P = \lambda S(\Vec P_1)  + (1-\lambda) S(\Vec P_2)$, because $S$ is linear and is an involution. By extremality of $\Vec P$, we have $\Vec P = S(\Vec P_1) = S(\Vec P_2)$, and as such $S(\Vec P) = \Vec P_1 = \Vec P_2$ so $S(\Vec P)$ is extremal. Conversely, one can use that $\Vec P = S(S(\Vec P))$. 
\end{proof}

\begin{Remark}
    The quantum set is invariant under the following involutions:
    \begin{subequations}
        \begin{align}
            \text{relabeling of parties: }& S_{part} : A_x \longleftrightarrow B_x, \\
            \text{relabeling of inputs: }& S_{in} : A_0 \longleftrightarrow A_1, \\
            \text{relabeling of outputs: }& S_{out} : A_0 \longrightarrow -A_0.
        \end{align}
    \end{subequations}
    As such, the study of whether a quantum correlation $\Vec P$ is extremal can be reduced to the study of the extremality of any correlations of the form $g\cdot \Vec P$ where $g$ is an arbitrary element of the Bell group $G$ generated by $S_{part},S_{in}, S_{out}$ (a group of order 32). 
\end{Remark}

\begin{Proposition} [Sufficient parametrization of extremal points in the CHSH scenario] \label{prop:qubitrealization}
    The study of the extremal points of $\mathcal{Q}$ can be reduced to the study of the extremality of correlations admitting a realization of the form:
    \begin{subequations} \label{eq:qubitrealization}
        \begin{align}
            \text{state: }& \ket{\phi_\theta} = \cos(\theta)\ket{00} + \sin(\theta)\ket{11}, \\
            \text{measurements: }& A_x = \cos(a_x) \sigma_z + \sin(a_x) \sigma_x, \\
            & B_y = \cos(b_y) \sigma_z + \sin(b_y) \sigma_x.
        \end{align}
    \end{subequations}
    where the real parameters $\theta, a_x, b_y$ can be reduced, upon relabeling symmetries, to the range
    \begin{equation}
        \theta, a_x, b_y \in [0,\pi), \ a_0 \leq a_1, \ b_0 \leq b_1, \ a_0 \leq b_0. 
    \end{equation}
\end{Proposition}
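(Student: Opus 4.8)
The plan is to upgrade the qualitative content of \cref{prop:realqubitreduction}, which guarantees that every extremal point $\Vec P \in \text{Ext}(\mathcal{Q})$ is produced by a real, pure, two-qubit state together with real projective measurements, into the explicit parametrization \cref{eq:qubitrealization}, and then to cut down the parameter range using the relabeling symmetries via \cref{lemma:symmetries}. The first step is to make the measurement form explicit. A real projective binary measurement is a real symmetric operator $A$ with $A^2=\id$, so its eigenvalues lie in $\{\pm 1\}$. Since the real symmetric $2\times 2$ matrices are spanned by $\id,\sigma_z,\sigma_x$ (the purely imaginary $\sigma_y$ being forbidden by reality), any traceless such $A$ is of the form $\cos a\,\sigma_z+\sin a\,\sigma_x$, the involution condition being equivalent to the unit-norm constraint $\cos^2 a+\sin^2 a=1$. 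The only non-traceless possibilities are the trivial measurements $A=\pm\id$; a realization containing one of these yields a point with a marginal equal to $\pm 1$, which is necessarily local, so these can be set aside and treated with the deterministic points, and we may assume all four operators are genuine involutions of the stated planar form.

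Second, I would bring the state to Schmidt form. Any real pure two-qubit state equals $\cos\theta\ket{00}+\sin\theta\ket{11}$ after a local real orthogonal change of basis $O_A\otimes O_B$. The essential observation is that this is compatible with the measurement parametrization: conjugation $A_x\mapsto O_A A_x O_A^{\top}$ sends a real symmetric traceless involution to another one, hence simply rotates the angle $a_x$ inside the $(\sigma_z,\sigma_x)$-plane, and likewise for Bob. Thus the whole realization can be written simultaneously in the form \cref{eq:qubitrealization} with $\theta,a_x,b_y$ real; the Schmidt convention already delivers $\theta\in[0,\pi/4]$, which lies in the claimed range $[0,\pi)$.

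Third comes the range reduction, carried out with \cref{lemma:symmetries} and the three generating involutions $S_{part}$, $S_{in}$, $S_{out}$. A measurement angle is defined only modulo $2\pi$, and the shift $a\mapsto a+\pi$ realizes $A\mapsto-A$, i.e.~the output relabeling $S_{out}$. Composing $S_{out}$ with $S_{in}$ and $S_{part}$ flips the sign of any single chosen measurement, so I would first fold all four angles into $[0,\pi)$. Because the remaining reductions only permute angles, they preserve this membership: the input relabelings $S_{in}$ order the two angles of each party, giving $a_0\le a_1$ and $b_0\le b_1$, and the party relabeling $S_{part}$, which exchanges the ordered pairs $(a_0,a_1)\leftrightarrow(b_0,b_1)$ without disturbing their internal order, enforces $a_0\le b_0$. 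Applying \cref{lemma:symmetries} at every step shows that the resulting representative is extremal iff $\Vec P$ is, so the reduced family is indeed sufficient.

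The conceptual core is thus straightforward once the planar form is in place; the points that require genuine care are the degenerate cases. I expect the main obstacle to be the clean treatment of trivial ($\pm\id$) measurements and of the boundary configurations in which two angles coincide, since these fall outside the interior of the parametrization and must be shown not to produce extremal points missed by the construction, together with the verification that the chosen order of symmetry operations (fold signs, then order) attains all three inequalities of the range simultaneously for every orbit of the order-$32$ Bell group.
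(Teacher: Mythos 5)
Your proposal is correct and follows essentially the same route as the paper's proof: invoke \cref{prop:realqubitreduction} to get a real, pure, two-qubit realization with real projective measurements, identify this with the planar parametrization \cref{eq:qubitrealization}, and then use \cref{lemma:symmetries} with the outcome, input, and party relabelings to fold the angles into $[0,\pi)$ and impose the orderings $a_0\le a_1$, $b_0\le b_1$, $a_0\le b_0$. Your treatment is in fact slightly more careful than the paper's terse argument, since you explicitly justify the traceless planar form of real projective qubit measurements and dispose of the degenerate $\pm\id$ case (whose points are local, hence covered by the deterministic points), a step the paper leaves implicit.
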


\begin{proof}
    Thanks to \cref{prop:realqubitreduction}, all extremal points of $\mathcal{Q}$ admit a realization of the form \cref{eq:qubitrealization} for arbitrary real parameters $\theta, a_x, b_y$. The periodicity of the corresponding correlations allows reducing those to the interval $\theta \in [0,\pi)$ and $a_x, b_y \in [0,2\pi)$. Using \cref{lemma:symmetries}, one can study the nature of such correlations by applying arbitrary symmetries of $\mathcal{Q}$. In particular, one can use relabeling of outcomes to ensure that $a_x, b_y \in [0,\pi)$, relabeling of inputs to have $a_0 \leq a_1$ and $b_0 \leq b_1$ and relabeling of parties for $a_0 \leq b_0$. 
\end{proof}

\section{Steering of quantum realizations} \label{sec:steering}

In this appendix, we describe in more details the steering map introduced in \cref{fig:nonlinearMap} of the main text. In the first section, we consider its action on vectors of $\mathbb{C}^2$. In the second section, we discuss the action of the map on qubit measurements and relate the statistics before and after the transformation. In both cases, our analysis applies to arbitrary qubit states and measurement settings (i.e.~not necessarily real ones).

\subsection{Steering of states} \label{sec:statesteering}

The non-linear steering map $T_{\theta}:\mathbb{C}^2\to \mathbb{C}^2$ is defined by the action
\begin{equation} \label{eq:steeringtrans}
T_{\theta}:\ket{u}\mapsto \frac{\sqrt{2} (\braket{\phi_\theta}{u})\ket{\phi^+}}{||\braket{\phi_\theta}{u}||}
\end{equation}
on state $\ket{u} \in \mathcal{H}_A=\mathbb{C}^2$. In particular, in the canonical parametrization
\begin{equation}\label{eq:uVector}
\ket{u}=\cos(\frac{a}2) \ket{0} + e^{i\psi} \sin(\frac{a}2) \ket{1},
\end{equation}
the steering transformation returns
\begin{equation} \label{eq:explicitsteering}
    T_\theta(\ket{u}) = \frac{\cos(\theta) \cos(\frac{a}2) \ket{0} + e^{i\psi} \sin(\theta) \sin(\frac{a}2) \ket{1}}{\sqrt{\cos(\theta)^2 \cos(\frac{a}2)^2 + \sin(\theta)^2 \sin(\frac{a}2)^2}} = \epsilon_\theta\left(\cos(\frac{\Tilde{a}}2) \ket{0} + e^{i\psi}\sin(\frac{\Tilde{a}}2) \ket{1}\right)
\end{equation}
where $\epsilon_\theta := \text{sgn}(\cos(\theta))$ and $\Tilde{\alpha}$ is the transformed state angle. Note that the denominator can only cancel if $\theta \in \{0, \pi/2\}$ and $a=2\theta \pm\pi$. Whenever this happens, we define $T_\theta(\ket{u})$ to be the null vector (since this only happens when $\ket{\phi_\theta}$ is a product state, we won't be interested in this case). The angle of the image state in the canonical basis is given by 
\begin{equation}
    \Tilde{\alpha}:= 2 \operatorname{atan}(\tan(\frac{a}2)\tan(\theta)) \in [-\pi, \pi].
\end{equation}

Note that in all cases the image vectors satisfy $\langle \phi_\theta | u \rangle = \sqrt{2} ||\braket{\phi_\theta}{u}|| \langle \phi^+ | T_\theta(\ket{u}) \rangle $, and as such
\begin{equation} \label{eq:correlationtransform}
    \begin{split}
        \forall B \in L(\mathcal{H}_B), \ \bra{\phi_\theta} \left(\ketbra{u} \otimes B \right)\ket{\phi_\theta} =  {2 ||\braket{\phi_\theta}{u}||^2} \bra{\phi^+} \left(\ketbra{v} \otimes B \right) \ket{\phi^+},
    \end{split}
\end{equation}
where for clarity we denoted $\ket{v}=T_\theta(\ket{u})$. Finally, one can notice that the inverse of the steering function is easy to compute. Indeed, we have:
\begin{equation}
    T_{\pi/2 - \theta} (T_\theta(\ket{u})) = \ket{u}.
\end{equation}

\subsection{Steering of correlations and realizations} \label{sec:realizationsteering}
When considering binary non-degenerate projective measurements acting on a pure state with local Hilbert spaces of dimension 2, the measured state can always be written as $\ket{\phi_\theta}$ in the Schmidt basis. In this basis, the projective measurements can be written as
\begin{equation}
    \begin{split}
        & A_x = \cos(a_x) \sigma_z + \sin(a_x) (\cos(\alpha_x) \sigma_x + \sin(\alpha_x) \sigma_y), \\
        & B_y = \cos(b_y) \sigma_z + \sin(b_y) (\cos(\beta_y) \sigma_x + \sin(\beta_y) \sigma_y),
    \end{split}
\end{equation}
and can be decomposed as the difference of two orthogonal projectors:
\begin{equation}
    \begin{split}
        & A_x = \ketbra{u_{+,x}} - \ketbra{u_{-,x}}, \\
        & B_y = \ketbra{w_{+,y}} - \ketbra{w_{-,y}},
    \end{split}
\end{equation}
where $\braket{u_{\a,x}}{u_{\a',x}}=\delta_{\a \a'}$, $\braket{w_{\b,x}}{w_{\b',x}}=\delta_{\b \b'}$ and $\ket{u_{\a,x}}, \ket{w_{\b,y}} \in \mathbb{C}^2$ are in general complex vectors of the form \cref{eq:uVector} with $a\in(-\pi,\pi]$. 

The expected probabilities $p(\a\b|xy) = \frac{1}{4}(1+\a \cor{A_x}+\b\cor{B_y}+\a \b \cor{A_x B_y})$ corresponding to this realization are given by
\begin{equation}
    p(\a\b|xy) = \bra{\phi_\theta} \left(\ketbra{u_{\a,x}} \otimes \ketbra{w_{\b,y}} \right) \ket{\phi_\theta}.
\end{equation}
Applying the steering map on the eigenstates $\ket{u_{\a,x}}$ of Alice's measurements and using \cref{eq:correlationtransform}, we obtain:
\begin{equation}
    p(\a\b|xy) = {2p(\a|x)}\bra{\phi^+} \left(\ketbra{v_{\a,x}} \otimes \ketbra{w_{\b,y}} \right) \ket{\phi^+},
\end{equation}
where $\ket{v_{\a,x}}=T_\theta(\ket{u_{\a,x}})$ for all $\a,x$.

Note that the vectors $\ket{v_{\a,x}}$ are units or null but in full generality $\ket{v_{1,x}}$ and $\ket{v_{-1,x}}$ are not orthogonal anymore. We thus define a new unitary operator on Alice's Hilbert space for each vector $\ket{v_{\a,x}}$:
\begin{equation} \label{eq:newoperators}
    \Tilde{A}^{\a}_x = \a \left(2\ket{v_{\a,x}}\bra{v_{\a,x}} - \id\right).
\end{equation}
These new operators verify that for all $\a,x,y$: 
\begin{equation} \label{eq:newoperatorsdef}
    \frac{\cor{ \phi_\theta | A_x B_y | \phi_\theta} + \a \cor{ \phi_\theta | B_y | \phi_\theta}}{ 1+\a \cor{ \phi_\theta | A_x | \phi_\theta}} = \bra{\phi^+} \Tilde{A}^{\a}_x B_y \ket{\phi^+} =: [\Tilde{A}^{\a}_x B_y], 
\end{equation}
which is well-defined as long as $\cor{A_x}\neq \pm1$, i.e.~at least whenever $\theta\notin \{0,\pi/2\}$.

Upon non-linear transformation $T_\theta$, the original correlation vector $\Vec P$, obtained with a pure and projective two qubit realization, can thus be interpreted as correlations of a quantum point $\Vec Q$ realized by the state $\ket{\phi^+}$, four measurements $\Tilde{A}^{\a}_x$ for Alice and two measurements $B_y$ for Bob:
\begin{equation}
\Vec P = \begin{array}{c|c|c}
          & \langle B_0\rangle  & \langle B_1\rangle \\
         \hline
         \langle A_0\rangle  & \langle A_0 B_0\rangle & \langle A_0 B_1\rangle \\
         \hline
         \langle A_1\rangle & \langle A_1 B_0\rangle & \langle A_1 B_1\rangle
    \end{array} \xrightarrow{\quad T_\theta\quad } \Vec Q = \begin{array}{c|c|c}
          & 0 & 0 \\
         \hline
         0 & [\Tilde{A}^{+}_{0} B_0] & [\Tilde{A}^{+}_{0} B_1] \\
         \hline 
         0 & [\Tilde{A}^{-}_{0} B_0] & [\Tilde{A}^{-}_{0} B_1] \\
         \hline 
         0 & [\Tilde{A}^{+}_{1} B_0] & [\Tilde{A}^{+}_{1} B_1] \\
         \hline 
         0 & [\Tilde{A}^{-}_{1} B_0] & [\Tilde{A}^{-}_{1} B_1] 
    \end{array}
\end{equation}
Note that all marginal terms of this new quantum point are obtained using the property that marginals correlations are $0$ when measuring $\ket{\phi^+}$: $\bra{\phi^+} \Tilde{A}^{\a}_x \ket{\phi^+} = \bra{\phi^+} B_y \ket{\phi^+} = 0$, therefore both behaviors have the same number of degrees of freedom.

The point $\Vec Q$ admits a realization that can be parameterized by
\begin{equation} \label{eq:realizationQ}
        \begin{split}
            \text{state: } & \ket{\phi^+} = \frac{1}{\sqrt{2}}(\ket{00}+\ket{11}), \\
            \text{measurements: } & \Tilde{A}^{\a}_x = \cos(\Tilde{a}^{\a}_x) \sigma_z + \sin(\Tilde{a}^{\a}_x) (\cos(\alpha_x) \sigma_x + \sin(\alpha_x) \sigma_y), \ \Tilde{a}^{\a}_x \in [0,2\pi), \\
            & B_y = \cos(b_y) \sigma_z + \sin(b_y) (\cos(\beta_y) \sigma_x + \sin(\beta_y) \sigma_y), \ b_y \in [0,2\pi),\\
        \end{split}
\end{equation}
where if one assumes that $a_x \in [0,\pi)$ the new measurement angles are given by
\begin{equation} \label{eq:anglemodif}
    \Tilde{a}^{\a}_x = 2  \operatorname{atan}\left(\tan(\frac{a_x}{2})\tan(\theta)^\a \right).
\end{equation}

Note that when the initial state is the singlet state ($\theta=\pi/4$), then the modified measurements are the same, i.e.~$\Tilde{a}^{\a}_x = a_x$ for all $\a, x$. The limit $\theta\rightarrow 0$ is well-defined and gives modified angles $\Tilde a_x^{-1}=0$ when $a_x=0$ and $\Tilde a_x^{-1}=\pi$ otherwise. Furthermore, when $\theta \in [0, \pi/4]$, we have
\begin{subequations}
    \begin{align}
        & 0 \leq \Tilde{a}_x^+ \leq a_x \leq \Tilde{a}_x^- \leq \pi, \\
        & a_x \longrightarrow \Tilde a_x^+ \text{ is a increasing function of $\theta$}, \\
        & a_x \longrightarrow \Tilde a_x^- \text{ is an decreasing function of $\theta$}.
    \end{align}
\end{subequations}
and when $\theta \in (\pi/4, \pi/2)$: 
\begin{subequations}
    \begin{align}
        & 0 \leq \Tilde{a}_x^- \leq a_x \leq \Tilde{a}_x^+ \leq \pi, \\
        & a_x \longrightarrow \Tilde a_x^+ \text{ is a decreasing function of $\theta$}, \\
        & a_x \longrightarrow \Tilde a_x^- \text{ is an increasing function of $\theta$}.
    \end{align}
\end{subequations}

\subsection{Steering on Alice vs Bob} \label{sec:AliceVSBob}

Everything in this appendix about the steering transformation could be done on Bob's side as well. As such, it is possible to associate to each measurement $B_y$ two modified measurements $\tilde B_y^\b$ on Bob's Hilbert space such that one can map the initial realization on a partially entangled state to a realization on a maximally entangled state. Those modified measurements would also have modified angles with respect to $\sigma_z$, following the same transformation
\begin{equation} 
    \Tilde{b}^{\b}_y = 2  \operatorname{atan}\left(\tan(\frac{b_y}{2})\tan(\theta)^\b \right),
\end{equation}
for all $\b,y$. The realization associated to these measurements $(A_x, \tilde B_y^\b)$ with angles $a_x$ and $\tilde b_y^\b$ is in general different from the one involving $(\tilde A_x^\a, B_y)$ where the steering is applied on Alice's side: the value of the correlators $[A_x \tilde B_y^\b]$ take different values than the $[\tilde A_x^\a B_y]$.

However, the fact that the modified measurements angles are fully alternating is unchanged by whether the transformation is applied on Alice or Bob's side. Indeed, for a fixed entanglement parameter $\theta$, the steering transformation of the measurements angles verify the following:
\begin{enumerate}
    \item It is monotonous, i.e. for all $\s \in \{ \pm1 \}$, $a \leq b \Longrightarrow [\tilde a^\s]_\pi \leq [\tilde b^\s]_\pi$.
    \item Modifying an angle twice in a row with opposite outcome sign gives back the original angle, as $2\operatorname{atan}(\tan([2\operatorname{atan}(\tan(a/2)\tan(\theta))]/2)\tan(\theta)^{-1})=a$.
\end{enumerate}
With those two properties, one can verify that
\begin{equation}
    \forall a,b \in [0,\pi),\ [a^\s]_\pi \leq b \Longleftrightarrow a \leq [b^{-\s}]_\pi
\end{equation}
by application of the steering transformation. As such, the fully alternating condition when the steering is applied on Alice's measurement angles is equivalent to the full alternating condition after application of the steering condition on Bob's side:
\begin{equation}
    \left( \forall \s,\t, \  0 \leq [a_0^\s]_\pi \leq b_0 \leq [a_1^\t]_\pi \leq b_1 < \pi \right) \Longleftrightarrow \left( \forall \s,\t, \  0 \leq a_0 \leq [b_0^\s]_\pi \leq a_1 \leq [b_1^\t]_\pi < \pi \right).
\end{equation}

\subsection{Proof of Proposition 1} \label{sec:proofprop1}

A subset of $\mathcal{Q}$ that is very well characterized is the subspace of zero marginals distributions. In this case, the following statement holds:
\begin{Proposition}[Masanes Theorem, \cite{Masanes03}] \label{prop:zeromargarcsin}
    In the CHSH scenario, the set of quantum points with zero marginals is exactly the set of points verifying the following eight inequalities:
    \begin{equation} \label{eq:ineqcorrelation}
\begin{array}{r}
-\pi \leq-\operatorname{asin} \langle A_0B_0 \rangle+\operatorname{asin} \langle A_1B_0\rangle+\operatorname{asin} \langle A_0B_1\rangle+\operatorname{asin}\langle A_1B_1 \rangle\leq \pi \\
-\pi \leq \operatorname{asin}\langle A_0B_0\rangle-\operatorname{asin}\langle A_1B_0\rangle+\operatorname{asin} \langle A_0B_1\rangle+\operatorname{asin} \langle A_1B_1 \rangle \leq \pi \\
-\pi \leq \operatorname{asin}\langle A_0B_0\rangle+\operatorname{asin} \langle A_1B_0\rangle-\operatorname{asin} \langle A_0B_1\rangle+\operatorname{asin} \langle A_1B_1\rangle\leq \pi \\
-\pi \leq \operatorname{asin}\langle A_0B_0\rangle+\operatorname{asin} \langle A_1B_0\rangle+\operatorname{asin}\langle A_0B_1\rangle-\operatorname{asin} \langle A_1B_1\rangle\leq \pi.
\end{array}
\end{equation}
\end{Proposition}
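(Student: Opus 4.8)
My plan is to work directly with the correlator matrix $C_{xy}=\cor{A_xB_y}$ and to identify the zero--marginal quantum set $\mathcal{Q}_0$ with the set of matrices that arise as cross Gram matrices of unit vectors, so that the theorem becomes a purely geometric statement. By Tsirelson's theorem~\cite{Cirelson80}, $C$ is a quantum correlator matrix iff there exist unit vectors $\vec u_0,\vec u_1,\vec v_0,\vec v_1$ in a real Hilbert space with $C_{xy}=\vec u_x\cdot\vec v_y$; moreover any such $C$ is realized with vanishing marginals on a maximally entangled state (measuring $\hat u_x\cdot\vec\sigma$ and $\hat v_y\cdot\vec\sigma$ on $\ket{\phi^+}$), and conversely every zero--marginal quantum correlator admits such a unit--vector representation (consistent with the reduction to real two--qubit realizations of \cref{prop:realqubitreduction}). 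This lets me pass to the geodesic angles $\gamma_{xy}=\arccos C_{xy}\in[0,\pi]$, which are the pairwise distances of the $\vec u_x,\vec v_y$ on the unit sphere, and to use the elementary identity $\arcsin C_{xy}=\tfrac{\pi}{2}-\gamma_{xy}$.

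For the necessity direction I would substitute this identity into the eight inequalities. The four upper bounds then read exactly as the four \emph{quadrilateral inequalities} $\gamma_{xy}\le\gamma_{x'y}+\gamma_{x'y'}+\gamma_{xy'}$ obtained by summing the geodesic triangle inequality along the $4$--cycle $\vec u_x\to\vec v_{y'}\to\vec u_{x'}\to\vec v_y$; since $\gamma$ is a metric on the sphere these hold automatically. The four lower bounds I would then obtain for free from symmetry: flipping the sign of a single row or column of $C$ is a relabeling of one outcome, hence leaves $\mathcal{Q}_0$ invariant (\cref{lemma:symmetries}), while it carries the four upper--bound inequalities onto the four lower--bound ones. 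Thus every point of $\mathcal{Q}_0$ satisfies all eight inequalities.

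The harder and genuinely geometric direction is sufficiency: given $C$ satisfying the eight inequalities I must build four unit vectors with prescribed cross--distances $\gamma_{xy}$, the two intra--party distances $\gamma(\vec u_0,\vec u_1)$ and $\gamma(\vec v_0,\vec v_1)$ being free. My plan is to fix $\vec u_0$ at the north pole of $S^2$ and solve the resulting small system of spherical trigonometric equations for the positions of $\vec v_0,\vec v_1,\vec u_1$, showing that a real solution exists precisely when the eight inequalities hold; Tsirelson's construction then yields $C\in\mathcal{Q}_0$. Here the roles of the two families separate cleanly: the upper (triangle) bounds guarantee that the required arcs can be drawn, while the lower bounds are the spherical ``perimeter'' constraints that prevent the configuration from having to close up around more than a full great circle. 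A convenient organizing observation is that whenever one inequality is saturated a triangle inequality is tight, forcing the four points onto a common great circle, i.e.\ a planar realization $C_{xy}=\cos(a_x-b_y)$ on $\ket{\phi^+}$ with real measurements; the strict interior can then be reached either by explicit placement on $S^2$ or by a continuity argument deforming away from such a boundary configuration. I expect this embeddability step---establishing that the eight inequalities are exactly the solvability condition, with the essential interplay between the upper and lower bounds---to be the main obstacle, everything else reducing to the metric structure of the sphere and the outcome--relabeling symmetry.
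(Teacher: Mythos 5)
The paper itself offers no proof of this proposition---it is imported verbatim from Ref.~\cite{Masanes03} (the Tsirelson--Landau--Masanes characterization)---so your attempt must stand on its own. Your necessity half does: Tsirelson's theorem~\cite{Cirelson80} gives $\cor{A_xB_y}=\vec u_x\cdot\vec v_y$ with unit vectors for any quantum point, the substitution $\operatorname{asin}\cor{A_xB_y}=\pi/2-\gamma_{xy}$ with $\gamma_{xy}=\operatorname{acos}\cor{A_xB_y}$ turns the four upper bounds into the quadrilateral inequalities $\gamma_{xy}\le\gamma_{xy'}+\gamma_{x'y'}+\gamma_{x'y}$ (two applications of the triangle inequality for the angular metric), and the outcome-flip involution $A_1\to-A_1$, which preserves the zero-marginal quantum set, carries the upper bounds onto the lower ones. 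One small repair is needed in your realization step: $\bra{\phi^+}(\hat u\cdot\vec\sigma)\otimes(\hat v\cdot\vec\sigma)\ket{\phi^+}=u_1v_1-u_2v_2+u_3v_3$, not $\hat u\cdot\hat v$, so one party's vectors must have their $\sigma_y$-component flipped (or one uses the singlet and absorbs signs).

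The genuine gap is the sufficiency direction, which is the substantive half of the theorem and which your proposal leaves as a declared plan (``showing that a real solution exists precisely when the eight inequalities hold\ldots the main obstacle''): nothing in the text establishes that the eight inequalities imply embeddability. To close it along your own lines, introduce the free intra-party angle $\delta=\gamma(\vec v_0,\vec v_1)$ and use the fact that three points with prescribed pairwise angular distances $a,b,c\in[0,\pi]$ exist on a sphere iff the triangle inequalities hold \emph{and} $a+b+c\le 2\pi$. Then $\vec u_0$ can be attached to the pair $(\vec v_0,\vec v_1)$ iff $\delta\in\bigl[\,\lvert\gamma_{00}-\gamma_{01}\rvert,\ \min(\gamma_{00}+\gamma_{01},\,2\pi-\gamma_{00}-\gamma_{01})\bigr]$, and $\vec u_1$ iff $\delta$ lies in the analogous interval built from $\gamma_{10},\gamma_{11}$. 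The eight inequalities, rewritten as the four quadrilateral bounds $\gamma_{xy}\le\gamma_{xy'}+\gamma_{x'y'}+\gamma_{x'y}$ and the four perimeter bounds $\gamma_{xy'}+\gamma_{x'y'}+\gamma_{x'y}-\gamma_{xy}\le 2\pi$, are exactly the condition that these two intervals intersect; choosing $\delta$ in the intersection and then intersecting distance circles on $S^2$ produces all four unit vectors in $\mathbb{R}^3$, after which your maximally-entangled realization (with the conjugation fix above) yields a zero-marginal quantum point. Without this interval-intersection argument, or an equivalent spherical-trigonometry computation, the proposal proves only the easy inclusion.
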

Since any point obtained by measuring a maximally entangled two qubits state $\ket{\phi^+}= (\ket{00}+\ket{11})/\sqrt{2}$ has zero marginals, it must verify the above inequalities.

\begin{proof}[Proof of Proposition 1]
    For any point $\Vec P \in \mathcal{Q}$ which admits a pure projective two qubit realization, the steering transformation allows one to consider a non-linear transformation of $\Vec P$ granting a vector of correlation $\Vec Q = \{[A_x^\a B_y]\}_{\a,x,y}$ with zero marginals, in a scenario with four binary measurements for Alice and two binary measurement for Bob. Thus, for each pair $(A_0^\s, A_1^\t)$ of Alice's new measurements, \cref{prop:zeromargarcsin} holds and  grants:
    \begin{equation} \label{eq:ineqmargin}
\begin{array}{r}
-\pi \leq-\operatorname{asin} [A_0^\s B_0] +\operatorname{asin} [A_1^\t B_0] +\operatorname{asin} [A_0^\s B_1]+\operatorname{asin}[A_1^\t B_1]\leq \pi \\
-\pi \leq \operatorname{asin} [A_0^\s B_0] -\operatorname{asin} [A_1^\t B_0] +\operatorname{asin} [A_0^\s B_1]+\operatorname{asin}[A_1^\t B_1] \leq \pi \\
-\pi \leq \operatorname{asin} [A_0^\s B_0] +\operatorname{asin} [A_1^\t B_0] -\operatorname{asin} [A_0^\s B_1]+\operatorname{asin}[A_1^\t B_1]\leq \pi \\
-\pi \leq \operatorname{asin} [A_0^\s B_0] +\operatorname{asin} [A_1^\t B_0] +\operatorname{asin} [A_0^\s B_1]-\operatorname{asin}[A_1^\t B_1]\leq \pi
\end{array}
\end{equation}
for all $\s,\t \in \{-1,1\}$, where: 
\begin{equation} \label{eq:newcorrphi+}
    [A_x^\a B_y] = \frac{\langle A_x B_y \rangle+\a \langle B_y \rangle}{ 1+\a \langle A_x \rangle}
\end{equation}
\end{proof}

\section{Proof of \cref{thm:self-test}} \label{sec:proofLemma1}

This section is organized in the following way: in the first two subsections we prove some preliminary results. In particular, we prove in \cref{lemma:uniqueness} that the conditions of \cref{thm:self-test} along with the assumption that the underlying realization is pure, projective and and nondegenerate in $\mathcal{Q}_2$ is sufficient to fully identify a unique realization. The third subsection is dedicated to the proof of \cref{thm:self-test}. To prove extremality, we consider convex decompositions into sub-correlations. We make use of the steering transformation, introduced in~\cref{sec:steering}, to identify four underlying realizations on four different states. We then use geometrical considerations to understand how the conditions \cref{eq:conditions} relate these different realizations to each other. Finally, we transpose these conclusions onto the original quantum realization and infer its properties.

\subsection{Preliminary Geometrical Considerations}
Let us recall some elements of geometry. Here, we consider normalized states $\ket{\psi_1},\ket{\psi_2},\ket{\psi_3}$ in a real Hilbert space $\mathcal{H}=\mathbb{R}^d$.
\begin{Definition}
The angular distance between $\ket{\psi_1}$ and $\ket{\psi_2}$ is defined as the angle $\theta\in[0,\pi]$ s.t.
\begin{equation}
\cos\theta=\braket{\psi_1}{\psi_2}.
\end{equation}
\end{Definition}
\noindent The angular distance $\theta$ is 0 iff $\ket{\psi_1}=\ket{\psi_2}$, $\pi$ iff $\ket{\psi_1}=-\ket{\psi_2}$, and $\pi/2$ iff $\ket{\phi_1}$ and $\ket{\psi_2}$ are orthogonal.

\begin{Lemma} \label{lemma:triangle}
The pairwise angular distance between three vectors $\ket{\psi_1}, \ket{\psi_2}, \ket{\psi_3}$ satisfies the triangle inequality
\begin{equation}
|\theta_{12}-\theta_{23}| \leq \theta_{13} \leq \theta_{12}+\theta_{23}, \label{eq:triangle}
\end{equation}
with equality only if the vectors are coplanar.
\end{Lemma}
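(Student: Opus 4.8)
The plan is to reduce the statement to the elementary fact that cosine is monotone decreasing on $[0,\pi]$, after expanding $\braket{\psi_1}{\psi_3}$ in a frame adapted to $\ket{\psi_2}$. First I would dispose of the degenerate cases $\theta_{23}\in\{0,\pi\}$, where $\ket{\psi_3}=\pm\ket{\psi_2}$: here two of the three vectors coincide up to a sign, so the three are trivially coplanar and both inequalities collapse to equalities, matching the claim. Assume henceforth $\theta_{23}\in(0,\pi)$, so that $\sin\theta_{23}>0$ and one may write the orthogonal decomposition $\ket{\psi_3}=\cos\theta_{23}\ket{\psi_2}+\sin\theta_{23}\ket{\psi_2^\perp}$, where $\ket{\psi_2^\perp}$ is the unit vector obtained by normalizing the component of $\ket{\psi_3}$ orthogonal to $\ket{\psi_2}$.

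Taking the inner product with $\ket{\psi_1}$ then gives
\begin{equation}
\cos\theta_{13}=\braket{\psi_1}{\psi_3}=\cos\theta_{23}\cos\theta_{12}+\sin\theta_{23}\braket{\psi_1}{\psi_2^\perp}.
\end{equation}
The key estimate is to control the last term. Since $\ket{\psi_2^\perp}\perp\ket{\psi_2}$, the quantity $\braket{\psi_1}{\psi_2^\perp}$ equals the inner product of $\ket{\psi_2^\perp}$ with the component of $\ket{\psi_1}$ orthogonal to $\ket{\psi_2}$, whose norm is $\sin\theta_{12}$. By Cauchy--Schwarz, $|\braket{\psi_1}{\psi_2^\perp}|\le\sin\theta_{12}$. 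Substituting the two extremal values and using the addition formulas for cosine yields the sandwich
\begin{equation}
\cos(\theta_{12}+\theta_{23})\le\cos\theta_{13}\le\cos(\theta_{12}-\theta_{23}).
\end{equation}

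It remains to translate these cosine inequalities into inequalities on the angles themselves. For the lower side, $|\theta_{12}-\theta_{23}|\in[0,\pi]$ and $\cos\theta_{13}\le\cos|\theta_{12}-\theta_{23}|$, so monotonicity of cosine on $[0,\pi]$ gives $\theta_{13}\ge|\theta_{12}-\theta_{23}|$. For the upper side, if $\theta_{12}+\theta_{23}\le\pi$ the same monotonicity applied to $\cos\theta_{13}\ge\cos(\theta_{12}+\theta_{23})$ gives $\theta_{13}\le\theta_{12}+\theta_{23}$, whereas if $\theta_{12}+\theta_{23}>\pi$ the bound $\theta_{13}\le\pi<\theta_{12}+\theta_{23}$ holds trivially. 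This establishes \cref{eq:triangle}. The one place requiring care, and the main obstacle, is precisely this wrap-around at $\pi$, where cosine is no longer injective on the relevant range and the naive inversion of the inequality would fail.

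Finally, for the equality clause, I would observe that equality in either inequality forces the Cauchy--Schwarz estimate $|\braket{\psi_1}{\psi_2^\perp}|=\sin\theta_{12}$ to be saturated. Equality in Cauchy--Schwarz means the component of $\ket{\psi_1}$ orthogonal to $\ket{\psi_2}$ is proportional to $\ket{\psi_2^\perp}$, hence $\ket{\psi_1}\in\mathrm{span}\{\ket{\psi_2},\ket{\psi_2^\perp}\}=\mathrm{span}\{\ket{\psi_2},\ket{\psi_3}\}$. Thus the three vectors are coplanar, which is the stated equality condition.
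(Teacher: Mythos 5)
Your proof is correct, and it rests on the same core identity as the paper's: expanding in a frame adapted to $\ket{\psi_2}$ gives $\cos\theta_{13}=\cos\theta_{12}\cos\theta_{23}+\sin\theta_{23}\braket{\psi_1}{\psi_2^\perp}$, the spherical law of cosines, followed by monotonicity of cosine on $[0,\pi]$. The difference is in how the cross term is controlled. The paper builds a full orthonormal frame from $\ket{\psi_1},\ket{\psi_2}$, writes $\ket{\psi_3}$ with an explicit angle $\alpha$ so that the cross term becomes $\sin\theta_{12}\sin\theta_{23}\cos\alpha$, and locates the extrema by differentiating in $\alpha$, with equality pinned to $\alpha\in\{0,\pi\}$; you instead bound $|\braket{\psi_1}{\psi_2^\perp}|\leq\sin\theta_{12}$ by Cauchy--Schwarz and read coplanarity off its saturation. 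The two are mathematically equivalent (your inner product is exactly $\sin\theta_{12}\cos\alpha$ in the paper's parametrization), but your route avoids the calculus step, needs one fewer basis vector, and its equality analysis is arguably cleaner. You are also more careful on a point the paper's proof glosses over: inverting $\cos\theta_{13}\geq\cos(\theta_{12}+\theta_{23})$ through monotonicity is only legitimate when $\theta_{12}+\theta_{23}\leq\pi$, and you dispose of the case $\theta_{12}+\theta_{23}>\pi$ by the trivial bound $\theta_{13}\leq\pi$, whereas the paper invokes monotonicity without comment. One cosmetic slip: in the degenerate case $\theta_{23}=\pi$ the two inequalities do not both collapse to equalities (the upper bound $\theta_{13}\leq\theta_{12}+\pi$ is strict unless $\theta_{12}=0$); what matters, and what does hold there, is that both inequalities are valid and the three vectors are coplanar, so nothing in your argument is affected.
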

\begin{proof}
Clearly, two vectors define a plane, and therefore if either pair of states are colinear ($\theta=0,\pi$), then all vectors are coplanar. Furthermore, whenever $\theta_{ij}=0$ for some $i\neq j$, then $\theta_{ik}=\theta_{jk}$ ($k\neq i,j$) and all versions of the inequalities \cref{eq:triangle} are verified. Similarly, whenever $\theta_{ij}=\pi$ for some $i\neq j$, then $\theta_{ik}=\pi-\theta_{jk}$ ($k\neq i,j$) and all versions of the inequalities \cref{eq:triangle} are verified. We can thus assume that the angular distance between every pair of states is strictly contained in the interval $0<\theta_{ij}<\pi$.

In this case, $\ket{\psi_1}$ and $\ket{\psi_2}$ define a plane. In particular, there exist two orthonormal vectors $\ket{\phi_1}$, $\ket{\phi_2}$ such that
\begin{align}
\ket{\psi_2}&=\ket{\phi_1}\\
\ket{\psi_1}&=\cos\theta_{12}\ket{\phi_1} + \sin\theta_{12}\ket{\phi_2}.
\end{align}
Without loss of generality, the third vector can then be written
\begin{equation}
\ket{\psi_3} = \cos\theta_{23}\ket{\phi_1} + \sin\theta_{23}\cos\alpha\ket{\phi_2} + \sin\theta_{23}\sin\alpha\ket{\phi_3}
\end{equation}
for $\ket{\phi_3}$ orthogonal to $\ket{\phi_1}$ and $\ket{\phi_2}$ and an angle $\alpha\in[0,\pi]$.

The angular distance $\theta_{13}$ between $\ket{\psi_1}$ and $\ket{\psi_3}$ then satisfies
\begin{align}\label{eq:cos2theta13}
\cos\theta_{13} &= \cos\theta_{12}\cos\theta_{23}+\sin\theta_{12}\sin\theta_{23}\cos\alpha.
\end{align}
In order to find the extremal values of $\theta_{13}$, notice that setting the derivative of the rhs with respect to $\alpha$ to zero yields the values $\cos(\theta_{12}-\theta_{23})=\cos(|\theta_{12}-\theta_{23}|)$ for $\alpha=0$ and $\cos(\theta_{12}+\theta_{23})$ for $\alpha=\pi$. Using $|\theta_{12}-\theta_{23}|<\theta_{12}+\theta_{23}$ and the fact that the cosine function is strictly decreasing on $[0,\pi]$, one obtains the lower and upper bounds of \cref{eq:triangle}. Moreover, since these extremal values are only reached for $\alpha=0,\pi$, saturation of \cref{eq:triangle} is only possible when all vectors belong to the plane spanned by $\ket{\phi_1}$ and $\ket{\phi_2}$.
\end{proof}

\begin{Lemma} \label{lemma:coplanar}
    Consider normalized vectors $\vec{m}_{\mathsf{a},x}, \vec{n}^{\a}_{x,y} \in \mathbb{R}^{12}$ for all $\mathsf{a} \in \{-1,1\}$, $x,y \in \{0,1\}$. Suppose that:
    \begin{enumerate}
        \item $\forall (\mathsf{s}, \mathsf{t})\in\{-1,1\}^2, \ \operatorname{asin} \cor{\vec{m}_{\mathsf{s},0}|\vec{n}^\s_{0,0}} +\operatorname{asin} \cor{\vec{m}_{\mathsf{t},1}|\vec{n}^\t_{1,0}} -\operatorname{asin} \cor{\vec{m}_{\mathsf{s},0}|\vec{n}^\s_{0,1}}+\operatorname{asin} \cor{\vec{m}_{\mathsf{t},1}|\vec{n}^\t_{1,1}} =\pi$ \\
        \item $\exists \lambda_x \in (0,1), l \in \mathbb{R}$ s.t.~$\lambda_x \cor{\vec{n}^+_{x,0}|\vec{n}^+_{x,1}} + (1-\lambda_x) \cor{\vec{n}^-_{x,0}|\vec{n}^-_{x,1}} = l$ (doesn't depend on $x$).
    \end{enumerate}
    Then all triples of vectors $\vec{m}_{\mathsf{a},x}$, $\vec{n}^\a_{x,0}$, $\vec{n}^\a_{x,1}$ lie in the same plane and $\cor{\vec{n}^\a_{x,0}|\vec{n}^\a_{x,1}}$ doesn't depend on $\mathsf{a},x$.
\end{Lemma}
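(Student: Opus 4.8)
The plan is to recast both hypotheses in terms of angular distances and reduce the statement to a saturation argument for the triangle inequality of \cref{lemma:triangle}. Writing $\cos\theta^{\a}_{x,y}=\cor{\vec m_{\a,x}|\vec n^\a_{x,y}}$ and $\cos\phi^{\a}_{x}=\cor{\vec n^\a_{x,0}|\vec n^\a_{x,1}}$ with all angles in $[0,\pi]$, I would first invoke the elementary identity $\operatorname{asin}(\cos\theta)=\tfrac{\pi}{2}-\theta$, valid on this range, to turn hypothesis~1 into the family of relations
\begin{equation}
\theta^{\s}_{0,1}-\theta^{\s}_{0,0}=\theta^{\t}_{1,0}+\theta^{\t}_{1,1},\qquad \forall (\s,\t)\in\{\pm1\}^2 .
\end{equation}
Since the left-hand side depends only on $\s$ and the right-hand side only on $\t$, both must equal a common constant $S\geq 0$. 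In particular $S=|\theta^{\s}_{0,0}-\theta^{\s}_{0,1}|$, and being the absolute difference of two numbers in $[0,\pi]$ this gives $S\in[0,\pi]$, which I will need for the monotonicity of the cosine.

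Next I would apply \cref{lemma:triangle} to each of the four triples $(\vec n^\a_{x,0},\vec m_{\a,x},\vec n^\a_{x,1})$, so that $\phi^{\a}_x$ plays the role of the side $\theta_{13}$. For $x=0$ the lower bound yields $\phi^{\a}_0\geq|\theta^{\a}_{0,0}-\theta^{\a}_{0,1}|=S$, whereas for $x=1$ the upper bound yields $\phi^{\a}_1\leq\theta^{\a}_{1,0}+\theta^{\a}_{1,1}=S$. Because $\cos$ is strictly decreasing on $[0,\pi]$ and $S\in[0,\pi]$, these read
\begin{equation}
\cos\phi^{\a}_0\leq\cos S\leq\cos\phi^{\a}_1\qquad\text{for all }\a .
\end{equation}
The decisive feature is that the \emph{same} constant $S$ bounds $\phi^{\a}_0$ from below and $\phi^{\a}_1$ from above: this is precisely what hypothesis~1 arranges by tying the $x=0$ difference to the $x=1$ sum.

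Hypothesis~2 then closes the argument. Forming the convex combination of the inequalities $\cos\phi^{\a}_0\leq\cos S$ with weights $\lambda_0,1-\lambda_0$ gives $l\leq\cos S$, while the combination of $\cos\phi^{\a}_1\geq\cos S$ with weights $\lambda_1,1-\lambda_1$ gives $l\geq\cos S$; hence $l=\cos S$. As $\lambda_x\in(0,1)$ assigns strictly positive weight to both terms and each term sits on a single side of $\cos S$, equality of the convex combination with $\cos S$ forces every term to saturate, so $\cos\phi^{\a}_x=\cos S$ and therefore $\phi^{\a}_x=S$ for all $\a,x$. This is exactly the assertion that $\cor{\vec n^\a_{x,0}|\vec n^\a_{x,1}}$ is independent of $\a,x$. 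Finally, $\phi^{\a}_0=S$ saturates the lower triangle bound and $\phi^{\a}_1=S$ the upper one, so the equality clause of \cref{lemma:triangle} makes each triple coplanar, which is the remaining claim.

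I expect the only delicate step to be the reduction of hypothesis~1 to the single relation $S=|\theta^{\s}_{0,0}-\theta^{\s}_{0,1}|=\theta^{\t}_{1,0}+\theta^{\t}_{1,1}$: it is this separation of variables, forcing both sides to be constant, that aligns the lower bound at $x=0$ with the upper bound at $x=1$ and collapses the sandwich. Once that is in place, the remainder is merely the monotonicity of the cosine on $[0,\pi]$ together with the strict-positivity bookkeeping coming from $\lambda_x\in(0,1)$.
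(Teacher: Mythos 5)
Your proof is correct, and it rests on the same two ingredients as the paper's own proof: the identity $\operatorname{asin}(\cos\gamma)=\pi/2-\gamma$ on $[0,\pi]$, which turns hypothesis~1 into linear relations among angular distances, and \cref{lemma:triangle} with its equality-implies-coplanarity clause. The difference is organizational, and your route is the cleaner one. The paper invokes hypothesis~1 one pair $(\s,\t)$ at a time inside chains of inequalities, and needs a case analysis on the ordering of the angles between $\vec{n}^{\a}_{x,0}$ and $\vec{n}^{\a}_{x,1}$ (``without loss of generality $\theta_{+,0}\leq\theta_{+,1}$'', then sub-cases, with the remaining cases ``treated similarly''), using hypothesis~2 first qualitatively to exclude orderings and then to propagate the equalities one at a time. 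You instead extract the full strength of hypothesis~1 in one stroke: since the $x=0$ differences depend only on $\s$ and the $x=1$ sums only on $\t$, all four quantities equal a single constant $S\in[0,\pi]$, which yields the uniform sandwich $\cos\phi^{\a}_0\leq\cos S\leq\cos\phi^{\a}_1$ for the angles $\phi^{\a}_x$ between $\vec{n}^{\a}_{x,0}$ and $\vec{n}^{\a}_{x,1}$ (the paper's $\theta_{\a,x}$). Hypothesis~2 is then used exactly once, quantitatively, to force $l=\cos S$ and, through the strictly positive weights $\lambda_x$ and $1-\lambda_x$, to saturate every term; all triangle bounds are then tight and coplanarity follows exactly as in the paper. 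The two arguments have the same mathematical content, but your separation-of-variables formulation eliminates the case distinctions (and the ``treated similarly'' gaps), while the paper's version makes explicit which orderings of the angles are compatible with hypothesis~2.
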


\begin{proof}
    Let's consider the angular distances $\gamma_{x,y}^\a, \theta_{\mathsf{a},x} \in[0,\pi]$ defined as:
\begin{equation}
\cos(\gamma_{x,y}^\a) = \cor{\vec{m}_{\mathsf{a},x}|\vec{n}^\a_{x,y}}, \quad \cos(\theta_{\mathsf{a},x}) = \cor{\vec{n}^\a_{x,0}|\vec{n}^\a_{x,1}}.
\end{equation}
Using $\operatorname{asin} = \pi/2 - \operatorname{acos}$, the first assumption of the lemma then becomes:
\begin{equation} \label{eq:gamma_angles}
    \forall (\mathsf{s}, \mathsf{t})\in\{-1,1\}^2, \ \gamma^\s_{0,0} + \gamma^\t_{1,0} - \gamma^\s_{0,1} + \gamma^\t_{1,1} = 0
\end{equation}
Moreover, \cref{lemma:triangle} applied to $\vec{n}^\a_{x,0}$, $\vec{n}^\a_{x,1}$ and $\vec{m}_{\a,x}$ grants:
\begin{equation}
\label{eq:theta_angles}
    |\gamma_{x,0}^\a - \gamma_{x,1}^\a | \leq \theta_{\a,x} \leq \gamma_{x,0}^\a + \gamma_{x,1}^\a.
\end{equation}

Let us now consider the orderings of $\theta_{\a,x}$ which are compatible with the second requirement of the lemma. Note that this implies that there exist convex combinations of $\cos(\theta_{\a,x})$ for $\a=\pm$ that doesn't depend on $x$. In particular, this implies that it is impossible to have $\theta_{\t,1} < \theta_{\s,0}$ for all $\s,\t$. Without loss of generality, let's consider that $\theta_{+,0}\leq\theta_{+,1}$ (the other cases are treated similarly). We obtain from \cref{eq:gamma_angles,eq:theta_angles}:
\begin{equation}\label{eq:case1a}
0 \leq \gamma_{1,0}^+ + \gamma_{1,1}^+ = \gamma_{0,1}^+ - \gamma_{0,0}^+ \leq \theta_{+,0}\leq \theta_{+,1} \leq \gamma_{1,0}^+ + \gamma_{1,1}^+,
\end{equation}
which implies that $\theta_{+,0}=\theta_{+,1}$. The second hypothesis of the lemma then implies that either $\theta_{-,0}, \theta_{-,1} \leq \theta_{+,0}=\theta_{+,1}$ or $\theta_{-,0}, \theta_{-,1} \geq \theta_{+,0}=\theta_{+,1}$. Let's consider the first case, where we have $\theta_{-,0} \leq \theta_{+,1}$ (again, the other one is similar). We obtain from \cref{eq:gamma_angles,eq:theta_angles}:
\begin{equation}\label{eq:case1b}
0 \leq \gamma_{1,0}^+ + \gamma_{1,1}^+ = \gamma_{0,1}^- - \gamma_{0,0}^- \leq \theta_{-,0}\leq \theta_{+,1} \leq \gamma_{1,0}^+ + \gamma_{1,1}^+,
\end{equation}
and thus $\theta_{-,0}=\theta_{+,1}$. Since three of the thetas match, hypothesis 2 implies that all four are equal, and that
\begin{equation}\label{eq:case1c}
0 \leq \gamma_{1,0}^- + \gamma_{1,1}^- = \gamma_{0,1}^- - \gamma_{0,0}^- \leq \theta_{-,0}= \theta_{-,1} \leq \gamma_{1,0}^- + \gamma_{1,1}^-,
\end{equation}

The saturation of the triangle inequalities in \cref{eq:case1a,eq:case1b,eq:case1c} implies that $\vec{m}_{\mathsf{a},x}$, $\vec{n}_{0,\mathsf{a},x}$ and $\vec{n}_{1,\mathsf{a},x}$ must be co-planar for all $\a,x$. Furthermore, we obtained $\theta_{\a,x}=\theta_{a',x'}$ for all $\a,\a',x,x'$, i.e.~that $\cor{\vec{n}_{0,\mathsf{a},x}|\vec{n}_{1,\mathsf{a},x}}$ doesn't depend on $\a,x$.
\end{proof}

\subsection{Preliminary Quantum Considerations}

\begin{Lemma}{\cite[Theorem 2.5]{Chen23}} \label{lemma:extremeselftest}
    In the CHSH scenario, suppose that a correlation vector verifies the following conditions:
    \begin{enumerate}
        \item It is extremal in $\mathcal{Q}$,
        \item Up to local unitaries, there exists a unique pure, projective and non-degenerate realisation $(\ket{\phi_\theta}, A_x, B_y)$ of local dimension two giving those correlations.
    \end{enumerate}
    Then this correlation vector self-tests the realisation $(\ket{\phi_\theta}, A_x, B_y)$. 
\end{Lemma}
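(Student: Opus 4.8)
The plan is to show that \emph{any} quantum realization reproducing the correlation $\vec P$ can be brought, by local isometries, to the reference realization $(\ket{\phi_\theta},A_x,B_y)$ tensored with an irrelevant ancilla; this is exactly the self-testing statement. Starting from an arbitrary realization $(\rho',A'_x,B'_y)$ yielding $\vec P$, I first apply Naimark's dilation (\cref{cor:corr_conservation}) to make all four measurements projective, which is a local isometry and hence harmless for self-testing. I then invoke Jordan's lemma (\cref{lemma:Jordan,cor:JordanBell}): in the Jordan bases $A_0,A_1$ are block-diagonal with real blocks of size at most two (same for $B_0,B_1$), so with block projectors $\Pi_A^i,\Pi_B^j$ one gets the convex decomposition $\vec P=\sum_{ij}p^{ij}\vec P_{ij}$, each $\vec P_{ij}$ coming from a local-dimension-$\leq 2$ realization $(\rho^{ij},A_x^{(i)},B_y^{(j)})$. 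Extremality of $\vec P$ forces $\vec P_{ij}=\vec P$ for every block with $p^{ij}>0$. Decomposing each block state $\rho^{ij}$ into pure states and using extremality once more, each pure component gives $\vec P$ with the block measurements, so $\vec P$ admits pure projective two-qubit realizations on every contributing block.

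Next I rule out degenerate blocks. If some $A_x^{(i)}$ (or $B_y^{(j)}$) acted as $\pm\id$ on its block — which covers both size-one blocks and size-two blocks on which a measurement is $\pm\id$ — then that party's outcome would be deterministic there, forcing $\langle A_x\rangle=\pm1$ (resp.\ $\langle B_y\rangle=\pm1$) in $\vec P$. But a correlation with a marginal equal to $\pm1$ is local (\cite{Chen23}), whereas the reference realization is non-degenerate on the entangled state $\ket{\phi_\theta}$, for which every marginal lies strictly in $(-1,1)$. Hence each contributing block is a genuine non-degenerate two-qubit realization of $\vec P$, and hypothesis (2) identifies it with $(\ket{\phi_\theta},A_x,B_y)$ up to a local unitary $u_A^{(i)}\otimes u_B^{(j)}$. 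Since the block measurements determine these unitaries (the pair $(A_0^{(i)},A_1^{(i)})$ fixes $u_A^{(i)}$, depending only on $i$, and similarly $u_B^{(j)}$ on $j$), and since two distinct reference directions in the $(\sigma_z,\sigma_x)$-plane have trivial joint unitary stabilizer, the pure component of each block is forced to equal $(u_A^{(i)}\otimes u_B^{(j)})\ket{\phi_\theta}$; in particular $\rho^{ij}=\ketbra{(u_A^{(i)}\otimes u_B^{(j)})\phi_\theta}$ is pure.

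Finally I reassemble. Setting $U_A=\bigoplus_i (u_A^{(i)})^{\dagger}$ and $U_B=\bigoplus_j (u_B^{(j)})^{\dagger}$ and writing $\mathcal H_A\cong\mathbb C^2\otimes\mathcal H_{A'}$, $\mathcal H_B\cong\mathbb C^2\otimes\mathcal H_{B'}$ with the second factors labelling the blocks, one has $U_A A_x U_A^\dagger=A_x\otimes\id_{A'}$ and $U_B B_y U_B^\dagger=B_y\otimes\id_{B'}$. The statistics of $\rho_{\mathrm{tot}}=(U_A\otimes U_B)\rho'(U_A\otimes U_B)^\dagger$ depend only on the reduced qubit state $\sigma=\operatorname{Tr}_{A'B'}\rho_{\mathrm{tot}}$, and tracing out the block labels annihilates all inter-block coherences, leaving $\sigma=\sum_{ij}p^{ij}\ketbra{\phi_\theta}=\ketbra{\phi_\theta}$. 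A state whose marginal on the qubit factors is pure must factorize, so $\rho_{\mathrm{tot}}=\ketbra{\phi_\theta}\otimes\rho_{\mathrm{junk}}$ with the transformed measurements acting as $A_x\otimes\id$ and $B_y\otimes\id$. Composing $U_{A/B}$ with the Naimark isometries gives the local isometries required by the definition of self-testing.

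The main obstacle is this last reassembly: extremality makes each individual \emph{block} reproduce $\vec P$, but the global state may carry coherences across blocks, so one must control the reduced qubit state $\sigma$ and prove it is exactly the pure reference state rather than merely a state giving the right correlations. The decisive ingredient is the triviality of the stabilizer of the two non-degenerate reference measurements, which upgrades ``each pure component of every block is locally equivalent to $\ket{\phi_\theta}$'' into ``every block state equals the specific pure reference state,'' whence $\sigma=\ketbra{\phi_\theta}$ and the factorization that delivers self-testing.
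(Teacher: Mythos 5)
The paper never proves this lemma: it is imported wholesale from \cite[Theorem 2.5]{Chen23} and used as a black box inside the proof of \cref{thm:self-test} (via \cref{lemma:uniqueness}), so there is no internal proof to compare yours against. Judged on its own merits, your reconstruction is sound and follows what is essentially the canonical route, built from the same tools the paper develops in \cref{sec:paramreduc}: Naimark dilation (\cref{cor:corr_conservation}), Jordan's lemma (\cref{cor:JordanBell}), extremality to force every Jordan block and every pure component of a block state to reproduce $\Vec P$ (the same argument as in \cref{prop:realqubitreduction}), and then hypothesis (2) to identify each block realization with the reference. You also correctly isolate the real crux --- extremality only controls the diagonal blocks of the global state, not the coherences between blocks --- and your resolution is the right one: since two non-commuting qubit observables have trivial commutant, the local unitaries are fixed up to phase by the block measurements alone, hence every contributing block carries the same pure state; tracing out the block labels then annihilates the off-diagonal coherences, and purity of the reduced two-qubit state forces the global state to factorize as $\ketbra{\phi_\theta}\otimes\rho_{\mathrm{junk}}$.

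Three points need patching before this is complete. (i) Both your exclusion of degenerate blocks and the trivial-stabilizer step presuppose that $\ket{\phi_\theta}$ is entangled, equivalently that $\Vec P$ is nonlocal, so that all marginals of $\Vec P$ lie strictly in $(-1,1)$; the lemma as stated does not say this, so you should either assume it explicitly (it is the only case in which the paper invokes the lemma) or treat the product-state case separately, where your contradiction argument breaks down. (ii) The claim that the joint unitary stabilizer of $(A_0,A_1)$ is trivial requires $[A_0,A_1]\neq 0$, which is \emph{not} implied by non-degeneracy of each operator; it must be derived, e.g.: if $A_0=\pm A_1$ then Alice effectively has a single measurement, so $\Vec P$ is local, hence by extremality deterministic, contradicting the strict marginals of point (i). (iii) Jordan blocks with $p^{ij}=0$ are not constrained by hypothesis (2), so your $u_A^{(i)}$, $u_B^{(j)}$ are undefined there; this is harmless because positivity gives $(\Pi_A^i\otimes\Pi_B^j)\rho'=0$ on such blocks, so the measurement identities are only needed on the support of the state, but it should be said. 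Finally, a conventional caveat: for a non-projective original realization, your composed isometry reproduces the reference measurements only at the level of the Naimark-dilated operators acting on the embedded state (the compression $V^\dagger\Tilde{A}_xV$ does not commute with the embedding), which is the standard convention in self-testing definitions but is worth making explicit rather than calling the dilation ``harmless.''
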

\medskip

\begin{Remark}
The fact that the correlation vector is supposed \textbf{extremal} is key to the previous Lemma, which of course does not hold if it's not the case.
\end{Remark}

\begin{Lemma}{\cite[Theorem 1]{Wang16}}\label{lemma:asinforphi+}
    Any correlation verifying $\cor{A_x} = \cor{B_y} = 0$ for all $x,y$,
    \begin{align}
        \operatorname{asin} \cor{A_0 B_0} +\operatorname{asin} \cor{A_1 B_0}-\operatorname{asin} \cor{A_0 B_1}+\operatorname{asin}\cor{A_1 B_1} =\pi
    \end{align}
    and such that at most one $\cor{A_xB_y}$ is equal to $\pm 1$, self-tests the maximally entangled two qubit state $\ket{\phi^+}$ along with alternating measurements on the $(\sigma_z, \sigma_x)$-plane.
\end{Lemma}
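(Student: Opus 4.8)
The plan is to obtain the self-testing statement from \cref{lemma:extremeselftest}: it suffices to establish that the correlation $\Vec P$ is \emph{(i)} extremal in $\mathcal{Q}$ and \emph{(ii)} admits, up to local unitaries, a unique pure projective non-degenerate qubit realization, which will turn out to be $\ket{\phi^+}$ together with real alternating measurements in the $(\sigma_z,\sigma_x)$-plane. Both facts will be read off from the supporting Bell functional attached to the saturated Masanes inequality. Write $g(\Vec P)=\operatorname{asin}\cor{A_0B_0}+\operatorname{asin}\cor{A_1B_0}-\operatorname{asin}\cor{A_0B_1}+\operatorname{asin}\cor{A_1B_1}$, so that by hypothesis $g(\Vec P)=\pi$ and $\cor{A_x}=\cor{B_y}=0$.

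For extremality, recall from \cref{prop:zeromargarcsin} that the zero-marginal quantum set is exactly the convex body cut out by the eight asin inequalities, and that $\Vec P$ lies on its boundary face $g=\pi$. The gradient of $g$ with respect to the correlators defines a \emph{full-correlation} (marginal-free) Bell functional $\beta=\sum_{xy}c_{xy}\cor{A_xB_y}$ that supports this body at $\Vec P$; oriented outward, $\beta$ is maximized over the zero-marginal set precisely at $\Vec P$. Since $\beta$ carries no marginal terms, Tsirelson's theorem guarantees that its maximum over all of $\mathcal{Q}$ is attained by a maximally entangled two-qubit state, hence by a zero-marginal point, so $\max_{\mathcal{Q}}\beta=\max_{\text{zero-marg}}\beta=\beta(\Vec P)$. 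The assumption that at most one correlator equals $\pm 1$ places $\Vec P$ on the strictly convex portion of the boundary $g=\pi$ (away from the flat corners produced when two or more correlators reach $\pm1$), so $\beta$ is maximized over $\mathcal{Q}$ \emph{only} at $\Vec P$. Thus $\Vec P$ is an exposed point, and in particular extremal in $\mathcal{Q}$.

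For uniqueness of the realization, apply \cref{cor:JordanBell} to decompose any qubit realization as a convex combination $\Vec P=\sum_i p_i\,\Vec P_i$ of planar qubit blocks with states $\ket{\phi_{\theta_i}}$. Because $\beta$ is linear and uniquely maximized at $\Vec P$, each block satisfies $\beta(\Vec P_i)=\beta(\Vec P)$, forcing $\Vec P_i=\Vec P$; moreover, being a full-correlation maximizer, each block must be maximally entangled, so $\theta_i=\pi/4$ and every block has vanishing marginals. On $\ket{\phi^+}$ the real planar measurements give $\cor{A_xB_y}=\cos(a_x-b_y)$, whereupon the saturation $g=\pi$ reduces to an additive relation among the angular differences $a_x-b_y$ that forces the ordering to alternate, say $a_0\le b_0\le a_1\le b_1$. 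The requirement that at most one correlator equals $\pm1$ ensures the measurements are non-degenerate and fixes the four angles up to a common rotation or reflection of the plane, i.e.\ up to a local unitary. The same computation applies to \emph{any} pure projective non-degenerate qubit realization of $\Vec P$ (each is a single Jordan block), so the realization is unique up to local unitaries and equals $\ket{\phi^+}$ with alternating planar measurements. Feeding this uniqueness together with the extremality established above into \cref{lemma:extremeselftest} yields the claimed self-test.

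The main obstacle is the rigidity at the degenerate direction: when one correlator reaches $\pm1$, the derivative of $\operatorname{asin}$ diverges, so the supporting functional $\beta$ must be handled as a limit and one must argue carefully that the boundary $g=\pi$ remains strictly convex at such a corner, guaranteeing that the $\beta$-maximizer is still unique. Securing this strict convexity (equivalently, uniqueness of the maximizer) is the crux; once it is in place, the propagation of $\Vec P_i=\Vec P$ through the Jordan blocks, the identification $\theta_i=\pi/4$, and the angle bookkeeping that certifies the alternating configuration are comparatively routine.
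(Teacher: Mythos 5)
You are attempting to prove a statement that the paper itself does not prove: \cref{lemma:asinforphi+} is imported verbatim as a citation to~\cite[Theorem 1]{Wang16}, so there is no internal proof to compare against, and any self-contained argument would have to reproduce the actual content of that theorem. Your skeleton (supporting functional $+$ Jordan decomposition $+$ \cref{lemma:extremeselftest}) is a reasonable outline, but it has genuine gaps precisely where the hard content lies. First, you openly concede that the ``crux'' --- strict convexity of the face $g=\pi$, i.e.\ uniqueness of the maximizer of $\beta$ --- is not established; that uniqueness \emph{is} the theorem's substance, and everything downstream (forcing $\Vec P_i=\Vec P$ on each Jordan block, forcing $\theta_i=\pi/4$) depends on it. Second, even granting uniqueness within the four-dimensional zero-marginal body, your functional $\beta$ has no marginal coefficients, so it cannot distinguish $\Vec P$ from a hypothetical quantum point with the same correlators but nonzero marginals. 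Unique maximization over all of $\mathcal{Q}$ (needed both for exposedness and for condition 1 of \cref{lemma:extremeselftest}) therefore requires showing that the correlators alone force vanishing marginals --- which is essentially the uniqueness-of-realization statement you are trying to prove; as written, the argument is circular at this point. Third, the step ``being a full-correlation maximizer, each block must be maximally entangled'' invokes a false general principle: full-correlation functionals can be maximized by product states (e.g.\ $\beta=\cor{A_0B_0}$). For this particular $\beta$ one can rescue the claim (an interior critical point in $\theta$ would make the block's value attainable locally, contradicting nonlocality of the unique maximizer), but that rescue again presupposes the unproven uniqueness, and you do not give it.

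More seriously, the exposedness strategy cannot work at all in a case your hypotheses explicitly allow. The lemma permits exactly one correlator $\cor{A_xB_y}=\pm1$ (e.g.\ $a_0=b_0=0$, $a_1=\varphi$, $b_1=2\varphi$ on $\ket{\phi^+}$, giving $\cor{A_0B_0}=1$ and the other correlators strictly inside $(-1,1)$, which satisfies the asin equality). By the paper's own \cref{thm:non-exp} --- and its closing remark that extremal points with a zero probability verify $[\Tilde A_x^{\a}B_y]=\pm1$ and ``are thus non-exposed'' --- such points are \emph{non-exposed} in $\mathcal{Q}$: no Bell functional on $\mathbb{R}^8$ is uniquely maximized there. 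So no limiting or regularized version of your supporting functional can single them out; their extremality must be obtained by a different mechanism (this is why the paper's own self-testing argument for the general case, \cref{thm:self-test}, proceeds via the steering map and coplanarity geometry rather than via an exposing functional). By contrast, the parts of your proposal you call routine --- the Jordan/purification reduction via \cref{cor:JordanBell}, and the angle bookkeeping showing that a zero-marginal qubit realization of $\Vec P$ must be $\ket{\phi^+}$ with alternating planar measurements, unique up to a common rotation --- are indeed sound and establish condition 2 of \cref{lemma:extremeselftest}. What is missing is condition 1, and that gap is not technical but structural.
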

\medskip

\begin{Remark}
If one obtains correlations verifying the conditions of the lemma with a realization of local dimension~2, self-testing ensures the existence of unitaries $U_A, U_B \in L(\mathbb{C}^2)$ s.t: 
\begin{equation}
\begin{split}
    (U_A \otimes U_B) \ket{\psi} & = \ket{\phi^+}, \\
    U_A A_x U_A^\dagger & = \cos(a_x) \sigma_z + \sin(a_x) \sigma_x, \\
    U_B B_1 U_B^\dagger & = \cos(b)\sigma_z+ \sin(b) \sigma_x, \\
    \quad U_B B_0 U_B^\dagger &= \sigma_z.
\end{split} 
\end{equation}
where $0\leq a_0 \leq b \leq a_1 \leq \pi$ are fully determined by $\cor{A_x B_y}$. 
\end{Remark}

\begin{Lemma} \label{lemma:non-local}
    Consider a nonlocal correlation of the set $\mathcal{Q}$, then:
    \begin{enumerate}
        \item $\cor{A_x}\neq \pm 1$ and as such the correlators $[\Tilde{A}^{\a}_x B_y] = \frac{\cor{A_xB_y}+\a\cor{B_y}}{1+\a\cor{A_x}}$ are well-defined
        \item There is always at least two equations in the four equations of \cref{thm:self-test} in which the number of correlators $[\Tilde{A}^{\a}_x B_y]$ equal to $\pm 1$ is at most 1. Moreover, the number of correlators equal to  $\pm 1$ in the other two equations is at most 2.
    \end{enumerate}
\end{Lemma}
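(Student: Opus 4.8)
The plan is to recast both statements in terms of vanishing probabilities and to invoke the zero-pattern result of \cite{Chen23}, namely that a nonlocal quantum point in the CHSH scenario cannot have more than two zero probabilities on any single line or column of its table $p(\a\b|xy)$. For the first point, suppose $\cor{A_x}=\pm1$ for some $x$. Then the marginal $p(\a'|x)=\frac{1+\a'\cor{A_x}}{2}$ vanishes for $\a'=-\operatorname{sgn}\cor{A_x}$, so the four probabilities $p(\a',\b|x,y)$ with $\b\in\{\pm1\}$, $y\in\{0,1\}$ all vanish. These lie on a single line of the table, so having four zeros there forces the point to be local by \cite{Chen23}, contradicting nonlocality. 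Hence $\cor{A_x}\neq\pm1$, the denominators $1+\a\cor{A_x}$ are nonzero, and the correlators $[\Tilde A_x^\a B_y]$ are well-defined.

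For the second point, I would first record the exact meaning of a correlator reaching $\pm1$. Writing $r_{\a\b}=\bra{\phi^+}(\ketbra{v_{\a,x}}\otimes\ketbra{w_{\b,y}})\ket{\phi^+}$ for the steered projectors of \cref{sec:realizationsteering}, one has $r_{\a,+1}+r_{\a,-1}=\tfrac12$ and $[\Tilde A_x^\a B_y]=2\a(r_{\a,+1}-r_{\a,-1})$, using $\bra{\phi^+}\id\otimes B_y\ket{\phi^+}=0$. Thus $[\Tilde A_x^\a B_y]=\pm1$ holds iff exactly one of $r_{\a,+1},r_{\a,-1}$ vanishes. Since $p(\a\b|xy)=2p(\a|x)\,r_{\a\b}$ with $p(\a|x)\neq0$ by the first point, this is equivalent to the pair $\{p(\a,+1|xy),p(\a,-1|xy)\}$ containing a zero. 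Letting $n(\a,x)$ be the number of $y$ for which $[\Tilde A_x^\a B_y]=\pm1$, the number of $\pm1$ correlators in the equation labelled $(\s,\t)$ in \cref{eq:conditions} is therefore exactly $n(\s,0)+n(\t,1)$.

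The two claims then reduce to bounding $n(\a,x)\in\{0,1,2\}$. If $n(\a,x)=2$, then $\Tilde A_x^\a$ is perfectly correlated on $\ket{\phi^+}$ with both $B_0$ and $B_1$, which pins each $B_y$ to $\Tilde A_x^\a$ and forces $B_0=\pm B_1$; Bob then effectively performs a single measurement and the behavior is local, a contradiction. Hence $n(\a,x)\leq1$, which immediately gives $n(\s,0)+n(\t,1)\leq2$ for all $(\s,\t)$, i.e.~every equation contains at most two $\pm1$ correlators. The number of equations containing exactly two is $\big|\{\s:n(\s,0)=1\}\big|\cdot\big|\{\t:n(\t,1)=1\}\big|$, which is at most $2$ unless all four $n(\a,x)$ equal $1$. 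It thus remains to exclude this all-ones configuration, in which each of Alice's steered lines $\Tilde A_x^\a$ is perfectly correlated with exactly one of $B_0,B_1$.

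This exclusion is the crux of the argument and the step I expect to be hardest. Using the reduced real parametrization of \cref{prop:qubitrealization} together with the steered angles of \cref{eq:anglemodif}, perfect correlation of $\Tilde A_x^\a$ with $B_y$ means $\Tilde a_x^\a\equiv b_y\pmod{\pi}$, so the all-ones configuration places the four ordered angles $\Tilde a_0^+\leq\{\Tilde a_0^-,\Tilde a_1^+\}\leq\Tilde a_1^-$ (for $\theta\in(0,\pi/4)$) inside the two-element set $\{b_0,b_1\}$ with $b_0<b_1$. I would argue that the smallest angle must then equal $b_0$ and the largest $b_1$, and that the only nondegenerate placement of the two middle angles forces $\Tilde a_0^+=\Tilde a_1^+$, hence $a_0=a_1$ by strict monotonicity of $a\mapsto\Tilde a^+$, so $A_0=\pm A_1$ and the point is local. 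The degenerate alignments collapse either to $b_0=b_1$ (again local) or to $\theta=\pi/4$; in the latter case $\Tilde a_x^+=\Tilde a_x^-=a_x$, the marginals vanish, and the all-ones configuration makes every CHSH expression attain its local value, so the point is local by Fine's theorem. The range $\theta\in(\pi/4,\pi/2)$ is symmetric under exchanging the outcome labels $\a$. In every case nonlocality is contradicted, so at least one $n(\a,x)=0$, whence at least two equations contain at most one $\pm1$ correlator. The main difficulty lies precisely in carrying out this case analysis of coincidences among the steered angles uniformly, in particular in handling the boundary values $a_x,b_y=0$ and the singlet point $\theta=\pi/4$.
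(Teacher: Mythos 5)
Your treatment of point 1 and your recasting of point 2 are sound and essentially follow the paper: $\cor{A_x}=\pm1$ forces a whole row of zeros, hence locality by \cite{Chen23}; and $[\Tilde A_x^\a B_y]=\pm1$ is equivalent to a zero of probability in the cell $(\a,x;\mp\a,y)$, so that the count of $\pm1$ correlators in the equation labelled $(\s,\t)$ is exactly $n(\s,0)+n(\t,1)$, where $n(\a,x)$ is the number of zeros in row $(\a,x)$. The bound $n(\a,x)\leq 1$ also holds, and in fact you do not need your steering-based argument for it: two zeros in the same row $(\a,x)$ is precisely the situation excluded by the line/column result of \cite{Chen23} that you yourself invoke, and that argument works for every point of $\mathcal{Q}$, not only for those admitting a pure projective qubit realization.

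The genuine gap is in the exclusion of the ``all-ones'' configuration $n(+,0)=n(-,0)=n(+,1)=n(-,1)=1$. This configuration consists of four zeros placed so that every row and every column of the probability table contains exactly one of them; it therefore does \emph{not} violate the line/column condition, so the only tool you cite from \cite{Chen23} cannot rule it out. The paper closes this case with a second, distinct result of \cite{Chen23}: a nonlocal point in the CHSH scenario has at most \emph{three} zeros of probability in total, which immediately gives $\sum_{\a,x}n(\a,x)\leq 3$ and hence at least one row with $n(\a,x)=0$, i.e.~at least two equations with at most one correlator equal to $\pm1$. Your substitute for this step --- the case analysis of coincidences among the steered angles $\Tilde a_x^\a$ and $b_y$ --- has two problems. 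First, it presupposes that the nonlocal correlation admits a pure projective two-qubit realization in the real parametrization of \cref{prop:qubitrealization}; but \cref{lemma:non-local} is stated for an \emph{arbitrary} nonlocal correlation of $\mathcal{Q}$, and is applied in the proof of \cref{thm:self-test} to correlations that are a priori convex mixtures of qubit behaviors, for which no such parametrization by a single $(\theta,a_x,b_y)$ exists. Second, even within the qubit setting the analysis is only sketched (``I would argue that\ldots''), and the degenerate placements you defer are exactly the delicate cases. To repair the proof you should either cite and use the at-most-three-zeros theorem of \cite{Chen23}, or prove that four zeros in a permutation pattern force locality for all quantum points --- a statement that is itself a nontrivial result about $\mathcal{Q}$, not merely about qubit realizations.
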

\begin{proof}
    The proof mainly relies on a result from Ref.~\cite{Chen23} where the authors prove that if a quantum correlation has two zero of probabilities for a fixed input choice $x$ of Alice, then it is local. In terms of correlations, it means that if two equations of the form $1+\a \cor{A_x}+\b \cor{B_y}+\a \b \cor{A_x B_y}=0$ are satisfied for the same $(\a,x)$ or the same $(\b,y)$, then the correlation is local. They also prove that a non-local point can have at most three zeros of positivities. 

    Therefore, one can verify that if $\cor{A_x}=\pm 1$, non-signaling condition imply that $\b(\cor{B_y}\mp\cor{A_x B_y}) \geq 0$ for all $\b$ and thus that $\cor{B_y}\mp\cor{A_x B_y}=0$. Then both equations $1\mp \cor{A_x}+\b \cor{B_y} \mp \b \cor{A_x B_y}=0$ for $\b=\pm1$ are satisfied, and the point is local. Conversely, if the point is non-local, then $\cor{A_x}\neq \pm 1$.

    For the second point, we have that $[\Tilde{A}^{\a}_x B_y]=\pm 1$ iff $1 +\a \cor{A_x} \mp \a \cor{B_y} \mp \cor{A_x B_y}=0$, and as such verifying $[\Tilde{A}^{\a}_x B_y]=\pm 1$ is equivalent to having a zero of probability for tuples $(\a,x)$ and $(\b=\mp \a, y)$. Let's consider a non-local point. It can have at most three zeros of probabilities. We suppose that this is the case (other cases with fewer zeros can be treated similarly). Since a non-local point cannot have two zeros of probability for a fix tuple $(\a,x)$, two zeros occur for a given $x$ and both values of $\a$ and one for the other input value $x'$ and a single output value $\a'$. Without loss of generality, let's consider that two zeros happen for $x=0$ and both $\a \in \{\pm 1\}$ and the last one for $x'=1$ and $\a'=+1$. Now, we obtain that $[\Tilde{A}^{+}_0 B_{y_{+,0}}]$, $[\Tilde{A}^{-}_0 B_{y_{-,0}}]$, $[\Tilde{A}^{+}_1 B_{y_{+,1}}]$ are equal to $\pm 1$ for some given inputs $y_{+,0}$, $y_{-,0}$, $y_{+,1}$. Considering the four equations of \cref{thm:self-test}, both equations for tuple $(\s,+)$ have two terms s.t. $[\Tilde{A}^\a_x B_y]=\pm 1$ ($[\Tilde{A}^{\s}_0 B_{y_{\s,0}}]$ and $[\Tilde{A}^{+}_1 B_{y_{+,1}}]$), and both equations for tuple $(\s,-)$ only have one ($[\Tilde{A}^{\s}_0 B_{y_{\s,0}}]$). Since no other zeros of probabilities can happen, no other terms $[\Tilde{A}^\a_x B_y]$ can be equal to $\pm1$. 
\end{proof}
\begin{Lemma} \label{lemma:uniqueness}
    For any pure, projective and non-degenerate realization of local dimension two $(\ket{\psi}, A_x, B_y)$  verifying the assumptions of \cref{thm:self-test}, there exists local unitaries $U_A, U_B \in L(\mathbb{C}^2) $ such that:
    \begin{equation}
        \begin{split}
            (U_A \otimes U_B) \ket{\psi} & = \ket{\phi_\theta}, \\
            U_A A_x U_A^\dagger & = \cos(a_x) \sigma_z + \sin(a_x) \sigma_x, \\
            U_B B_y U_B^\dagger & = \cos(b_y) \sigma_z + \sin(b_y) \sigma_x,
        \end{split}
    \end{equation}
    and the parameters $\theta, a_x, b_y$ are fully determined by the correlators $\cor{A_x}$, $\cor{B_y}$ and $\cor{A_x B_y}$. 
\end{Lemma}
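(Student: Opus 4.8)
The plan is to first reduce the fixed realization, up to local unitaries, to the planar form on $\ket{\phi_\theta}$, and then to invert the resulting correlator formulas to pin down the parameters. I would begin by using the Schmidt decomposition to fix the state to $\ket{\phi_\theta}$ and by writing the (a priori complex) projective qubit measurements as $A_x = \cos a_x\,\sigma_z + \sin a_x(\cos\alpha_x\,\sigma_x + \sin\alpha_x\,\sigma_y)$ and likewise for $B_y$, as in \cref{sec:realizationsteering}. Since the assumptions of \cref{thm:self-test} include nonlocality, \cref{lemma:non-local} guarantees $\cor{A_x}\neq\pm1$, so that the steered correlators $[\Tilde A^\a_x B_y]$ are well-defined; it furthermore controls how many of them can equal $\pm1$, ensuring that at least two of the four pairs $(\s,\t)$ index a sub-behavior with at most one unit correlator.

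Next I would apply the steering map $T_\theta$ to land on the realization \cref{eq:realizationQ} on $\ket{\phi^+}$, carrying Alice's four measurements $\Tilde A^\a_x$, Bob's $B_0,B_1$, and vanishing marginals. Associating to each steered measurement its real direction vector $\vec m_{\a,x}$ (for $\Tilde A^\a_x$) and $\vec n^\a_{x,y}$ (for $B_y$ seen through the $(\a,x)$ steering), the correlators become inner products and, using $\operatorname{asin}=\pi/2-\operatorname{acos}$, the conditions \cref{eq:conditions} turn into the angle-sum identity that is hypothesis~1 of \cref{lemma:coplanar}. Hypothesis~2 holds because the relative orientation of $B_0$ and $B_1$ is a fixed feature of the realization, whose steered images yield a convex combination of $\cos\theta_{\a,x}$ independent of $x$. \cref{lemma:coplanar} then forces every triple $(\vec m_{\a,x},\vec n^\a_{x,0},\vec n^\a_{x,1})$ to be coplanar and $\cor{\vec n^\a_{x,0}|\vec n^\a_{x,1}}$ to be independent of $\a,x$.

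I would then upgrade this to global coplanarity: for the two pairs $(\s,\t)$ singled out above, \cref{lemma:asinforphi+} self-tests $\ket{\phi^+}$ with planar, alternating measurements, pinning $B_0,B_1$ and the associated steered directions $\Tilde A^\a_x$ into one common plane, while \cref{lemma:coplanar} ties the remaining steered directions into the same plane. Since $T_\theta$ acts only on the polar angle and leaves the azimuthal angle $\alpha_x$ of Alice's directions unchanged (see \cref{eq:realizationQ,eq:explicitsteering}), the original $A_x$ share the azimuth of their steered images, so all of $A_x,B_y$ lie in a single azimuthal plane; a rotation about the Schmidt $z$-axis, which preserves the form $\ket{\phi_\theta}$, then sends this plane onto $(\sigma_z,\sigma_x)$, yielding $\alpha_x=\beta_y=0$ and the claimed real planar form.

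Finally, in planar form the eight correlators read $\cor{A_x}=\cos2\theta\cos a_x$, $\cor{B_y}=\cos2\theta\cos b_y$ and $\cor{A_x B_y}=\cos a_x\cos b_y+\sin2\theta\sin a_x\sin b_y$. For $\theta\neq\pi/4$ one has $\cos2\theta\neq0$, so the marginals fix $a_x,b_y$ as functions of $\theta$ and a short computation shows the remaining correlator relations then fix $\theta$ uniquely within the range \cref{eq:range}; the boundary case $\theta=\pi/4$ of vanishing marginals is handled directly by the singlet self-test of \cref{lemma:asinforphi+}. This establishes that $\theta,a_x,b_y$ are fully determined by the correlators. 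I expect the main obstacle to be the careful bookkeeping around the degenerate correlators $[\Tilde A^\a_x B_y]=\pm1$ -- invoking \cref{lemma:non-local} to secure enough non-degenerate pairs -- together with verifying hypothesis~2 of \cref{lemma:coplanar} and checking that the azimuthal angles all coincide, so that a single Schmidt-preserving local unitary suffices.
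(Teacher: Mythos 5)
Your first two stages (reduction to $\ket{\phi_\theta}$ with complex planar measurements, steering to $\ket{\phi^+}$, then coplanarity via \cref{lemma:coplanar} and \cref{lemma:asinforphi+} plus the azimuth-killing rotation $V_A=\operatorname{diag}(1,e^{-i\alpha_x})$, $V_B=V_A^\dagger$) are sound and essentially reproduce the paper's argument, up to minor bookkeeping (e.g.\ \cref{lemma:coplanar} is stated for real vectors, so one should note that the relevant correlators on $\ket{\phi^+}$ are real and realify). The genuine gap is in your last step, which is the actual crux of the lemma. You claim that once the realization is in real planar form, ``the marginals fix $a_x,b_y$ as functions of $\theta$ and a short computation shows the remaining correlator relations then fix $\theta$ uniquely.'' This is not a short computation, and as stated it is an unproven (and much stronger) claim: eliminating the angles via the marginals turns each correlator relation into
\begin{equation}
\left(\cor{A_xB_y}\,u-\cor{A_x}\cor{B_y}\right)^2=(1-u)\left(u-\cor{A_x}^2\right)\left(u-\cor{B_y}^2\right),\qquad u=\cos^2 2\theta,
\end{equation}
a cubic in $u$ which can have several admissible roots (indeed the function $u\mapsto\cor{A_xB_y}(u)$ at fixed marginals is not monotonic in general). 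Uniqueness of a \emph{common} root of the four cubics, together with the attendant sign ambiguities ($\cos a_x$ only determines $a_x$ up to sign before any relabeling, and relabelings change the point), is exactly what needs to be established. Note also that your inversion argument never uses the hypothesis \cref{eq:conditions} at this stage; if it worked it would prove that \emph{every} nonlocal real planar realization is determined by its correlators, a statement the paper does not establish and which its whole mechanism is designed to avoid relying on.

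The paper closes this gap differently: after the planar reduction, it uses \cref{lemma:asinforphi+} to \emph{certify} the steered angles $w_{\a,x}$ and $b$ (explicit functions of the correlators) in some reference frame, leaving exactly two unknowns --- the rotation angle $\gamma$ relating that frame to the Schmidt basis, and $\theta$. It then imposes that the inverse steering map $T_{\pi/2-\theta}$ send the certified eigenvectors back to \emph{orthogonal} pairs (since the original $A_x$ are projective), which yields the two equations $\cos(2\theta)\cos(2\gamma_x)=\sin\!\big(\tfrac{w_{+,x}+w_{-,x}}{2}\big)$, $x=0,1$; these determine $\tan\gamma$ explicitly and then fix $\theta$ uniquely in $(0,\pi/4)$. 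This is where the hypothesis \cref{eq:conditions} does its work, and it is the step your proposal is missing. To repair your proof you would either have to supply a genuine uniqueness argument for the common root of the four cubics above (using the full-alternation structure implied by \cref{eq:conditions}), or follow the paper's route through the certified frame and the orthogonality constraints.
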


\begin{proof}
    Up to local unitaries $U_A, U_B \in L(\mathbb{C}^2) $, one can map any realization $(\ket{\psi}, A_x, B_y)$ to:
    \begin{equation}
        \begin{split}
            \text{state: } & \ket{\phi_\theta}, \quad \theta \in [0,\pi/4], \\
            \text{measurements: } & A_x = \cos(a_x) \sigma_z + \sin(a_x) (\cos(\alpha_x) \sigma_x + \sin(\alpha_x) \sigma_y), \\
            & B_0 = \cos(b_0) \sigma_z + \sin(b_0) (\cos(\beta) \sigma_x + \sin(\beta) \sigma_y), \\
            & B_1 = \cos(b_1) \sigma_z + \sin(b_1) \sigma_x.
        \end{split}
    \end{equation}

The case where the marginals $\cor{A_x} = \cor{B_y} = 0$ vanish is due to the self-testing result \cite{Wang16} (see \cref{lemma:asinforphi+}) so we can exclude the zero marginals case. Since the correlations are supposed non-local, we have $\theta \neq 0,\pi/4$. One can apply the steering transformation introduced in~\cref{sec:steering} to ensure that all $[\Tilde{A}^{\a}_x B_y]$ are correlators measured on a singlet state $\ket{\phi^+}$, where:
\begin{equation}
        \begin{split}
            & \Tilde{A}^{\a}_x = \cos(\Tilde{a}^{\a}_x) \sigma_z + \sin(\Tilde{a}^{\a}_x) (\cos(\alpha_x) \sigma_x + \sin(\alpha_x) \sigma_y), \\
            & B_0 = \cos(b_0) \sigma_z + \sin(b_0) (\cos(\beta) \sigma_x + \sin(\beta) \sigma_y), \\
            & B_1 = \cos(b_1) \sigma_z + \sin(b_1) \sigma_x.
        \end{split}
\end{equation}
Note that since the correlations are non-local, we have $a_0 \ncong a_1 [\pi]$ and the fact that the state is not maximally entangled ensures that either $\Tilde{a}^{+}_{0}\neq\Tilde{a}^{-}_{0}$ or $\Tilde{a}^{+}_{1}\neq\Tilde{a}^{-}_{1}$. 

The fact that correlations are taken non-local ensures by \cref{lemma:non-local} that one can apply the self-testing result \cref{lemma:asinforphi+} for at least two pairs of $(\s,\t)$, certifying the angle of measurements $B_0, B_1$ with respect to $(\Tilde{A}_{\s,0},\Tilde{A}_{\s,0})$ in those cases. Note that if \cref{lemma:asinforphi+} doesn't apply for the other pairs, it means that some correlators are equal to $\pm 1$, i.e.~$[\Tilde{A}^{\a}_xB_y] = \pm 1$. Since the vector $\Tilde{A}^{\a}_x\ket{\phi^+}$ and $B_y \ket{\phi^+}$ are unit and $\ket{\phi^+}$ is full rank, this would mean that $\Tilde{A}^{\a}_x=B_y$, fully identifying the missing measurements. Finally, since the initial state is already $\ket{\phi^+}$, there exists $U\in L(\mathbb{C}^2)$ s.t:
\begin{equation} \label{eq:basismeas}
\begin{split}
    & \Bar{U} \Tilde{A}^{\a}_x \Bar{U}^\dagger = \cos(w_{\a,x}) \sigma_z + \sin(w_{\a,x}) \sigma_x, \\
    & U B_0 U^\dagger = \cos(b) \sigma_z + \sin(b) \sigma_x, \quad U B_1 U^\dagger = \sigma_z 
\end{split}
\end{equation}
where $0\leq w_{\s,0} \leq b \leq w_{\t,1} \leq \pi$ for all $\s, \t$ are determined by the correlators $[\Tilde{A}^{\a}_xB_y]$. 

In particular, we obtain that all vectors $\Tilde{A}^{\a}_x\ket{\phi^+}$, $\Bar{B_y\ket{\phi^+}}$ lie in the same plane. If one consider $x$ such that $\Tilde{A}^{+}_{x}\neq\Tilde{A}^{-}_{x}$, then $\Tilde{A}^{+}_{x}\ket{\phi^+}$ and $\Tilde{A}^{-}_{x}\ket{\phi^+}$ form a plane of angle $\alpha_x$, with respect to $\sigma_x - \sigma_y$. This implies that it is always possible to perform unitaries $V_A = \operatorname{diag}(1, e^{-i\alpha_x})$ and $V_B = V_A^\dagger$ such that measurements $A_x, B_y$ are real and the state is unchanged ($(V_A\otimes V_B) \ket{\phi_\theta}= \ket{\phi_\theta}$).

We are now left with only real matrices and thus we can parametrize $U$ as a real unitary matrix of the form:
\begin{equation}
    U = \begin{pmatrix}
        \cos(\gamma) & - \epsilon_u \sin(\gamma)\\
        \sin(\gamma) & \epsilon_u \cos(\gamma) \\
    \end{pmatrix}
\end{equation}
where $\epsilon_u \in \{-1,1\}$ and $\gamma \in [0,2\pi)$. Since a global sign flip doesn't change the measurement, one can assume $\gamma \in [-\pi/2,\pi/2]$. Likewise, the sign $\epsilon_u$ can be absorbed by applying a $\sigma_z$ on both parties (which preserves the state $\ket{\phi_\theta}$). Therefore, for the rest of the proof, we assume that $U$ is a rotation of angle $\gamma$.

Let us write the initial measurements $A_x = \ketbra{u_{+,x}}-\ketbra{u_{-,x}}$ and denote $\ket{v_{\a,x}}:= T_\theta(\ket{u_{\a,x}})$. Since the vectors $\ket{v_{\a,x}}$ correspond to the eigenvectors of $\Tilde{A}^{\a}_x$ of eigenvalue $\a$, they are fixed by \cref{eq:basismeas}:
\begin{equation}
    \begin{split}
        & \ket{v_{+,x}} = U^\dagger (\cos(\frac{w_{+,x}}{2}) \ket{0} + \sin(\frac{w_{+,x}}{2}) \ket{1}):=U^\dagger \ket{W_{+,x}}, \\
        & \ket{v_{-,x}} = U^\dagger (\sin(\frac{w_{-,x}}{2}) \ket{0} - \cos(\frac{w_{-,x}}{2}) \ket{1}):= U^\dagger \ket{W_{-,x}}.
    \end{split}
\end{equation}
Note that in~\cref{sec:steering}, we showed that $\ket{u_{\a,x}}= T_{\pi/2-\theta}(\ket{v_{\a,x}})$. As such the measurements $A_x$ are fully determined by the knowledge of $\theta$ and $\gamma$. More precisely:
\begin{equation}
    \ket{u_{\a,x}} \propto (\sin(\theta)\cor{0|U^\dagger|W_{\a,x}}\ket{0} + \cos(\theta)\cor{1|U^\dagger|W_{\a,x}} \ket{1}).
\end{equation}
where the normalization constants are given by $\braket{u_{\a,x}}=1$. Since the original measurements $A_x$ are unitary, we have $\cor{u_{+,x}|u_{-,x}}=0$ which implies that for all $x$:
\begin{equation}
    \sin^2(\theta)\cor{0|U^\dagger|W_{+,x}}\cor{0|U^\dagger|W_{-,x}} + \cos^2(\theta)\cor{1|U^\dagger|W_{+,x}}\cor{1|U^\dagger|W_{-,x}} = 0.
\end{equation}
The lhs can be simplified by introducing $\gamma_x = \gamma - (w_{+,x}+w_{-,x})/4$, leading to
\begin{equation} \label{eq:thetagamma}
    \frac{1}{2}\left[\cos(2\theta)\cos(2\gamma_x)-\sin(\frac{w_{+,x}+w_{-,x}}{2})\right]=0.
\end{equation}
By linear combination of those equations, one obtains
\begin{equation}
    \tan(\gamma) = \frac{\sin(\frac{w_{+,0}-w_{-,0}}{2})\cos(\frac{w_{+,1}+w_{-,1}}{2})-\sin(\frac{w_{+,1}-w_{-,1}}{2})\cos(\frac{w_{+,0}+w_{-,0}}{2})}{\sin(\frac{w_{+,0}-w_{-,0}}{2})\sin(\frac{w_{+,1}+w_{-,1}}{2})-\sin(\frac{w_{+,1}-w_{-,1}}{2})\sin(\frac{w_{+,0}+w_{-,0}}{2})},
\end{equation}
where we allow both sides to be infinite. This equation fixes a unique $\gamma\in[-\pi/2,\pi/2]$ and \cref{eq:thetagamma} fixes $\theta$ to be either $\theta^\star$ or $\pi-\theta^\star$, and only one of those two solutions can belong to $(0,\pi/4)$.

\end{proof}

\subsection{\label{sec:level1} Proof of \cref{thm:self-test}}

Let us consider non-local correlations in the CHSH scenario satisfying \cref{eq:conditions}. \cref{prop:realqubitreduction} reduces the search for extremal points to the ones admitting a pure real realization of local dimension 2 and since every correlation is a convex combination of extremal correlations, one can write:
\begin{equation}
    \forall x,y \in \{-1,0,1\}, \ \cor{A_x B_y} = \sum_{i=1}^N p_i \cor{A_x^{(i)} B_y^{(i)}}
\end{equation}
where $p_i \geq 0$, $\sum_i p_i =1$, $A_{-1}=B_{-1}=\id$ and on each block $i$ the correlation vector $\{\cor{A_x^{(i)} B_y^{(i)}}\}_{x,y}$ is in $\mathcal{Q}_2$. To prove that the initial correlations are extremal in $\mathcal{Q}$, we need to prove that the sub-correlations $\{\cor{A_x^{(i)} B_y^{(i)}}\}_{x,y}$ don't depend on the register $i$.

Such a decomposition allows considering an overall underlying realization with local Hilbert spaces encoding a qubit space and a classical register: $\mathcal{H}_{A(B)}= \mathcal{R}^2 \otimes \mathbf{N}$ (where $\mathbf{N}$ is the set of natural numbers from $1$ to $N$). The overall state and measurements are:
\begin{equation}
    \begin{split}
        \text{state: }& \ket{\psi} = \sum_{i=1}^N p_i \ket{\psi^i} \otimes \ket{ii},\\
        \text{measurements: }& A_x = \sum_{i=1}^N A_x^{(i)} \otimes \ketbra{i}, \quad B_y = \sum_{i=1}^N B_y^{(i)} \otimes \ketbra{i}.
    \end{split}
\end{equation}
For each $i$ one can find unitaries $U_A^i$, $U_B^i$ such that the underlying state is $\ket{\phi_{\theta^i}}$, with $\theta^i \in [0,\pi)$. Considering the local unitaries $U_{A(B)}=\sum_i U_{A(B)}^i \otimes \ketbra{i}$, the overall state can then be written
\begin{equation} 
    (U_A\otimes U_B) \ket{\psi} = \sum_i p_i \ket{\phi_{\theta^i}} \otimes \ket{ii}.
\end{equation}

Let us now consider the steering transformation $T_{\theta^i}$ introduced in~\cref{sec:steering} on each block. We know that it allows to consider new operators $\Tilde{A}^{\a\, (i)}_x \in L(\mathbb{R}^2)$ (see \cref{eq:newoperatorsdef}) such that
\begin{equation} \label{eq:blockrelation}
    \cor{ \phi_{\theta^i} | A_x^{(i)} B_y^{(i)} | \phi_{\theta^i}} + \a \cor{ \phi_{\theta^i} | B_y^{(i)} | \phi_{\theta^i}} = ({ 1+\a \cor{ \phi_{\theta^i} | A_x^{(i)} | \phi_{\theta^i}}})\bra{\phi^+} \Tilde{A}^{\a\, (i)}_x B_y^{(i)} \ket{\phi^+}, 
\end{equation}
Let us now consider that
\begin{equation}
    1+\a \cor{A_x} = 1 + \a \sum_{i=1}^N p_i  \cor{ \phi_{\theta^i} | A_x^{(i)} | \phi_{\theta^i}} =  \sum_{i=1}^N p_i ({ 1+\a \cor{ \phi_{\theta^i} | A_x^{(i)} | \phi_{\theta^i}}}) .
\end{equation}
Since the overall correlation is non-local, \cref{lemma:non-local} ensures that $1+\a \cor{A_x} \neq 0$. One can therefore define
\begin{equation}
    \alpha_{\a,x}^i = \frac{p_i ({ 1+\a \cor{ \phi_{\theta^i} | A_x^{(i)} | \phi_{\theta^i}}}) }{1+\a \cor{A_x}}
\end{equation}
which verifies $\alpha_{\a,x}^i \geq 0$ and $\sum_i \alpha_{\a,x}^i = 1$. Now using \cref{eq:blockrelation} we have:
\begin{equation}
\begin{split}
    \frac{\cor{A_x B_y} + \a \cor{B_y}}{1+\a \cor{A_x}} & = \sum_{i=1}^N \alpha_{\a,x}^i \bra{\phi^+} \Tilde{A}^{\a\, (i)}_x B_y^{(i)} \ket{\phi^+}. 
\end{split}
\end{equation}

We now define a four new overall states and measurements operators as
\begin{equation}\label{eq:newmeasurements}
\begin{split}
\ket{\Phi^+_{\a,x}}&=\sum_{i}\sqrt{\alpha_{\a,x}^i} \ket{\phi^+} \otimes \ket{ii}, \quad  \Tilde{A}^{\a}_x = \sum_{i} \Tilde{A}^{\a}_{x^{i}} \otimes \ketbra{i}, \quad B_y = \sum_i B_y^{(i)} \otimes \ketbra{i}. \\
\end{split}
\end{equation}
Denoting expectations values on $\ket{\Phi^+_{\a,x}}$ as $\langle\cdot\rangle_{\a,x}$, we obtain:
\begin{equation}
\begin{split}
    & \langle \Tilde{A}^{\a}_x \rangle_{\a,x} = 0, \langle B_y \rangle_{\a,x} = 0, \\
    & \frac{\cor{A_x B_y} + \a \cor{B_y}}{1+\a \cor{A_x}}  = \cor{\Tilde{A}^{\a}_x B_y}_{\a,x}. 
\end{split}
\end{equation}
Note that the marginals are always $0$ because for every classical index $i$ the measured state $\ket{\phi^+}$ is maximally entangled.

Since the measurement operators in \cref{eq:newmeasurements} are real and unitary, and the state $\ket{\Phi_{\a,x}^+}$ is real and normalized, all vectors $\Tilde{A}^{\a}_x \ket{\Phi_{\a,x}^+}$, $B_y \ket{\Phi_{\a,x}^+}$ are real and normalized. Notice that we have:
\begin{subequations}
    \begin{align}
        & \langle B_0 B_1 \rangle_{\a,x}=\sum_{i}\alpha_{\a,x}^i \bra{\phi^+}B_0^i B_1^i\ket{\phi^+}, \\
        & (1+\cor{A_x})\alpha_{+,x}^i +  (1-\cor{A_x})\alpha_{-,x}^i = 2p_i
    \end{align}
\end{subequations}
Therefore, with $\lambda_x = (1+\cor{A_x})/2 \in (0,1)$ and utilizing the assumption of \cref{thm:self-test}, we obtain:\begin{enumerate}
    \item $\forall (\s,\t) \in\{-1,1\}^2, \ \operatorname{asin} \cor{\Tilde A_0^\s B_0}_{\s,0} +\operatorname{asin} \cor{\Tilde A_1^\t B_0}_{\t,1} -\operatorname{asin} \cor{\Tilde A_0^\s B_1}_{\s,0}+\operatorname{asin}\cor{\Tilde A_1^\t B_1}_{\t,1} =\pi$ \\
    \item $\lambda_x \langle B_0B_1 \rangle_{+,x} + (1-\lambda_x)\langle B_0B_1 \rangle_{-,x} = \sum_{i} p_i \bra{\phi^+}B_0^i B_1^i\ket{\phi^+}$ doesn't depend on $x$. 
\end{enumerate}
We can thus use \cref{lemma:coplanar} with $\vec{m}_{\a,x}= \Tilde A_x^\a \ket{\Phi_{\a,x}^+}$ and $\vec{n}^\a_{x,y}= B_y\ket{\Phi_{\a,x}^+}$ to obtain that the following triples of vectors must be coplanar for all $\a,x$:
\begin{align}
B_0\ket{\Phi_{\a,x}^+}, B_1\ket{\Phi_{\a,x}^+}, \Tilde A_x^\a \ket{\Phi_{\a,x}^+}.
\end{align}
and that $\langle B_0B_1 \rangle_{\a,x}:=C$ doesn't depend on $\a,x$. 

This allows us to write for all $\a,x$: 
\begin{equation}
    \Tilde{A}^{\a}_x\ket{\Phi^+_{\a,x}} = r_{\a,x}^0 B_0\ket{\Phi^+_{\a,x}} + r_{\a,x}^1 B_1\ket{\Phi^+_{\a,x}} 
\end{equation}
where $r_{\a,x}^y \in \mathbb{R}$. By projecting on each classical register $\ketbra{i}$, we get
\begin{equation}
    (\Tilde{A}^{\a}_x)^{i}\ket{\phi^+}= r_{\a,x}^0 B_0^{i}\ket{\phi^+} +  r_{\a,x}^1 B_1^{i}\ket{\phi^+} 
\end{equation}
Now we can sum over all classical registers but with weights $\sqrt{\alpha_{\a',x'}^i}$ to obtain
\begin{equation}
    \Tilde{A}^{\a}_x\ket{\Phi^+_{\a',x'}}= r_{\a,x}^0 B_0\ket{\Phi^+_{\a',x'}} + r_{\a,x}^1  B_1\ket{\Phi^+_{\a',x'}} 
\end{equation}
And finally we can compute
\begin{equation}
    \langle \Tilde{A}^{\a}_x B_y \rangle_{\a',x'} = r_{\a,x}^y \langle B_y^2 \rangle_{\a',x'}  + r_{\a,x}^0 r_{\a,x}^1 \langle B_0 B_1\rangle_{\a',x'}  = r_{\a,x}^y + r_{\a,x}^0 r_{\a,x}^1 C
\end{equation}
Since the right-hand term doesn't depend on $\a',x'$ neither can the left-hand one. Thus, we have proven that $\langle A_x ^\a B_y \rangle_{\a',x'}$ doesn't depend on the indexes $\a',x'$. Notably, we can rewrite the assumption of \cref{thm:self-test} for a single arbitrary state $\ket{\Phi^+}:= \ket{\Phi^+_{+,0}}$:
\begin{equation} \label{eq:samephi+}
    \forall (\s,\t)\in\{-1,1\}^2, \ \operatorname{asin} \cor{\Tilde A_0^\s B_0}_{\Phi^+} +\operatorname{asin} \cor{\Tilde A_1^\t B_0}_{\Phi^+} -\operatorname{asin} \cor{\Tilde A_0^\s B_1}_{\Phi^+}+\operatorname{asin}\cor{\Tilde A_1^\t B_1}_{\Phi^+} =\pi
\end{equation}

For all $(\s,\t)$, we are now dealing with a realization on the state $\ket{\Phi^+}$, with two measurements $\Tilde A_0^\s$, $\Tilde A_1^\t$ and $B_0$, $B_1$ for each party. The corresponding quantum correlations can be decomposed as a convex mixture of $N$ sub-correlations due to the classical register encoded in $\ket{\Phi^+}$. Since it has zero marginals and verifies \cref{eq:samephi+}, the work of \cite{Masanes03} ensures that the overall correlation is extremal for all $\s,\t$ and thus all sub-correlations are equals. Therefore:
\begin{equation}
    \{\bra{\phi^+} \Tilde{A}^{\a\, (i)}_x B_y^{(i)} \ket{\phi^+}\}_{\a,x,y} \ \text{doesn't depend on the register $i$.}
\end{equation}
Moreover, it means that every initial sub-correlations, obtained by measuring state $\ket{\phi_{\theta^i}}$ with real measurements $(A_x^{(i)}, B_y^{(i)})$, satisfy the condition of \cref{thm:self-test}. Since \cref{lemma:uniqueness} ensures that such a realization is determined by the values of $\bra{\phi^+} \Tilde{A}^{\a\, (i)}_x B_y^{(i)} \ket{\phi^+}$, we obtain that $\theta^i$ and $(A_x^{(i)}, B_y^{(i)})$ don't depend on $i$. This implies that the sub-correlations $\{\cor{A_x^{(i)} B_y^{(i)}}\}_{x,y}$ don't depend on $i$ and thus the extremality of the initial correlations. The self-testing part is then obtained combining extremality and \cref{lemma:uniqueness} together with \cref{lemma:extremeselftest}.

\section{Proof of \cref{thm:non-exp}} \label{sec:proofLemma2}

The correlation distribution corresponding to a realization \cref{eq:qubitrealization} in $\mathcal{Q}_2$ is:
\begin{equation} \label{eq:point}
    \vec P_{\theta,a_x,b_y} = \begin{array}{c|c|c}
          & \cos({2\theta})\cos({b_0}) & \cos({2\theta})\cos({b_1}) \\
         \hline
         \cos({2\theta})\cos({a_0}) & \multicolumn{2}{c}{\multirow{2}{*}{$\cos({a_x})\cos({b_y})+\sin({2\theta})\sin({a_x})\sin({b_y})$}} \\
         \cmidrule{1-1}
         \cos({2\theta})\cos({a_1}) & \multicolumn{2}{c}{}
    \end{array}, \quad \theta,a_x,b_y \in \mathbb{R}
\end{equation}
While \cref{prop:qubitrealization} guarantees that every extremal point in $\mathcal{Q}$ is of this form, it is not granted that every such point is extremal in $\mathcal{Q}$. In the following, we express a condition on the parameters $\theta, a_x, b_y$ for such a point to be non-exposed, i.e.~not to be the unique maximizer of any Bell expression. 

In all generality, a Bell expression in the CHSH scenario can be denoted by a real vector $\Vec{\beta} \in \mathbb{R}^8$. The value of the Bell expression on $\Vec P$ is given by the scalar product $\Vec \beta \cdot \Vec P$ and thus $\Vec P$ is non-exposed in $\mathcal{Q}$ iff
\begin{equation} \label{eq:nonexp}
    \forall \Vec \beta \in \mathbb{R}^8, \ \exists \Vec {P'} \in \mathcal{Q}, \ \text{s.t} \ \Vec \beta \cdot \Vec{P'} \geq \Vec \beta \cdot \Vec{P},\ \Vec{P}'\neq\Vec{P}
\end{equation}
If we denote by $\mathcal{C}_P$ the set of all Bell expressions for which $\Vec P$ reaches their maximal quantum value, i.e.~$\mathcal{C}_P~=~\{\beta\in\mathbb{R}^8 | \Vec \beta \cdot \Vec{P} = \max_{P'\in\mathcal{Q}} \Vec \beta \cdot \Vec{P'}\}$. Then the condition for $\Vec P$ to be non-exposed reduces to
\begin{equation} \label{eq:nonexp2}
    \forall \Vec \beta \in \mathcal{C}_P, \ \exists \Vec {P'} \in \mathcal{Q}, \ \text{s.t} \ \Vec \beta \cdot \Vec{P'} = \Vec \beta \cdot \Vec{P},\ \Vec{P}'\neq\Vec{P}.
\end{equation}

The proof of Theorem 3 is divided in two main parts, which are developped in the sections below. In the first one, we find necessary conditions on Bell expressions to be in the subspace $\mathcal{C}_P$. In the second one, we identify a point verifying \cref{eq:nonexp2} for every Bell expression in this subset. In what follows, we fix a choice of parameters $\theta, a_x, b_y$ in the region \cref{eq:range} and denote by $\Vec P$ the distribution $\Vec P_{\theta, a_x, b_y}$. Furthermore, since all realizations with $\theta=0$ only give local correlations, we assume $\theta >0$.

\subsection{Identification of Bell expressions maximized by $\vec P$}
For a given choice of measurements we introduce the measurement vector
\begin{equation}
    \Vec M = \{ A_0, A_1, B_0, B_1, A_0 B_0, A_1 B_0, A_0 B_1, A_1 B_1\} \in L(\mathcal{H}_A\otimes\mathcal{H}_B)^8.
\end{equation}
For every Bell expression $\beta$ with $\{\beta\}_{-1,-1}=0$, one can construct the Bell operator associated with this measurement choice as $S = \Vec \beta \cdot \Vec M$, which is an hermitian operator. For any state $\ket{\psi}$, the correlation distribution is given by $\vec {P'} = \bra{\psi}\Vec M \ket{\psi}$ and the value of the Bell expression is
\begin{equation}
    \Vec \beta \cdot \vec {P'} = \bra{\psi} S \ket \psi.
\end{equation}
As such, if the point $\vec P$ is to give the maximal quantum value of $\beta$, then the state $\ket{\phi_\theta}$ must be an eigenstate of $S$ (of maximal eigenvalue). This implies that for all vector $\ket{\psi^\bot}$ orthogonal to $\ket{\phi_\theta}$, $\ket{\psi^\bot}$ is also orthogonal to $S \ket{\phi_\theta}$, and as such
\begin{equation} \label{eq:eigcond}
    0 = \bra{\psi^\bot} S \ket{\phi_\theta}  = \Vec \beta \cdot \bra{\psi^\bot}  \Vec M \ket{\phi_\theta} = \Vec \beta \cdot \Vec T_{\ket{\psi^\bot}}
\end{equation}
where we denoted $\Vec T_{\ket{\psi^\bot}} = \bra{\psi^\bot}  \Vec M \ket{\phi_\theta}$. Note that this condition was recently used for solving optimization problems and prove that specific quantum points are non-exposed, see \cite{Goh18,Chen23}. 

Another condition to get a maximal violation is that for any small variation of the parameters $\theta, a_x, b_y$, the value of the Bell expression should be non-increasing. At first order, this gives
\begin{equation} \label{eq:varcond}
\begin{split}
    0 = \vec \beta \cdot \frac{\partial \vec P}{\partial \theta}= \vec \beta \cdot \frac{\partial \vec P}{\partial a_x} = \vec \beta \cdot \frac{\partial \vec P}{\partial b_y}
\end{split}
\end{equation}

We now consider a (possible) subset of the necessary conditions \cref{eq:eigcond} and \cref{eq:varcond} by considering only the three orthogonal states $\ket{\psi_\theta}=\sin(\theta) \ket{00} - \cos(\theta)\ket{11}$,  $\ket{01}$, $\ket{10}$, and variations along measurement angles $a_0$ and $b_0$. This allows us to say that if $\Vec P$ maximizes $\Vec \beta$, then 
\begin{equation}
    \Vec \beta \in V^\bot, \quad \text{where} \ V = \text{Vect}\left\langle \Vec T_{\ket{\psi_\theta}}, \Vec T_{\ket{01}}, \Vec T_{\ket{10}}, \frac{\partial \vec P}{\partial a_0}, \frac{\partial \vec P}{\partial b_0} \right\rangle 
\end{equation}
This allows us to conclude that $\mathcal{C}_P \subset V^\bot$ and to reduce the range of Bell expressions to this linear subspace for the rest of the argument.

\subsection{Non-exposed sufficient condition}
Let's suppose there exists a vector $\vec v \in V$ and a local vector $\vec L \in \mathcal{Q}$ such that $\vec L = \vec P + \vec v$, then 
\begin{equation}
    \forall \beta \in V^\bot, \vec \beta \cdot \vec L = \vec \beta \cdot \vec P
\end{equation}
Therefore if such a decomposition exists, and $\vec L \neq \vec P$, condition \cref{eq:nonexp} is satisfied and the point $\vec P$ is non-exposed in $\mathcal{Q}$. A sufficient condition to our problem is therefore to find a decomposition of the form
\begin{equation}\label{eq:decomp}
    \Vec L = \Vec{P} + \Vec{v}, \ \Vec L \in \mathcal{Q}, \ \Vec v \in V-\{0\}.
\end{equation}

We further restrict the choice of $\Vec L$ to be included in one of the following four subspaces:
\begin{equation}
    \forall \s, \t \in \{-1,+1\}, \ \mathcal{L}_{\s\t} =  \left\{ \begin{array}{c|c|c}
          & \alpha_0 & \alpha_1 \\
         \hline
         \s & \s\, \alpha_0 & \s\, \alpha_1 \\
         \hline 
         \t & \t\, \alpha_0 & \t\, \alpha_1 \\\end{array}, \alpha_0, \alpha_1 \in \mathbb{R} \right\}
\end{equation}
These spaces have nice properties. First, one can verify that the positivity constraints imply that $\alpha_0, \alpha_1 \in [-1,1]$, and that under these conditions, the value of all variants of the CHSH expression is upper bounded by 2. Therefore, the local, quantum and non-signaling sets coincide in these subspaces, and we can express the condition for a point $\Vec L \in \mathcal{L}_{\s\t}$ to be local (or quantum) simply as $\alpha_0, \alpha_1 \in [-1,1]$, or equivalently as $1-\alpha_x^2 \geq 0$ for all $x$. Second, due to the fact that $\theta >0$, we have $|\langle A_x\rangle|<1$ for all $x$ and thus $\Vec P \notin \mathcal{L}_{\s\t}$ for all $\s, \t$. This means that finding a decomposition of the form \cref{eq:decomp} for $\vec L \in \mathcal{L}_{\s\t}$ always ensures that $\vec v \neq 0$.\\

As any point in $V$ can be written as the linear combination 
\begin{equation}
    \Vec v_{x,y,z,a,b} = x \Vec T_{\ket{\psi_\theta}} + y \Vec T_{\ket{01}} + z \Vec T_{\ket{10}} + a \frac{\partial \vec P}{\partial a_0} + b \frac{\partial \vec P}{\partial b_0}
\end{equation}
where $x,y,z,a,b$ are five real parameters, finding a decomposition $\Vec L = \Vec{P} + \vec{v}$ where $\vec L \in \mathcal{L}_{\s\t}$ can be translated to the following linear system 
\begin{equation} \label{eq:system}
    \left\{ \begin{split}
        & \{\Vec P+\Vec v_{x,y,z,a,b}\}_{0,-1} = \s \\
        & \{\Vec P+\Vec v_{x,y,z,a,b}\}_{1,-1} = \t \\
        & \{\Vec P+\Vec v_{x,y,z,a,b}\}_{-1,0} = \s \{\Vec P+\Vec v_{x,y,z,a,b}\}_{0,0} = \t \{\Vec P+\Vec v_{x,y,z,a,b}\}_{1,0}\\
        & \{\Vec P+\Vec v_{x,y,z,a,b}\}_{-1,1} = \s \{\Vec P+\Vec v_{x,y,z,a,b})_{0,1} = \t \{\Vec P+\Vec v_{x,y,z,a,b}\}_{1,1}\\
    \end{split}\right.,
\end{equation}
where we recall that ${\Vec P}_{x,y}=\langle A_x B_y \rangle$ for $x,y\in \{-1,0,1\}$ with the convention that $A_{-1}=\id_{\mathcal{H}_A}$, $B_{-1}=\id_{\mathcal{H}_B}$.

We first focus on the case where $\theta \leq \pi/4$. In this case, the linear system admits a solution for any parameters verifying \cref{eq:range} and $(\s,\t)\neq (-1,1)$, given by:
\begin{subequations}
    \begin{align}
        & x = -\cos\left(\frac{a_0+\frac{1-\s}{2}\pi+(a_1+\frac{1-\t}{2}\pi)}{2}\right) \sin({2\theta}) /D_{\s\t}, \\
        & y = \sin\left(\frac{a_0+\frac{1-\s}{2}\pi}{2}\right) \cos\left(\frac{a_1+\frac{1-\t}{2}\pi}{2}\right) \sin({\theta}) /D_{\s\t}, \\
        & z = \cos\left(\frac{a_0+\frac{1-\s}{2}\pi}{2}\right) \sin\left(\frac{a_1+\frac{1-\t}{2}\pi}{2}\right) \cos({\theta}) /D_{\s\t}, \\
        & a = -\sin\left(\frac{a_0+\frac{1-\s}{2}\pi-(a_1+\frac{1-\t}{2}\pi)}{2}\right)/D_{\s\t}, \\
        & b = 0,
    \end{align}
\end{subequations}
where: 
\begin{equation}
    D_{\s\t} = \cos\left(\frac{a_0+\frac{1-\s}{2}\pi-(a_1+\frac{1-\t}{2}\pi)}{2}\right) + 
\cos\left(\frac{a_0+\frac{1-\s}{2}\pi+(a_1+\frac{1-\t}{2}\pi)}{2}\right) \cos({2\theta}).
\end{equation}
This solution gives a decomposition of the form $\Vec L = \Vec P + \vec v$ where $\Vec{L}\in \mathcal{L}_{\s\t}$ is given by the two parameters:
\begin{equation}
    \begin{split}
        \alpha_y^{\s\t} = \frac{\left(\left(\s \cos(a_0) + \cos(2\theta)\right) \sin(b_y) - 
   \s \cos(b_y) \sin(a_0) \sin(2\theta)\right) \left(\left(\t \cos(a_1) + \cos(2\theta)\right) \sin(b_y) - 
   \t \cos(b_y) \sin(a_1) \sin(2\theta)\right)}{D_{\s\t}}.
    \end{split}
\end{equation}
Note that for the parameters we chose here ($0 < \theta\leq\pi/4$, $0\leq a_0\leq a_1<\pi$) the denominator $D_{\s\t}$  is never $0$. 

To verify \cref{eq:decomp}, we now need to look at when the point $\Vec L$ belongs to the quantum set. As we said, this is equivalent to ask that their exist some $(\s,\t)$ such that $-1\leq \alpha_y^{\s\t} \leq 1$ for $y\in\{0,1\}$, or equivalently that
\begin{equation}
    \forall y \in \{0,1\}, \ 1-(\alpha_y^{\s\t})^2 \geq 0.
\end{equation}
The validity of this inequality is unchanged by multiplication with a positive scalar and as such one can look at the sign of
\begin{equation}
    \Delta_{\s\t} = \frac{D_{\s\t}^2}{(1+\s\langle A_0 \rangle)(1+\t\langle A_1\rangle)}(1-(\alpha_y^{\s\t})^2).
\end{equation}
This quantity can be written simply as
\begin{equation}
\Delta_{\s\t} = \s\, \t \, \sin(\Tilde{a}^{\s}_{0}-b_y)\sin(\Tilde{a}^{\t}_{1}-b_y).
\end{equation}
by considering the following change of variables:
\begin{equation}
\begin{split}
    a_0 \longrightarrow \Tilde{a}^{\s}_{0}, \quad a_1 \longrightarrow \Tilde{a}^{\t}_{1},
\end{split}
\end{equation}
where $\Tilde{a}^{\a}_x$ are defined in \cref{eq:anglemodif}. Thus, the positivity conditions for $\Vec L$ to be local now become equivalent to
\begin{equation} \label{eq:posaltcond}
    \forall y\in\{0,1\},\ \Delta_{\s\t} = \s\, \t \, \sin(\Tilde{a}^{\s}_{0}-b_y)\sin(\Tilde{a}^{\t}_{1}-b_y)\geq 0.
\end{equation}

We now think by contradiction and look for conditions that ensure that there is no choice of $(\s,\t)$ such that the above is verified:
\begin{enumerate}
    \item The choice of parameters that we made allows us to state that $b_y \leq a_0 \leq \Tilde{a}^{+}_{0}$ for all $y$. Therefore, $\sin(\Tilde{a}^{+}_{0} - b_y) \leq 0$ and the above condition for $(\s,\t)=(1,1)$ is verified whenever $\Tilde{a}^{+}_{1} \leq b_0$. We thus impose $\Tilde{a}^{+}_{1} \geq b_0$. \\
    \item Now, the condition for $(\s,\t)=(1,-1)$ is verified whenever $\sin(\Tilde{a}^{-}_{1} - b_y) \geq 0$ for both $y$. As $\theta \leq \pi/4$, $\Tilde{a}^{-}_{1} \geq \Tilde{a}^{+}_{1} \geq 0$ and thus it is always true for $y=0$. Therefore, we need to have $\Tilde{a}^{-}_{1}\leq b_1$.\\
    \item Last, the condition for $(\s,\t)=(-1,-1)$ and $y=1$ is now verified as $\Tilde{a}^{-}_{0} \leq \Tilde{a}^{-}_{1} \leq b_1$. Then the inequality for $y=0$ does not hold only when $\Tilde{a}^{-}_{0} \leq b_0$. \\
\end{enumerate}
Finally, we can conclude that none of the conditions \cref{eq:posaltcond} is verified only when the parameters verify:
\begin{equation} \label{eq:fullalt}
    0 \leq \Tilde{a}^{\s}_{0} \leq b_0 \leq \Tilde{a}^{\t}_{1} \leq b_1 < \pi
\end{equation}
for all $\s,\t \in\{-1,1\}$, i.e.~when then modified angles on Alice's side $\Tilde{a}^{\a}_x$ and the angles on Bob's side $b_y$ alternate for all choices of $(\s,\t)$. Conversely, one can conclude that when this full alternating property is not verified, there exists a solution to the problem \cref{eq:decomp} and as such the point $\Vec P$ is non-exposed in $\mathcal{Q}$. 

The cases $\theta\in [\pi/4,\pi/2)$, $\theta\in (\pi/2,3\pi/4]$ and $\theta\in [3\pi/4,\pi)$ were left aside, but the proof goes exactly as the previous case, but considering the solutions of all three linear systems for $(\s,\t) \neq (1,-1)$, $(\s,\t) \neq (1,-1)$ and $(\s,\t) \neq (-1,1)$ respectively. 

\end{document}